\documentclass[journal,onecolumn]{IEEEtran}

\usepackage{cite}
\usepackage{amsthm,amsmath,amssymb}
\usepackage{mathrsfs}
\usepackage{multirow}
\usepackage{amsthm}
\usepackage{microtype}
\usepackage{hyperref}
\usepackage[capitalize]{cleveref}
\usepackage{verbatim}
\usepackage{xcolor}
\usepackage{enumitem}
\usepackage{arydshln} 

\newtheorem{theorem}{Theorem}[section]
\newtheorem{lemma}[theorem]{Lemma}

\newtheorem{corollary}[theorem]{Corollary}

\newtheorem{definition}[theorem]{Definition}

\newtheorem{remark}[theorem]{Remark}
\newtheorem{claim}[theorem]{Claim}
\newtheorem{example}[theorem]{Example}

\begin{document}

\title{A Disguise-and-Squeeze PIR Scheme for the MDS-TPIR Setting and Beyond}
\author{{\textbf{Rui Sun}, \textbf{Ran Tao}, \textbf{Jingke Xu}, and \textbf{Yiwei Zhang},~\IEEEmembership{Member,~IEEE}}
\thanks{R. Sun, R. Tao, and Y. Zhang are with State Key Laboratory of Cryptography and Digital Economy Security, Key Laboratory of Cryptologic Technology and Information Security of Ministry of Education, School of Cyber Science and Technology, Shandong University, Qingdao, Shandong, 266237, China.  J. Xu is with School of Information Science and Engineering, Shandong Agricultural University, Tai'an, Shangdong, 271018, China.}
\thanks{Research supported in part by National Key Research and Development Program of China under Grant Nos. 2022YFA1004900 and 2021YFA1001000, in part by National Natural Science Foundation of China under Grant Nos. 12231014, 12401437, and 12201362, in part by Shandong Provincial Natural Science Foundation under Grant Nos. ZR2022QA069 and ZR2025MS59, and in part by the Taishan Scholars Program. Part of this paper~\cite{sun2025mds} has been presented at IEEE International Symposium on Information Theory (ISIT), 2025. Corresponding author: Y. Zhang (ywzhang@sdu.edu.cn).}
}

\maketitle

\begin{abstract}
We consider the problem of private information retrieval (PIR) from MDS coded databases with colluding servers, i.e., MDS-TPIR. In the MDS-TPIR setting, $M$ files are stored across $N$ servers, where each file is stored independently using an $(N,K)$-MDS code. A user wants to retrieve one file without disclosing the index of the desired file to any set of up to $T$ colluding servers. The general problem in studying PIR schemes is to maximize the PIR rate, defined as the ratio of the size of the desired file to the size of the total download. Freij-Hollanti et al. proposed a conjecture of the MDS-TPIR capacity (the maximum achievable PIR rate), which was later disproved by Sun and Jafar by a counterexample with $(M,N,T,K)=(2,4,2,2)$. 

In this paper, we propose a new MDS-TPIR scheme based on a disguise-and-squeeze approach. The features of our scheme include the following.
\begin{itemize}
    \item Our scheme generalizes the Sun-Jafar counterexample to $(M,N,T,K)=(2,N,2,K)$ with $N\geq K+2$ for an arbitrary $(N,K)$-MDS coded system, providing more counterexamples to the conjecture by Freij-Hollanti et al.
    \item For $(M,N,T,K)=(2,N,2,K)$ and a GRS (generalized Reed-Solomon codes) coded system, our scheme has rate $\frac{N^2-N}{N^2+KN-2K}$, beating the state-of-the-art results. We further show that this rate achieves the linear MDS-TPIR capacity when $K=2$.
    \item Our scheme features a significantly smaller field size for implementation and the adaptiveness to generalized PIR models such as multi-file MDS-TPIR and MDS-PIR against cyclically adjacent colluding servers.
    \item Lastly, we provide an $\epsilon$-error MDS-TPIR scheme for $T\geq 3$ based on the disguise-and-squeeze framework.
\end{itemize}

\end{abstract}

\section{Introduction}
In the digital age, data retrieval often comes with the risk of privacy breaches, especially when personal preferences are used for targeted services. This is particularly critical when accessing sensitive databases, such as those in healthcare or finance. {\it Private information retrieval} (PIR) has emerged to allow users to query databases while preserving the privacy of their search intentions. The concept of PIR was first introduced by Chor et al.~\cite{492461, chor1998private}, focusing on privately retrieving one out of multiple bits. The primary metric in this case is the communication complexity, defined as the total number of bits exchanged between the user and the servers to retrieve a single bit, including both the upload of the queries by the user and the download of the responses from servers. In recent years, the PIR problem in {\it distributed storage systems}~\cite{dimakis2010network} has become an important area of research. In this context, it is typically assumed that the user wants to retrieve one out of a set of files stored in the system, while the size of each file can be arbitrarily large. For the PIR schemes in distributed storage systems, the download is always proportional to the size of the file whereas the upload cost can be much smaller and thus often neglected~\cite{chan2015private}. Consequently, the optimization focuses on minimizing the download cost only. For a comprehensive overview of PIR and related topics, see the survey articles~\cite{gasarch2004survey,ulukus2022private,vithana2023private,d2024guided} and the references therein. 

In particular, PIR from MDS coded databases with colluding servers, abbreviated as {\it MDS-TPIR}, has gained considerable attention in recent years. In the setting of MDS-TPIR, $M$ files are stored across $N$ servers, where each file is stored independently using the same $(N,K)$-MDS code. A user wants to retrieve a single file without disclosing the index of the desired file to any set of up to $T$ colluding servers. The {\it PIR rate} of an MDS-TPIR scheme is defined as the ratio of the number of bits of the desired file to the total bits downloaded from all servers. The maximum achievable rate is called the {\it MDS-TPIR capacity}. In a series of works~\cite{sun_12017capacity,sun_22017capacity,banawan2018capacity}, the MDS-TPIR capacity has been determined for the degenerated cases with $K=1$ or $T=1$. These schemes rely on three key principles posed in~\cite{sun_12017capacity}: symmetry across files, symmetry across databases, and the use of side information from undesired files through interference alignment. The MDS-TPIR capacity has been formulated as 
$$C=(1+\rho+\dots+\rho^{M-1})^{-1},$$
where $\rho$ equals $\frac{1}{N}$, $\frac{K}{N}$, and $\frac{T}{N}$ for the three cases $K=T=1$ (PIR, for repetition-coded system with no collusion), $T=1$ (MDS-PIR, for $(N,K)$-MDS coded system with no collusion), and $K=1$ (TPIR, for repetition-coded system with $T$-collusion), accordingly. Later, several papers~\cite{tian2019capacity,xu2018sub,zhang2018optimal,xu2018building,xu2019capacity,XWTCOM22,XFTIT25,zhu2019new,zhou2020capacity,zhu2021capacity} further modified these capacity-achieving schemes to minimize the sub-packetization level or the size of the filed.

For the non-degenerated cases $K\geq 2$ and $T\geq 2$, the MDS-TPIR capacity is far from being fully solved. Zhang and Ge~\cite{zhang2019general} proposed a scheme with rate $(1+\rho+\dots+\rho^{M-1})^{-1}$, where $\rho = 1 - {\binom{N-T}{K}}\cdot {\binom{N}{K}}^{-1}$. In particular, this result captures the essence of the previous capacity-achieving degenerated cases in~\cite{sun_12017capacity,sun_22017capacity,banawan2018capacity}. Using the Schur product of generalized Reed-Solomon (GRS) codes, Freij-Hollanti et al.~\cite{freij2017private} proposed a PIR scheme (for GRS-coded databases) with rate $\frac{N - (K + T - 1)}{N}$. Combining their results with the known capacity expression for the degenerated cases, Freij-Hollanti et al.~\cite{freij2017private}  further conjectured that the MDS-TPIR capacity is of the form 
\begin{equation}\label{conj}
C_{\mathrm{MDS-TPIR}}^\mathrm{conj}= \left(1 + \rho +  \cdots + \rho^{M-1}\right)^{-1}, ~  \rho = \frac{K + T - 1}{N}.
\end{equation}

The conjecture, referred to as the {\it FGHK conjecture} in the rest of the paper, is appealing since it also contains all the known degenerated cases as special cases. Towards proving the FGHK conjecture, MDS-TPIR schemes (for GRS-coded databases) with rate exactly the same as Equation (\ref{conj}) have been proposed in~\cite{d2019one} and~\cite{holzbaur2021toward}, through a refinement and lifting approach. These schemes are also the state-of-the-art results for most parameters. Moreover, the conjecture is proven to be correct when the PIR scheme is further required to be linear and full support-rank~\cite{holzbaur2021toward}. However, if there is no such restriction, the FGHK conjecture has been disproved by Sun and Jafar~\cite{sun2017private}, who constructed a PIR scheme for $(M, N, T, K) = (2, 4, 2, 2)$, achieving a PIR rate of $3/5$ exceeding the conjectured value of $4/7$. Furthermore, for the specific case with $K = N - 1$ and $M = 2$, Sun and Jafar~\cite{sun2017private} provided an exact expression for the capacity as $\frac{N^2 - N}{2N^2 - 3N + T}$, and pointed out that the roles of $K$ and $T$ are not necessarily symmetric in the MDS-TPIR capacity expression, unlike what is expected based on previously known results. 

Recently there have been many extensions of PIR, such as: PIR with unresponsive or robust servers~\cite{sun_22017capacity,tajeddine2017robust}; PIR with Byzantine servers~\cite{Byzantinebanawan2018capacity,tajeddine2019private}; symmetric PIR~\cite{symmetricsun2018capacity,wang2017secure,wang2019symmetric,holzbaur2019capacitysymadver}; PIR in non-MDS-coded system~\cite{kumar2019achievingnonmds,lin2018nonmds};  PIR in joint storage coded systems~\cite{sun2019breaking}; PIR with side information or caches~\cite{kadhe2019private,tandon2017capacity,wei2018fundamental,chen2020capacity,wei2019capacity}; and multiround PIR~\cite{sun2018multiround,yao2019capacity}, etc. In particular, in this paper we will come across the following two variant models. The first one is the {\it multi-file PIR}, where a user needs to retrieve $P\geq 2$ files out of $M$ files simultaneously. Banawan and Ulukus \cite{banawan2018multi} first proposed multi-file PIR schemes which are more efficient than repeatedly performing $P$ single-file PIR schemes, and they determined the exact capacity when $K=T=1$ and $P\geq M/2$. Zhang and Ge~\cite{zhang2017private} proposed a multi-file MDS-TPIR scheme for $P\geq M/2$, achieving the capacity when either $K=1$ or $T=1$. The second one is {\it PIR with restricted collusion patterns}. This problem was initially introduced by Tajeddine et al.~\cite{arbitrarytajeddine2017private} where the collusion pattern is characterized by a given family of maximal colluding sets. For the repetition-coded system, the capacity has been determined for disjoint colluding sets~\cite{jia2017capacitydisjoint} and then later for arbitrary collusion patterns by a linear programming method~\cite{arbitraryyao2021capacity}. Note that in their paper~\cite{sun2017private} disproving the FGHK conjecture, Sun and Jafar also provided several values of the PIR capacity under restricted collusion patterns in MDS-coded storage systems. 

In this paper, by deeply analyzing the $(2,4,2,2)$-counterexample of Sun and Jafar, we propose a new MDS-TPIR scheme using a {\it disguise-and-squeeze} approach. The disguising phase is for the privacy, in which we carefully design the queries regarding every file such that a set of $T$ colluding servers cannot tell the difference between the queries for desired and undesired files. In the squeezing phase, we analyze how to make use of the redundancy among the queried undesired symbols, such that each server can apply  a {\it combination strategy} when responding to the queries so as to improve the PIR rate. The better strategy we design the squeezing phase, the larger PIR rate we could derive. 

Our main contributions, as well as the structure of this paper, are summarized as follows:
\begin{itemize}
    \item We introduce the basic notations and terminologies, and briefly recall the Sun-Jafar counterexample in Section~\ref{Sec:pre}. 
    \item Our scheme generalizes the Sun-Jafar counterexample to $(M,N,T,K)=(2,N,2,K)$ with $N\geq K+2$ for an arbitrary $(N,K)$-MDS coded system. The rate of our scheme is $R=\frac{N^2-N}{2N^2-2N+K^2-NK}$ when $N\leq 2K$ and $R=\frac{N^2-N}{N^2-N+2NK-K^2-K}$ when $N>2K$. These results provide more counterexamples to the FGHK conjecture (Section \ref{Sec:main}).
    \item By using a combination strategy different from~\cite{sun2017private}, the size of the finite field to implement the disguise-and-squeeze MDS-TPIR scheme is dramatically reduced. (Subsection \ref{subsec:field})
    \item In particular, when $(M,N,T,K)=(2,N,2,K)$, $N\geq K+2$, and the system is based on a GRS code, the rate of our scheme can be further improved to $R=\frac{N^2-N}{N^2+KN-2K}$, beating the state-of-the-art results in~\cite{d2019one} and~\cite{holzbaur2021toward} (Section \ref{Sec:squeeze}). Moreover, when $K=2$, we prove that our scheme for GRS-coded system achieves the linear MDS-TPIR capacity (Subsection \ref{subsec:con-main}). 
    \item As by-products, our scheme can also be adapted to some extension models of PIR. For the multi-file MDS-TPIR, we beat the results from~\cite{zhang2017private} for non-degenerated cases for a certain range of parameters (Subsection \ref{subsec:multi}). For PIR with restricted collusion patterns, we 
    establish the capacity for the case $(M, N, K) = (2, N, K)$ when any two cyclically adjacent servers can collude (Subsection \ref{subsec:cyc} and Subsection \ref{subsec:con-cyc}).
    \item Finally, we propose a disguise-and-squeeze $\epsilon$-error MDS-TPIR scheme for $T\geq 3$, where we make use of the tool of exterior products to design the queries (Section \ref{Sec:Tgeq3}).
\end{itemize}

\section{Preliminaries}\label{Sec:pre}

For $a, b\in \mathbb{Z}^{+}$, define $[a: b]$ as the set $\{a, a+1,\dots, b\}$, and $[1:n]$ is abbreviated as $[n]$.  For an index set $\mathcal{I} = \{i_1, \dots, i_{k}\}\subseteq[n]$ with $i_1 < \cdots < i_{k}$, and for any vector $\mathbf{x}=(x_1,\dots,x_n)$, let $\mathbf{x}_{\mathcal{I}}$ be the subvector $(x_{i_1}, \dots, x_{i_{k}})$. The notation $X \sim Y$ is used to indicate that two random variables $X$ and $Y$ are identically distributed. 

\subsection{Problem setting of MDS-TPIR}

Consider $M$ independent files $W_1, \dots, W_M \in \mathbb{F}_q^{L \times 1}$. In $q$-ary units,
\begin{align}
    H(W_1) &= \dots = H(W_M) = L,\label{W}  \\
    H(W_1, \dots, W_M) &= H(W_1) + \dots + H(W_M). 
\end{align}

The files are stored in a system consisting of $N$ servers. 
Denote $W_{m,n} \in \mathbb{F}_q^{\frac{L}{K} \times 1}$ as the 
$\frac{L}{K}$ coded symbols of the file $W_m$ stored in the $n^{\mathrm{th}}$ server, for $m \in [M]$ and $n\in[N]$. When there is no ambiguity, $W_{m,n}$ is abbreviated as $W_{mn}$. 
The $n^{\mathrm{th}}$ server stores $(W_{1n}, W_{2n}, \dots, W_{Mn})$. The storage system applies the same $(N,K)$-MDS code independently to each file. Thus the following hold:
\begin{align}
    H(W_{mn} | W_m) &= 0, \quad H(W_{mn}) = L/K,\label{W_{mn}} \\
 [\mbox{MDS}] ~    H(W_m | W_{m\mathcal{K}}) &= 0, \quad \forall \mathcal{K}\subseteq [N], \; |\mathcal{K}| = K. \label{mds}
\end{align}
Here $W_{m\mathcal{K}}$ denotes the set $\{ W_{mn}:n\in\mathcal{K} \}$. Let $\mathcal{F}$ be a random variable generated privately by the user, whose realization is
not available to the servers. The variable $\mathcal{F}$ captures the randomness in the strategies employed by the user to design the queries. Similarly, let $\mathcal{G}$ be a random variable representing the random strategies adopted by the servers to produce the responses, with its realizations assumed to be known by both the servers and the user. The user privately and uniformly generates a random variable $\theta \in [1:M]$ and wishes to retrieve $W_{\theta}$, while keeping $\theta$ secret from any up to $T$ colluding servers. Here $\mathcal{F}$ and $\mathcal{G}$ are generated independently and before the realizations of the files or knowing the desired file index. Hence, the following equality holds:
\begin{align}
H(\theta, \mathcal{F}, \mathcal{G}, W_1, \ldots, W_M) &= H(\theta) + H(\mathcal{F}) + H(\mathcal{G}) + \sum_{i=1}^M H(W_i).\label{F-W}
\end{align}

Suppose the user picks $\theta=m$. In order to retrieve $W_m$ privately, the user privately generates $N$ queries $Q_1^{[m]}, \dots, Q_N^{[m]}$.
\begin{align}
H(Q_1^{[m]},\dots,Q_N^{[m]}|\mathcal{F})=0, \quad \forall m\in [M].\label{F-Q}
\end{align}

The user sends the query $Q_n^{[m]}$ to the $n^{\mathrm{th}}$ server, for $n \in [N]$. Upon receiving $Q_n^{[m]}$, the $n^{\mathrm{th}}$ server responds with $A_n^{[m]}$, which is a deterministic function of $Q_n^{[m]}$, $W_{1n}, W_{2n}, \dots, W_{Mn}$ and $\mathcal{G}$, i.e.,
 \begin{align}\label{Answer}
    H(A_n^{[m]} | Q_n^{[m]}, W_{1n}, \dots, W_{Mn},\mathcal{G}) &= 0. 
\end{align}

Based on the feedback $A_1^{[m]},\dots,A_N^{[m]}$, the user can retrieve the desired file $W_m$,
\begin{align}\label{correctness}
 [\mbox{Correctness}] ~    H(W_m | A_{1:N}^{[m]},Q_{1:N}^{[m]},\mathcal{F},\mathcal{G}) &= 0.
\end{align}

To satisfy the $T$-privacy requirement, it must hold that
\begin{align}\label{privacy}
 [\mbox{Privacy}] ~    I(Q_{\mathcal{T}}^{[m]};m)=0, \quad \forall \mathcal{T}\subseteq [N],\; |\mathcal{T}|=T.
\end{align}

The PIR rate for the scheme is defined as the ratio of the size of the retrieved file to the size of total download, i.e.,
\begin{align*}
 R=\frac{ L}{\sum_{n=1}^{N}H(A_n^{[m]})}.
\end{align*}

The PIR capacity $C$ is the maximum achievable PIR rate among all available schemes. Sometimes, to emphasize the parameters, the rate and the capacity are written as $R(M,N,T,K)$ and $C(M,N,T,K)$, respectively.

\subsection{A brief review of the counterexample by Sun and Jafar}

For $(M,N,T,K)=(2,4,2,2)$, the FGHK-conjectured capacity is $4/7$. Sun and Jafar disproved the conjecture by providing a counterexample. We briefly review the counterexample for self-completeness of the paper, with some terminologies changed according to our general framework.
The readers are also expected to refer to this example when reading our main results later.

\begin{example}\label{emp:SunJafar}
[Sun and Jafar~\cite{sun2017private}] Let $\mathbb{F}_q$ be a sufficiently large finite field. Let the two files be $W_1=(\mathbf{a},\mathbf{b})$ and $W_2=(\mathbf{c},\mathbf{d})$, where $\mathbf{a},\mathbf{b},\mathbf{c},\mathbf{d}$ are vectors in $\mathbb{F}_q^{6\times1}$. 
The storage is specified as $(W_{11},W_{12},W_{13},W_{14})=(\mathbf{a},\mathbf{b},\mathbf{a}+\mathbf{b},\mathbf{a}+2\mathbf{b})$ and $(W_{21},W_{22},W_{23},W_{24})=(\mathbf{c},\mathbf{d},\mathbf{c}+\mathbf{d},\mathbf{c}+2\mathbf{d})$. 

\subsubsection{The disguising phase}
The user picks two matrices $S$ and $S'$ from the set of all full rank $6\times 6$ matrices over $\mathbb{F}_q$, privately and independently. Label the rows of $S$ as $\{V_1,\dots,V_6\}$. Label the rows of $S'$ as $\{Z_1,U_1,\dots,U_5\}$. Set the following query sets:
\begin{align*}
\mathcal{V}_1&=\{V_1,V_2,V_3\}, \quad \mathcal{U}_1=\{Z_1,\widetilde{U}_{1,1},\widetilde{U}_{1,2}\}, \\   
\mathcal{V}_2&=\{V_1,V_4,V_5\}, \quad \mathcal{U}_2=\{Z_1,\widetilde{U}_{2,1},\widetilde{U}_{2,2}\}, \\
\mathcal{V}_3&=\{V_2,V_4,V_6\}, \quad \mathcal{U}_3=\{Z_1,\widetilde{U}_{3,1},\widetilde{U}_{3,2}\}, \\
\mathcal{V}_4&=\{V_3,V_5,V_6\}, \quad \mathcal{U}_4=\{Z_1,\widetilde{U}_{4,1},\widetilde{U}_{4,2}\}, 
\end{align*}
where  
\begin{eqnarray*}
\begin{array}{l}
\left(\begin{array}{cc}\widetilde{U}_{1,1} & \widetilde{U}_{1,2}  \\ 
 
\vdots &\vdots \\
\widetilde{U}_{4, 1} & \widetilde{U}_{4,2} \end{array}\right)
=\left(\begin{array}{cc}
        1 &1  \\
         1&2\\
       1&0\\
       0&1
\end{array}\right)
\left(\begin{array}{ll}U_{1} & U_{3} \\
U_{2} & U_4\end{array}\right).
\end{array} 
\end{eqnarray*} 

Denote the coefficient matrix in the equation above as ${\bf H}$. Note that ${\bf H}$ is row-MDS.
Here $\mathcal{V}_n$ is the query set regarding the desired file $W_{1n}$ and $\mathcal{U}_n$ is the query set regarding the undesired file $W_{2n}$, for the $n^{\mathrm{th}}$ server. The privacy is guaranteed by further applying random and independent permutations $\pi_n\in S_3$ to $\mathcal{V}_n$ and $\pi_n'\in S_3$ to $\mathcal{U}_n$, for $n\in[4]$, such that any two colluding servers can only observe that their query sets regarding each file have exactly one common vector, and thereby altogether 5 linearly independent vectors. After receiving the queries, the $n^{\mathrm{th}}$ server produces the queried desired symbols $\pi_n(\mathcal{V}_n)W_{1n}$ and the queried undesired symbols $\pi'_n(\mathcal{U}_n)W_{2n}$.

\subsubsection{The squeezing phase}
The improvement in the PIR rate comes from the redundancy among the queried undesired symbols. There are two redundant symbols among $\{Z_1W_{2n}:1\leq n\leq 4\}$ since the storage system is $(4,2)$-MDS. By the property of the matrix ${\bf H}$, we have
\begin{align*}
  (U_1+U_2)\mathbf{c}+(U_1+2U_2)\mathbf{d}&=U_1(\mathbf{c}+\mathbf{d})+U_2(\mathbf{c}+2\mathbf{d}), \\
  (U_3+U_4)\mathbf{c}+(U_3+2U_4)\mathbf{d}&=U_3(\mathbf{c}+\mathbf{d})+U_4(\mathbf{c}+2\mathbf{d}).
\end{align*}
Thus, there is one redundant symbol among $\{\widetilde{U}_{1,1}\mathbf{c},\widetilde{U}_{2,1}\mathbf{d},\widetilde{U}_{3,1}(\mathbf{c}+\mathbf{d}),\widetilde{U}_{4,1}(\mathbf{c}+2\mathbf{d})\}$, and one among $\{\widetilde{U}_{1,2}\mathbf{c},\widetilde{U}_{2,2}\mathbf{d}, \widetilde{U}_{3,2}(\mathbf{c}+\mathbf{d}),\widetilde{U}_{4,2}(\mathbf{c}+2\mathbf{d})\}$. Therefore, $\{\mathbb{\pi}_n'(\mathcal{U}_n)W_{2n}: n\in [4]\}$ span a linear space of dimension at most $8$. 

Instead of answering the queries directly, each server applies a ``combination strategy'' to map its 6 queried symbols into 5 downloaded symbols. The five downloaded symbols from each server include 2 downloaded desired symbols, 2 downloaded undesired symbols, and another 1 paired-up summation of symbols from both files. The key of the combination strategy is to guarantee that the 8 downloaded undesired symbols, regardless of how the user permutes the queries in each query set, could always linearly generate all the queried undesired symbols $\{\mathbb{\pi}_n'(\mathcal{U}_n)W_{2n}: n\in [4]\}$. Then the interferences of the undesired file in the 4 paired-up summation symbols can be eliminated. The user gets altogether 12 linearly independent desired symbols and the retrieval is done. It should be emphasized that the combination strategy should be a deterministic mechanism arranged on the servers, without knowing the permutations on the query sets used by the user. Therefore, designing the combination strategy is not a trivial task and may require a large finite field to implement. In the Sun-Jafar example, they managed to find a combination strategy in $\mathbb{F}_{349}$. For more details, please refer to~\cite{sun2017private}.
 
Due to the combination strategy in downloading, the total number of downloaded symbols is $20$ and the rate is $3/5$.    
\end{example}

\section{A new MDS-TPIR scheme: disguise and squeeze}\label{Sec:main}

Our main result in this paper is a generalization of the example given by Sun and Jafar, which leads to a new MDS-TPIR scheme based on a disguise-and-squeeze approach. In this section, we will illustrate our idea for $(M,N,T,K)=(2,N,2,K)$, $N\geq K+2$. In the upcoming sections, we will explain that our scheme also works for arbitrary $M$ and $T$. We start with an example for $(M,N,T,K)=(2,5,2,3)$.

\subsection{An example for $(M,N,T,K)=(2,5,2,3)$}\label{(2,5,2,3)}
\subsubsection{Files and storage code}
Each file is comprised of $L=30$ independent symbols from the finite field $\mathbb{F}_7$:
\begin{align*}
    W_1=({\bf a}_1, {\bf a}_2,{\bf a}_3),\quad W_2=({\bf b}_1, {\bf b}_2,{\bf b}_3),
\end{align*}
where ${\bf a}_1, {\bf a}_2,{\bf a}_3, {\bf b}_1, {\bf b}_2, {\bf b}_3\in \mathbb{F}_7^{10\times1}$. Each file is stored in the system via a $(5,3)$-MDS code. Let 
\begin{align*}
    {\bf G}=\left(\begin{array}{ccccc}
        1 &0&0&1&1  \\
         0&1&0&1&2 \\
         0&0&1&1&3
    \end{array}\right),\; {\bf H}=\left(\begin{array}{cc}
      1   &1  \\
       1 &2\\
       1&3\\
       1&0\\
       0&1
    \end{array}\right).
\end{align*}
Both ${\bf G}$ and ${\bf H}$ are MDS.
The storage is specified as
\begin{equation*}
 (W_{11},\dots, W_{15})=W_1{\bf G}, \quad (W_{21},\dots, W_{25})=W_2{\bf G},   
\end{equation*}
where $W_{mn}\in \mathbb{F}_7^{10\times1}$ for $m\in[2],n\in[5]$.

\subsubsection{The disguising phase} 
Let $S$ and $S'$ be independently and uniformly chosen from the set of all full rank $10\times 10$ matrices over $\mathbb{F}_7$. Label the rows of $S$ as $\{V_i: i\in [10]\}$ and the rows of $S'$ as $\{Z_1,Z_2,Z_3, U_1, \dots, U_7\}$. Define
 \begin{align*}
    \mathcal{V}_1&=\{V_1,V_2,V_3,V_4,V_5,V_6\},\quad ~ \mathcal{U}_1=\{Z_1, Z_2, Z_3, \widetilde{U}_{1,1},\widetilde{U}_{1,2},\widetilde{U}_{1,3}\},
    \\
    \mathcal{V}_2&=\{V_1,V_2,V_3,V_7,V_8,V_9\},
    \quad ~ \mathcal{U}_2=\{Z_1, Z_2, Z_3, \widetilde{U}_{2,1},\widetilde{U}_{2,2},\widetilde{U}_{2,3}\},\\
    \mathcal{V}_3&=\{V_1,V_4,V_5,V_7,V_8,V_{10}\},\quad \mathcal{U}_3=\{Z_1, Z_2, Z_3, \widetilde{U}_{3,1},\widetilde{U}_{3,2},\widetilde{U}_{3,3}\},\\
    \mathcal{V}_4&=\{V_2,V_4,V_6,V_7,V_9,V_{10}\},\quad \mathcal{U}_4=\{Z_1, Z_2, Z_3, \widetilde{U}_{4,1},\widetilde{U}_{4,2},\widetilde{U}_{4,3}\},\\
    \mathcal{V}_5&=\{V_3,V_5,V_6,V_8,V_9,V_{10}\},\quad \mathcal{U}_5=\{Z_1, Z_2, Z_3, \widetilde{U}_{5,1},\widetilde{U}_{5,2},\widetilde{U}_{5,3}\},
\end{align*}
where 
   \begin{eqnarray*}
\begin{array}{l}
\left(\begin{array}{ccc}\widetilde{U}_{1,1} & \widetilde{U}_{1,2} & \widetilde{U}_{1, 3} \\ 
 
\vdots &\vdots & \vdots\\
\widetilde{U}_{5, 1} & \widetilde{U}_{5,2} & \widetilde{U}_{5, 3}\end{array}\right) 
\triangleq{\bf H}\left(\begin{array}{lll}U_{1} & U_2 & U_{3} \\
U_{4} & U_5 & U_{6}\end{array}\right).
\end{array}
\end{eqnarray*} 

Let $W_m$ be the desired file and $W_{m^c}$ be the undesired file.  
The query sent to the $n^{\mathrm{th}}$ server, for $n\in [5]$, contains two parts: $Q_n^{[m]} = (Q_n^{[m]}(W_m), Q_n^{[m]}(W_{m^c}))=(\mathbb{\pi}_n(\mathcal{V}_n),\mathbb{\pi}_n'(\mathcal{U}_n))$,
where $\mathbb{\pi}_n$ and $\mathbb{\pi}_n'$ are random permutations independently and uniformly chosen from the symmetric group $S_6$. Write $\mathbb{\pi}_n(\mathcal{V}_n)$ and $\mathbb{\pi}_n'(\mathcal{U}_n)$ in the form of $6\times 10$ matrices over $\mathbb{F}_7$, respectively. After receiving the queries, the $n^{\mathrm{th}}$ server applies the query vectors to its stored contents $(W_{mn},\, W_{m^c n})$, producing $6$ {\it queried desired symbols} $\mathbb{\pi}_n(\mathcal{V}_n)W_{mn}$ and $6$ {\it queried undesired symbols} $\mathbb{\pi}_n'(\mathcal{U}_n)W_{m^cn}$.

Now we briefly analyze how the queries keep the privacy by disguising. 
Here $\{\mathcal{V}_n:n\in[5]\}$ are the query sets for the desired file, where every $K=3$ sets share exactly one common vector. Consequently, every two colluding sets in $\{\mathcal{V}_n:n\in[5]\}$ share exactly $3$ common vectors, and this key property is imitated in the query sets $\{\mathcal{U}_n:n\in[5]\}$ for the undesired file, by the common vectors $Z_1,Z_2,Z_3$ appearing in all the sets in $\{\mathcal{U}_n:n\in[5]\}$. This property, together with the random permutations applied on each query set and the row-MDS property of $\bf H$, guarantees that any $T=2$ colluding servers can only observe that their query sets regarding each file have three common vectors and altogether $9$ linearly independent vectors in $\mathbb{F}_7^{10}$. Thus the roles of the desired and undesired files are disguised.

\subsubsection{The squeezing phase}
For $i\in [3]$, there are 2 redundant symbols among $\{Z_iW_{m^cn}: n\in [5]\}$  since the storage system is $(5,3)$-MDS. Let $W_{m^c}=({\bf x}_1, {\bf x}_2, {\bf x}_3)$. Due to the properties of the matrices ${\bf G}$, ${\bf H}$ and the query sets towards the undesired file, we have 
 \begin{align*}
 (U_1+U_4){\bf x}_1+(U_1+2U_4){\bf x}_2+(U_1+3U_4){\bf x}_3&=U_1({\bf x}_1+{\bf x}_2+{\bf x}_3)+U_4({\bf x}_1+2{\bf x}_2+3{\bf x}_3), \\
  (U_2+U_5){\bf x}_1+(U_2+2U_5){\bf x}_2+(U_2+3U_5){\bf x}_3&=U_2({\bf x}_1+{\bf x}_2+{\bf x}_3)+U_5({\bf x}_1+2{\bf x}_2+3{\bf x}_3), \\
  (U_3+U_6){\bf x}_1+(U_3+2U_6){\bf x}_2+(U_3+3U_6){\bf x}_3&=U_3({\bf x}_1+{\bf x}_2+{\bf x}_3)+U_6({\bf x}_1+2{\bf x}_2+3{\bf x}_3).
  \end{align*}
Thus, for $j\in [3]$, there is {\it at least} one redundant symbol among $\{\widetilde{U}_{n,j}W_{m^cn}: n\in [5]\}$. In total, there are at least $3\times2+3\times 1=9$ redundant symbols among the queried undesired symbols $\{\mathbb{\pi}_n'(\mathcal{U}_n)W_{m^cn}:n\in[5]\}$, and thus $\{\mathbb{\pi}_n'(\mathcal{U}_n)W_{m^cn}: n\in [5]\}$ span a linear space of dimension at most $I=21$.

When the $n^{\mathrm{th}}$ server responds to the query vectors, it does not directly respond with the queried symbols. Instead, it applies a combination strategy as follows. For $n\in[5]$, a preset deterministic full rank matrix $C_n$ of order $6$ is applied to transform
$$ (X_{n,1},\dots,X_{n,6})=\left(C_n\mathbb{\pi}_n(\mathcal{V}_n)W_{mn}\right)^\top,\quad (Y_{n,1},\dots,Y_{n,6}) = \left(C_n\mathbb{\pi}'_n(\mathcal{U}_n)W_{m^cn}\right)^\top. $$
Then a linear combination function $\mathcal{L}_n$ is used to compress  $(X_{n,1},\dots,X_{n,6},Y_{n,1},\dots,Y_{n,6})$ into $I_n$
{\it downloaded desired symbols} $\{X_{n,1},\dots,X_{n,{I_n}}\}$, $I_n$ {\it downloaded undesired symbols} $\{Y_{n,1},\dots,Y_{n,{I_n}}\}$, and $6-I_n$ {\it paired-up summation symbols} $\{X_{n,{I_n+1}}+Y_{n,{I_n+1}},\dots,X_{n,{6}}+Y_{n,{6}}\}$.

The key of the combination strategy is to guarantee that  $\sum_{n\in[5]}I_n=I=21$, and the 21 downloaded undesired symbols $\bigcup_{n\in [5]}{\{Y_{n,1},\dots,Y_{n,{I_n}}\}}$ should be able to linearly generate all the queried undesired symbols $\{\mathbb{\pi}_n'(\mathcal{U}_n)W_{m^cn}:n\in[5]\}$, regardless of the random permutations on the query sets over all servers. Once this is guaranteed, the user can eliminate all the interferences of the undesired file in the paired-up summations, and essentially gets all the queried desired symbols $\{\mathbb{\pi}_n(\mathcal{V}_n)W_{mn}:n\in[5]\}$. Since each $V_i$ is sent to exactly $K=3$ servers from a $(5,3)$-MDS system, the user could retrieve $V_iW_m$ for $i\in[10]$. Moreover, since $S$ is of full rank, retrieving $\{V_iW_m: i \in [10]\}$ is equivalent to retrieving the desired file $W_m$. In sum, the user retrieves the  desired file with PIR rate $R=\frac{30}{30+21}=\frac{10}{17}$, beating the FGHK-conjectured capacity $\frac{5}{9}$. 

It only remains to specifically design the matrices for the combination strategy. Set $I_1=I_2=I_3=6,~I_4=3,~I_5=0$, and
\begin{equation*}
    C_1=C_2=C_3=C_5=\left(\begin{array}{cccccc}
         1 & 0&0&0&0&0 \\
         0&1&0&0&0&0\\
         0&0&1&0&0&0\\
         0&0&0&1&0&0\\
         0&0&0&0&1&0\\
         0&0&0&0&0&1
    \end{array}\right),\quad
    C_4=\left(\begin{array}{cccccc}
         1 & 1&1&1&1&1 \\
         0&1&2&3&4&5\\
         0&1&4&2&2&4\\
         0&1&1&6&1&6\\
         0&1&2&4&4&2\\
         0&1&4&5&2&3
    \end{array}\right).
\end{equation*}

Each of the first three servers produces 6 downloaded desired symbols and 6 downloaded undesired symbols. The $4^{\mathrm{th}}$ server produces 3 downloaded desired symbols, 3 downloaded undesired symbols, and another 3 paired-up summation symbols. The $5^{\mathrm{th}}$ server produces 6 paired-up summation symbols. The role of the matrix $C_4$ is the most important. Here we pick $C_4$ to be an invertible Vandermonde matrix (and that is also why $\mathbb{F}_7$ suffices for this example). Regardless of the permutation applied on the query set for the $4^{\mathrm{th}}$ server, the 3 downloaded undesired symbols, together with the 18 downloaded  undesired symbols from the first three servers, will linearly generate all the queried undesired symbols $\{\mathbb{\pi}_n'(\mathcal{U}_n)W_{m^cn}:n\in[5]\}$. 

\subsection{Preparations}

In this subsection we present an auxiliary lemma, which plays a key role in designing the query vectors later.  

\begin{claim}\label{MDS}
Let ${\bf G}=[{\bf I}_{K} ~ {\bf A}]$ be a $K\times N$ matrix where ${\bf I}_K$ is the identity matrix of order $K$. Then {\bf G} has column-MDS property, i.e., any $K$ columns of {\bf G} are linearly independent, if and only if any subsquare matrix of ${\bf A}$ has full rank.
\end{claim}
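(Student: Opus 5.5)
The plan is to reduce the determinant of an arbitrary $K\times K$ submatrix of ${\bf G}=[{\bf I}_K~{\bf A}]$ to the determinant of a square submatrix of ${\bf A}$, via cofactor (Laplace) expansion along the identity columns. ${\bf A}$ is $K\times(N-K)$. Fix a set $\mathcal{S}$ of $K$ column indices of ${\bf G}$ and write $\mathcal{S}=\mathcal{S}_1\cup\mathcal{S}_2$. Here $\mathcal{S}_1\subseteq[K]$ are the identity columns, and $\mathcal{S}_2$ are columns of ${\bf A}$, indexed by $\mathcal{J}\subseteq[N-K]$ with $|\mathcal{J}|=K-|\mathcal{S}_1|=:r$.

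The key step is the following. The columns of ${\bf I}_K$ indexed by $\mathcal{S}_1$ are the unit vectors $e_i$, $i\in\mathcal{S}_1$. Expand the determinant of ${\bf G}_{\mathcal{S}}$ successively along each of these columns, or equivalently row-reduce: subtract multiples of $e_i$ to clear rows $i\in\mathcal{S}_1$ in the ${\bf A}$-columns. The result is
\[
\det({\bf G}_{\mathcal{S}})=\pm\det\big({\bf A}_{[K]\setminus\mathcal{S}_1,\,\mathcal{J}}\big),
\]
where ${\bf A}_{[K]\setminus\mathcal{S}_1,\mathcal{J}}$ is the $r\times r$ submatrix of ${\bf A}$ with rows $[K]\setminus\mathcal{S}_1$ and columns $\mathcal{J}$. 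When $r=0$ the determinant is $\pm1$, so we adopt the convention that the empty minor equals $1$.

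Next, check that the correspondence $\mathcal{S}\mapsto([K]\setminus\mathcal{S}_1,\mathcal{J})$ is a bijection. It runs from $K$-subsets of columns of ${\bf G}$ onto pairs (row set, column set) of equal size $r\in[0:\min(K,N-K)]$. This holds because any $r$ rows $\mathcal{R}$ and any $r$ columns $\mathcal{J}$ of ${\bf A}$ arise from $\mathcal{S}_1=[K]\setminus\mathcal{R}$ together with the columns $\mathcal{J}$.

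Both directions then follow at once. If every square submatrix of ${\bf A}$ is nonsingular, every $K\times K$ minor of ${\bf G}$ is nonzero, so ${\bf G}$ is column-MDS. Conversely, given any singular square submatrix of ${\bf A}$, the bijection produces $K$ columns of ${\bf G}$ with zero determinant.

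The only delicate point is the sign and index bookkeeping in the Laplace expansion, and the edge cases $r=0$ and $r=\min(K,N-K)$. I would handle this cleanly by permuting columns, which changes the determinant only by sign, so that ${\bf G}_{\mathcal{S}}$ becomes block-triangular after a row permutation:
\[
\begin{pmatrix}{\bf I}_{K-r} & *\\ 0 & {\bf A}_{[K]\setminus\mathcal{S}_1,\mathcal{J}}\end{pmatrix}.
\]
Its determinant is visibly $\det({\bf A}_{[K]\setminus\mathcal{S}_1,\mathcal{J}})$.
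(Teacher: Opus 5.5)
Your argument is correct and is essentially the elementary-transformation proof the paper defers to (the paper only cites a textbook for it). Permuting rows and columns of ${\bf G}_{\mathcal{S}}$ into block-triangular form shows that every $K\times K$ minor of $[{\bf I}_K~{\bf A}]$ equals $\pm$ a square minor of ${\bf A}$, with the empty minor equal to $1$, and your bijection between $K$-column choices and equal-size (row set, column set) pairs of ${\bf A}$ then gives both directions.

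One cosmetic point: the clearing step you describe, subtracting multiples of the columns $e_i$, is a column operation rather than a row reduction. The block-triangular formulation makes this moot.
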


This claim can be simply proved by elementary transformation of matrices (see~\cite{ramkumar2021codes}, Chapter 2, Section 4).

\begin{lemma}\label{lem:G-H}
Let $N\geq K+T$. Let ${\bf G} =[{\bf I}_{K} ~ {\bf g}_1 ~ {\bf g}_2 ~ \cdots ~ {\bf g}_{N-K}]$ be a $K\times N$ matrix with column-MDS property, where ${\bf g}_j=(g_{1,j},\dots,g_{K,j})^{\top} \in \mathbb{F}_q^{K\times 1}$ for $j\in[N-K]$. When $q$ is sufficiently large, there exists a matrix 
\begin{equation*}
{\bf H}=  
\left(\begin{array}{c}
{\bf{H}}'_{K\times T} \\
{\bf{I}}_{T} \\
{\bf{H}}''_{(N-K-T)\times T}
\end{array}
\right)_{N\times T}, 
\end{equation*}
with row-MDS property (any $T$ rows are linearly independent), where
\begin{equation*}
    {\bf H'}=\left(\begin{array}{cccc}
         \overline{h}_{1,1} & \overline{h}_{1,2}&\cdots&\overline{h}_{1,T} \\
        \overline{h}_{2,1} & \overline{h}_{2,2}&\cdots&\overline{h}_{2,T} \\
         \vdots & \vdots&\ddots&\vdots\\
         \overline{h}_{K,1}&\overline{h}_{K,2}&\cdots&\overline{h}_{K,T}
     \end{array}\right), ~ 
    {\bf H''}=\left(\begin{array}{cccc}
         h_{1,1} & h_{1,2}&\cdots& h_{1,T} \\
        h_{2,1} & h_{2,2}&\cdots&h_{2,T} \\
         \vdots & \vdots&\ddots&\vdots\\
        h_{N-K-T,1}&h_{N-K-T,2}&\cdots&h_{N-K-T,T}
     \end{array}\right),
\end{equation*}
and $\overline{h}_{k,t}=g_{k,t}+\sum_{\ell=1}^{N-K-T}g_{k,\ell+T}h_{\ell,t}$, for $k\in[K]$ and $t\in [T]$.
\end{lemma}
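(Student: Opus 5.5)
The plan is to treat the entries $h_{\ell,t}$ of ${\bf H}''$, for $\ell\in[N-K-T]$ and $t\in[T]$, as indeterminates. Then every entry of ${\bf H}$ is a polynomial of degree at most one in these variables: entries of ${\bf H}''$ are variables themselves, entries of ${\bf I}_T$ are constants, and $\overline{h}_{k,t}=g_{k,t}+\sum_{\ell}g_{k,\ell+T}h_{\ell,t}$ is affine. The row-MDS property says that each of the $\binom{N}{T}$ maximal $T\times T$ minors of ${\bf H}$ is nonzero. Each minor is a polynomial of total degree at most $T$ in the $h$'s. So if every such polynomial is not identically zero, their product is a nonzero polynomial of degree at most $T\binom{N}{T}$. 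By the Schwartz--Zippel lemma (or the elementary fact that a nonzero polynomial has a non-root once $q$ exceeds its degree), some assignment over $\mathbb{F}_q$ with $q>T\binom{N}{T}$ makes all minors nonzero simultaneously. The whole proof therefore reduces to showing that each minor is a nonzero polynomial.

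To show this, for each $T$-subset $\mathcal{R}$ of rows I will exhibit one specialization (possibly over the algebraic closure, or simply any field assignment) of the $h$'s at which the minor $\det {\bf H}_{\mathcal{R}}$ is nonzero. The key structural observation is that, with ${\bf G}=[{\bf I}_K~{\bf A}]$ and ${\bf A}=[{\bf g}_1\cdots{\bf g}_{N-K}]$, we have ${\bf H}' = -{\bf A}\begin{pmatrix}-{\bf I}_T\\ -{\bf H}''\end{pmatrix}$, i.e.\ ${\bf H}'={\bf A}\begin{pmatrix}{\bf I}_T\\ {\bf H}''\end{pmatrix}$. Equivalently, ${\bf H}$ is characterized by the columns of $\begin{pmatrix}-{\bf H}'\\ {\bf I}_T\\ {\bf H}''\end{pmatrix}$ lying in the kernel of ${\bf G}$, up to a sign on the top block, which does not affect row-MDS. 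Thus ${\bf H}$ is (up to sign) a generic $T$-dimensional subspace of the dual code $\mathcal{C}^\perp$ of the MDS code generated by ${\bf G}$, parametrized by the free coordinates ${\bf H}''$. I would then write $\mathcal{R}=\mathcal{R}_1\cup\mathcal{R}_2\cup\mathcal{R}_3$ according to the three blocks. For a minor to be nonzero it suffices that, for some choice of ${\bf H}''$, the column space of ${\bf H}$ (a $T$-dimensional subcode of $\mathcal{C}^\perp$) projects injectively onto coordinates $\mathcal{R}$.

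The main obstacle is this nonvanishing step, and I would handle it via the dual code. $\mathcal{C}^\perp$ is an $(N,N-K)$ MDS code, so any $N-K$ coordinates form an information set for it. Given $\mathcal{R}$ with $|\mathcal{R}|=T\le N-K$, extend it to an information set $\mathcal{J}\supseteq\mathcal{R}$ of size $N-K$. I want a $T$-dimensional subcode whose codewords vanish on $\mathcal{J}\setminus\mathcal{R}$ and whose restriction to $\mathcal{R}$ is the identity. Such a subcode exists by the information-set property, and it is some $T$-subspace of $\mathcal{C}^\perp$. However, our parametrization only covers subspaces projecting bijectively onto the coordinates $K+1,\dots,K+T$, i.e.\ those in a Zariski-open chart of the Grassmannian. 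I therefore need to arrange that the chosen subcode also projects isomorphically onto $[K+1:K+T]$. I would handle this by genericity: the set of $T$-subspaces of $\mathcal{C}^\perp$ that are nondegenerate both on $\mathcal{R}$ and on $[K+1:K+T]$ is an intersection of two nonempty Zariski-open subsets of the irreducible Grassmannian. That intersection is nonempty, and any point in it has a basis normalized to ${\bf I}_T$ on $[K+1:K+T]$, which is exactly our ${\bf H}$ for some ${\bf H}''$. The open set for $[K+1:K+T]$ is nonempty because $[K+1:K+T]$ is itself contained in an information set of $\mathcal{C}^\perp$. Alternatively, for a purely polynomial argument, I would instead show directly that the minor polynomial is nonzero by picking a monomial term and using Claim~\ref{MDS}, noting that the top block involves submatrices of ${\bf A}$ which have full rank. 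The Grassmannian argument is cleaner, though, and I would present that, concluding with the Schwartz--Zippel step for sufficiently large $q$.
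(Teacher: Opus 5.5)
Your proof is correct, but it shows each minor is a nonzero polynomial by a different mechanism from the paper's.

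\textbf{The paper's route.} It stays inside the polynomial ring and exhibits an explicit $\{0,1\}$-valued witness. Suppose the chosen $T$ rows consist of $T_1$ rows of ${\bf H}'$, $T_2$ rows of ${\bf I}_T$ and $T_3$ rows of ${\bf H}''$. The paper makes the chosen rows of ${\bf H}''$ unit vectors in columns not hit by the chosen identity rows, so these $T_2+T_3$ rows form a partial permutation matrix, and sets every other entry of ${\bf H}''$ to $0$. Expanding along the unit rows leaves a $T_1\times T_1$ block of ${\bf H}'$ in columns where ${\bf H}''$ is zero. There $\overline{h}_{k,t}=g_{k,t}$, so the block is a square submatrix of $[{\bf g}_1\cdots{\bf g}_T]$, invertible by Claim~\ref{MDS}. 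This is close to the ``purely polynomial'' alternative you mention at the end.

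\textbf{Your route.} You use ${\bf H}'={\bf A}\begin{pmatrix}{\bf I}_T\\ {\bf H}''\end{pmatrix}$ to view the sign-adjusted ${\bf H}$ as a basis of a $T$-dimensional subcode of $\mathcal{C}^\perp$ lying in the standard affine chart of the Grassmannian. You then produce a subcode nondegenerate on $\mathcal{R}$ from an information set and move it into the chart by irreducibility. The resulting point is a priori only over $\overline{\mathbb{F}}_q$. That is enough, since you only need it to show the minor is not identically zero before applying Schwartz--Zippel over $\mathbb{F}_q$.

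\textbf{What each buys.} The paper's argument is shorter and fully elementary. Yours is more conceptual in two ways. First, it explains the formula for $\overline{h}_{k,t}$: the columns must be dual codewords, which is exactly what makes \eqref{redundancy} hold. Second, it shows that, beyond the systematic form of ${\bf G}$, the only property of the code you use is that $\mathcal{C}^\perp$ projects onto every $T$-subset of coordinates, i.e.\ $d(\mathcal{C})\ge T+1$, rather than full MDS.

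If you want that generality without Zariski density, an elementary version is available. Write $\det{\bf H}_{\mathcal{R}}=\pm\det\bigl(P_{\mathcal{R}}\begin{pmatrix}{\bf I}_T\\ {\bf H}''\end{pmatrix}\bigr)$, where $P_{\mathcal{R}}$ is the $T\times(N-K)$ matrix giving the $\mathcal{R}$-coordinates of a dual codeword from its coordinates $K+1,\dots,N$. By Cauchy--Binet, this determinant is a combination of the maximal minors of $\begin{pmatrix}{\bf I}_T\\ {\bf H}''\end{pmatrix}$, which have pairwise disjoint monomial supports. The coefficients are the maximal minors of $P_{\mathcal{R}}$, so the determinant is nonzero exactly when $P_{\mathcal{R}}$ has rank $T$.
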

\begin{IEEEproof}
It suffices to find the appropriate variables $\{h_{\ell,t}: ~\ell\in[N-K-T],~t\in [T]\}$ such that ${\bf H}$ is row-MDS. For any $T$ rows of ${\bf H}$, consider the polynomial given by its determinant. Altogether there are ${N\choose T}$ polynomials and the variables are $\{h_{\ell,t}: ~\ell\in[N-K-T],~t\in [T]\}$. 

We claim that each polynomial corresponding to the determinant of some $T$ rows of ${\bf H}$ must be a nonzero polynomial. This can be done by choosing specific values for $\{h_{\ell,t}: ~\ell\in[N-K-T],~t\in [T]\}$ to make the $T$ rows an invertible matrix of order $T$. Suppose the $T$ rows contain $T_1$ rows of $\bf H'$, $T_2$ rows of $\bf I$, and $T_3$ rows of $\bf H''$. Then we can assign $\{h_{\ell,t}: ~\ell\in[N-K-T],~t\in [T]\}$ in a way such that the $T_2$ rows of $\bf I$ and the $T_3$ rows of $\bf H''$ together induce a permutation submatrix (with only one entry 1 in each row and at most one entry 1 in each column), and the rest entries in $\bf H''$ are all set to be 0. In this way, the invertibility relies on a subsquare matrix of $\bf H'$ of order $T_1$. This submatrix is in fact a submatrix of $[\bf g_1 \dots \bf g_{T}]$, and thus it is clearly invertible due to the MDS property of $\bf G$. 

Since all the ${N\choose T}$ polynomials are nonzero polynomials, then based on the celebrated Schwartz-Zippel lemma \cite{schwartz1980fast,zippel1979probabilistic}, there must exist a realization of variables $\{h_{\ell,t}: ~\ell\in[N-K-T],~t\in [T]\}$ over a sufficiently large finite field such that all the polynomials simultaneously evaluate to non-zero values. Thus, over a sufficiently large finite  field, for any systematic MDS matrix $\bf G$ one can find the corresponding desired row-MDS matrix ${\bf H}$.
\end{IEEEproof}

\begin{example}
Let $N=6,K=3,T=2$, and let
\begin{align*}
   {\bf G}= \left(\begin{array}{cccccc}
   1 &0&0&1&2&3  \\
   0&1&0&2&3&1\\
   0&0&1&3&2&1
    \end{array}\right)_{3\times 6}
\end{align*}
be a column-MDS matrix over $\mathbb{F}_{11}$. To find a desired matrix ${\bf H}_{6\times 2}$, the key is to choose the last row. The first three rows are then completely determined by $\mathbf{G}$ and the last row of $\mathbf{H}$. The choice of the last row of $\mathbf{H}$ must ensure that $\mathbf{H}$ is a row-MDS matrix. It is routine to check that by choosing the last row as $(1,1)$, the following matrix over $\mathbb{F}_{11}$ has the desired row-MDS property:
\begin{align*}
    {\bf H}=\left(\begin{array}{cc}
    4&5\\
    3&4\\
    4&3\\
    1&0\\
    0&1\\
    1&1     
    \end{array}\right).
\end{align*}   
\end{example}

Given an MDS storage code with a systematic generator matrix $\bf G$, the corresponding matrix $\bf H$ in Lemma \ref{lem:G-H} plays a key role in designing the queries, and later we will see why we need to set $\overline{h}_{k,t}$ in this way. In particular, when $K=N-T$ and ${\bf G}=[{\bf I}_{K} \;\; {\bf g}_1 \;\cdots\;{\bf g}_{T}]$ has MDS property, then according to \cref{MDS}, we can simple pick
\begin{align*}
  {\bf H}=  \left(\begin{array}{ccc}
{\bf g}_{1} & \cdots &{\bf g}_{T} \\
1 &\cdots & 0 \\
\vdots&\ddots&\vdots\\
0 & \cdots&1
\end{array}\right)_{N\times T}
\end{align*}
with row-MDS property.

\subsection{The general MDS-TPIR scheme for $(M,N,T,K)=(2,N,2,K)$, $N\geq K+2$}\label{subsec:main}

Now we are ready to present our general MDS-TPIR scheme for $(M,N,T,K)=(2,N,2,K)$, based on the disguise-and-squeeze approach. Since our framework relies on Lemma \ref{lem:G-H}, then we also need $N\geq K+2$.

Assume that each file consists of $L=\tbinom{N}{K}K=\tbinom{N-1}{K-1}N$ symbols from a sufficiently large finite field $\mathbb{F}_q$. Let 
\begin{align*}
W_1 = ({\bf a}_1, \dots, {\bf a}_{K}), \quad W_2 = ({\bf b}_1, \dots, {\bf b}_{K}),
\end{align*}
where ${\bf a}_1,\dots,{\bf a}_{K},{\bf b}_1,\dots,{\bf b}_{K} \in \mathbb{F}_q^{{N\choose K}\times 1}$ are vectors comprised of i.i.d. uniform symbols from $\mathbb{F}_q$. Each file is stored in the system via the same $(N,K)$-MDS code with generator matrix ${\bf G}_{K\times N}$, i.e.,
\begin{align*}
(W_{11},\dots, W_{1N})=W_1{\bf G}, \quad (W_{21},\dots, W_{2N})=W_2{\bf G},
\end{align*}
where $W_{mn}\in \mathbb{F}_q^{{N\choose K}\times1}$ for $m\in[2],n\in[N]$.

\subsubsection{The disguising phase (Constructions of queries)}
Let $W_m$ be the desired file and $W_{m^c}$ be the undesired file. To privately retrieve $W_m$, the user sends to the $n^{\mathrm{th}}$ server a set of queries $Q_n^{[m]}$, $n\in [N]$. The query $Q_n^{[m]}$ contains two parts $Q_n^{[m]}(W_m)$ and $Q_n^{[m]}(W_{m^c})$. Each part contains $\tbinom{N-1}{K-1}$ row vectors of length $\tbinom{N}{K}$, called query vectors. These query vectors are chosen according to the following steps. 
\begin{itemize}
\item The user privately chooses two matrices, $S$ and $S'$, independently and uniformly from the set of all full rank $\tbinom{N}{K}\times \tbinom{N}{K}$ matrices over $\mathbb{F}_q$. Let $\sigma\triangleq\tbinom{N-2}{K-2}$. Label the rows of $S$ as $\left\{V_i: i\in \left[\tbinom{N}{K}\right]\right\}$. Divide the rows of $S'$ into two parts and label them as $\left\{Z_i:i\in [\sigma]\right\}$ and $\left\{U_i:i\in\left[{N\choose K}-\sigma\right]\right\}$.
\item Define $\mathcal{V}_1,\mathcal{V}_2,\dots,\mathcal{V}_N$ to be the sets of query vectors regarding the desired file. Each $\mathcal{V}_n$ contains $\tbinom{N-1}{K-1}$ row vectors from $S$, and every $K$ sets out of $\{\mathcal{V}_1,\mathcal{V}_2,\dots,\mathcal{V}_N\}$ share exactly one common row vector from $S$. As a result, every two colluding servers share exactly $\sigma$ common query vectors,  and thereby altogether $2{N-1\choose K-1}-\sigma$ linearly independent vectors in $\mathbb{F}_q^{N\choose K}$ regarding the desired file. 

\item Next, define $\mathcal{U}_1,\mathcal{U}_2,\dots,\mathcal{U}_N$ to be the sets of query vectors regarding the undesired file. First, we put $\left\{Z_i:i\in [\sigma]\right\}$ into all the sets $\mathcal{U}_1,\mathcal{U}_2,\dots,\mathcal{U}_N$. Then let $\mu\triangleq {N-1\choose K-1} - \sigma$ and 
\begin{equation}\label{eqn:undesired}
\begin{small}
\left(
  \begin{array}{ccc}
    \widetilde{U}_{1,1} & \cdots & \widetilde{U}_{1,\mu} \\
    \widetilde{U}_{2,1} & \cdots & \widetilde{U}_{2,\mu} \\
    \vdots & \ddots & \vdots \\
    \widetilde{U}_{N,1} & \cdots &  \widetilde{U}_{N,\mu} \\
  \end{array}
\right) \triangleq {\bf H} \left(
  \begin{array}{ccc}
    U_{1} & \cdots & U_{\mu} \\
    U_{\mu+1} & \cdots & U_{2\mu} \\
  \end{array}
\right),
\end{small}
\end{equation}
and assign $\widetilde{U}_{n,1},\cdots,\widetilde{U}_{n,\mu}$ to $\mathcal{U}_n$, for $n\in[N]$. Here the matrix $\bf H$ is from \cref{lem:G-H} with the row-MDS property.  Note that Equation (\ref{eqn:undesired}) is well-defined since it is routine to check that ${N\choose K}-\sigma\geq 2\mu$. In this way, every two colluding servers share also exactly $\sigma$ common query vectors and altogether $2{N-1\choose K-1}-\sigma$ linearly independent vectors in $\mathbb{F}_q^{N\choose K}$ regarding the undesired file.

\item The query sent to the $n^{\mathrm{th}}$ server is given by 
\begin{equation*}
Q_n^{[m]} = (Q_n^{[m]}(W_m), Q_n^{[m]}(W_{m^c})) = (\mathbb{\pi}_n(\mathcal{V}_n),\mathbb{\pi}_n'(\mathcal{U}_n)),\end{equation*}
where $\mathbb{\pi}_n$ and $\mathbb{\pi}_n',~n\in [N]$ are random permutations independently and uniformly chosen in the symmetric group $S_{L/N}$ (recall that $L/N=\tbinom{N-1}{K-1}$). Write $\mathbb{\pi}_n(\mathcal{V}_n)$ and $\mathbb{\pi}_n'(\mathcal{U}_n)$ in the form of $(L/N)\times
\tbinom{N}{K}$ matrices, respectively.
\end{itemize} 

After receiving the queries, the $n^{\mathrm{th}}$ server projects its stored contents $(W_{mn}, W_{m^cn})$ along these query vectors, producing $L/N$ {\it queried desired symbols} $\mathbb{\pi}_n(\mathcal{V}_n)W_{mn}$ and $L/N$ {\it queried undesired symbols} $\mathbb{\pi}_n'(\mathcal{U}_n)W_{m^cn}$.

\subsubsection{The squeezing phase (Exploit the redundancy among the queried undesired symbols)}

Since each query vector $V_i$ is sent to exactly $K$ servers from an $(N,K)$-MDS system, the user could retrieve $V_iW_m$ from the queried desired symbols $\{\mathbb{\pi}_n(\mathcal{V}_n)W_{mn}:n\in[N]\}$. Since $S$ is of full rank, retrieving $\{V_iW_m:i\in [\tbinom{N}{K}]\}$ is equivalent to retrieving the file $W_m$ itself. Therefore, the scheme works as long as the user collects all the queried desired symbols $\{\mathbb{\pi}_n(\mathcal{V}_n)W_{mn}:n\in[N]\}$.

To maximize the PIR rate, each server does not directly respond with the queried symbols. Some queried symbols could be merged due to the redundancy among the queried undesired symbols. Clearly, for each $i\in[\sigma]$, there are $N-K$ redundant symbols among $\{Z_iW_{m^cn}:n\in[N]\}$ since the storage system is $(N,K)$-MDS. Let $W_{m^c}=({\bf x}_1, {\bf x}_2,\dots,{\bf x}_{K})$. For every $j$, $1\leq j\leq \mu$, due to the relations of $\bf G$ and $\bf H$ from \cref{lem:G-H} (in particular, the way we set $\overline{h}_{k,t}$ in $\bf H$), one could guarantee that 
\begin{align}\label{redundancy}
 \sum_{1\leq k\leq K} \widetilde{U}_{k, j} W_{m^ck} = \sum_{K+1\leq \ell \leq N} \widetilde{U}_{\ell, j}W_{m^c\ell},   
\end{align} 
which shows that there is {\it at least} one redundant symbol among $\{\widetilde{U}_{n, j}W_{m^cn}:n\in[N]\}$. Moreover, when $N>2K$, naturally there are $N-2K$ redundant symbols among $\{\widetilde{U}_{n, j}W_{m^cn}:n\in[N]\}$, since all these symbols can be expressed as linear combinations of $\{U_{j}{\bf x}_1,\dots,U_{j}{\bf x}_K,U_{\mu+j}{\bf x}_1,\dots,U_{\mu+j}{\bf x}_K\}$. Therefore, in total there are at least $\sigma(N-K)+\mu\zeta$ redundant symbols among the queried undesired symbols, where $\zeta=\max\{1,N-2K\}$. In other words, $\{\mathbb{\pi}_n'(\mathcal{U}_n)W_{m^cn}: n\in [N]\}$ span a linear space of dimension at most $I$, where $I\triangleq L-\sigma(N-K)-\mu\zeta$.

When the $n^{\mathrm{th}}$ server responds to the query vectors, it applies a combination strategy to map  $(\mathbb{\pi}_n(\mathcal{V}_n)W_{mn},\mathbb{\pi}_n'(\mathcal{U}_n)W_{m^cn})$ into an equal or smaller number of {\it downloaded symbols}. For $n\in [N]$, a preset deterministic full rank matrix $C_n$ of order $L/N$ is applied to transform 
$$ (X_{n,1},\dots,X_{n,{L/N}}) = (C_n\mathbb{\pi}_n(\mathcal{V}_n)W_{mn})^{\top}, ~ (Y_{n,1},\dots,Y_{n,{L/N}}) = (C_n\mathbb{\pi}_n'(\mathcal{U}_n)W_{m^cn})^{\top}.$$
Then a combination function $\mathcal{L}_n$ defined by
$$\mathcal{L}_n(X_{n,1},\dots,X_{n,{L/N}},Y_{n,1},\dots,Y_{n,{L/N}}) = (X_{n,1},\dots,X_{n,{I_n}},Y_{n,1},\dots,Y_{n,{I_n}},X_{n,{I_n+1}}+Y_{n,{I_n+1}},\dots,X_{n,{L/N}}+Y_{n,{L/N}})$$
will generate $I_n$ {\it downloaded desired symbols}, $I_n$ {\it downloaded undesired symbols}, and $L/N-I_n$ paired-up summation symbols. 

The key of the combination strategy is to guarantee that $\sum_{n\in[N]}I_n=I$, and the union of the downloaded undesired symbols should be able to linearly generate all the queried undesired symbols $\{\mathbb{\pi}_n'(\mathcal{U}_n)W_{m^cn}:n\in[N]\}$, regardless of the random permutations on the query sets over all servers. Once this property holds, the user is then able to eliminate the interferences of the undesired file in the paired-up summations and finally retrieves all the queried desired symbols $\{\mathbb{\pi}_n(\mathcal{V}_n)W_{mn}:n\in[N]\}$, and thereby the desired file itself. 

As mentioned in~\cite{sun2017private}, the choice of the matrices $C_n$'s to meet the constraints above is a highly non-trivial task, but is guaranteed to exist when the field is sufficiently large. We postpone the choices of the values $I_n$'s and the matrices $C_n$'s to the next subsection, in which we will apply a strategy different from~\cite{sun2017private} in order to reduce the field size.

Summing up the above, we have the following theorem. 

\begin{theorem}\label{th-1}
    For $(M,N,T,K)=(2,N,2,K)$, where $N\geq K+2$, there exists a PIR scheme with rate $R=\frac{N^2-N}{2N^2-2N+K^2-NK}$ when $N\leq 2K$ and $R=\frac{N^2-N}{N^2-N+2NK-K^2-K}$ when $N> 2K$.
\end{theorem}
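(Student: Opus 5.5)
The plan is to verify the three ingredients of the scheme in Subsection~\ref{subsec:main}: correctness, $T=2$ privacy, and the download count. For the count, each file has $L=\binom{N}{K}K=\binom{N-1}{K-1}N$ symbols. Server $n$ downloads $2I_n+(L/N-I_n)=L/N+I_n$ symbols, so the total download is $L+I$ with
\[
I=L-\sigma(N-K)-\mu\zeta,\qquad \sigma=\binom{N-2}{K-2},\quad \mu=\binom{N-1}{K-1}-\sigma=\binom{N-2}{K-1},\quad \zeta=\max\{1,N-2K\}.
\]
Write $L/N=\binom{N-1}{K-1}$, and use $\sigma=\frac{K-1}{N-1}\cdot\frac{L}{N}$ and $\mu=\frac{N-K}{N-1}\cdot\frac{L}{N}$.
\begin{itemize}
\item For $N\le 2K$ we have $\zeta=1$, so $I/L=1-\frac{(K-1)(N-K)+(N-K)}{N(N-1)}=1-\frac{K(N-K)}{N(N-1)}$. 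Then $R=L/(L+I)=\frac{N^2-N}{2N^2-2N-NK+K^2}$.
\item For $N>2K$ we have $\zeta=N-2K$, so $I/L=1-\frac{(N-K)(N-K-1)}{N(N-1)}$. After simplification $R=\frac{N^2-N}{N^2-N+2NK-K^2-K}$.
\end{itemize}
Both match the statement, so the real content is proving that the scheme is private and correct.

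\textbf{Privacy.} Take any two servers $n_1,n_2$. For the desired file, $\mathcal V_{n_1}\cup\mathcal V_{n_2}$ consists of $2\binom{N-1}{K-1}-\sigma$ distinct rows of a uniformly random invertible $S$, with exactly $\sigma$ of them shared. For the undesired file, $\mathcal U_{n_1}\cup\mathcal U_{n_2}$ contains the $\sigma$ shared $Z_i$ together with $2\mu$ vectors $\widetilde U_{n_1,j},\widetilde U_{n_2,j}$. Rows $n_1,n_2$ of $\mathbf H$ form an invertible $2\times2$ matrix by the row-MDS property from Lemma~\ref{lem:G-H}, so these $2\mu$ vectors are an invertible linear image of $U_1,\dots,U_{2\mu}$. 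Together with the $Z_i$, they are therefore a uniformly random linearly independent tuple, exactly as in the desired case. The independent uniform permutations $\pi_n,\pi'_n$ then hide which positions are shared. Hence the joint distribution of the four query matrices seen by the colluding pair is the same whether $m=1$ or $m=2$: both files receive i.i.d.\ structured queries of identical law. Since $\mathbf H$ is deterministic and public, this is the only invariance that needs checking.

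\textbf{Correctness.} Correctness reduces to the existence of integers $I_n$ with $\sum_n I_n=I$ and full-rank matrices $C_n$ such that, for every choice of permutations, the downloaded undesired symbols $\{Y_{n,1},\dots,Y_{n,I_n}\}$ span all queried undesired symbols. Once this holds, the interference in the paired sums is cancelled, all $\pi_n(\mathcal V_n)W_{mn}$ are recovered, and the MDS property together with the invertibility of $S$ yields $W_m$. I would argue existence generically, following the $(2,5,2,3)$ example. Treat the entries of the $C_n$ as indeterminates and choose greedy sizes: $I_n=L/N$ for the first servers, one partial server, and $I_n=0$ afterwards. For each of the finitely many permutation patterns, the spanning requirement is the nonvanishing of some $I\times I$ minor of a matrix whose entries are polynomial in these indeterminates. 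The upper bound $I$ on the dimension was shown above, and the span of all queried symbols is unaffected by the invertible maps $C_n$ and the permutations.

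\textbf{Main obstacle.} The hardest step is showing that each such minor is a nonzero polynomial, i.e.\ that for each fixed permutation pattern some specialization of the $C_n$ works. For this I would use a matroid or basis-exchange argument. The queried undesired symbols span a space of dimension exactly $I$ generically; for fixed permutations one can pick an ordering of the rows of each server's block, say a basis extension proceeding server by server, so that the first $I_n$ rows, over all servers, form a basis. Setting $C_n$ equal to the corresponding permutation matrix then witnesses that the minor is nonzero. Applying the Schwartz--Zippel lemma over all patterns then gives a single choice of the $C_n$ over a large enough field, completing the argument.
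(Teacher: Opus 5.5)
Your rate computation and privacy argument are fine. Viewing $(Z_1,\dots,Z_\sigma,\widetilde U_{n_1,\cdot},\widetilde U_{n_2,\cdot})$ as an invertible image of rows of $S'$ is exactly the point of the disguise. The gap is in correctness. Your greedy download sizes ($I_n=L/N$ for as many servers as possible, then one partial server, then zeros) are infeasible in general, so the ``basis extension server by server'' witness does not exist.

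For each $i\in[\sigma]$, the symbols $\{Z_iW_{m^cn}:n\in[N]\}$ span only $K$ dimensions. So all $Z$-type symbols together span at most $\sigma K$ dimensions. Modulo that space, server $n$ can contribute at most $\min\{I_n,\mu\}$ new dimensions, since its other queried symbols are the $\mu$ symbols $\widetilde U_{n,j}W_{m^cn}$. Hence, for every choice of the $C_n$ and every permutation pattern, the downloaded undesired symbols span at most
\[ \sigma K+\sum_{n\in[N]}\min\{I_n,\mu\} \]
dimensions. Under the greedy rule the first $K$ servers are full, and server $K+1$ gets $\min\{(N-\zeta-K)\mu,\,\sigma+\mu\}$. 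This exceeds $\mu$ as soon as $N\ge K+3$ and $K\ge 2$, and then the bound above is strictly less than $\sum_n I_n=I$. Two instances:
\begin{itemize}
\item $(N,K)=(5,2)$: $\sigma=1$, $\mu=3$, $I=14$, greedy sizes $(4,4,4,2,0)$, at most $2+11=13$ dimensions.
\item $(N,K)=(6,3)$: greedy sizes $(10,10,10,10,2,0)$, at most $12+26=38<42$ dimensions.
\end{itemize}
So whenever the bound $I$ is attained, which you yourself assert is the generic case, your minor is identically zero for every pattern, and Schwartz--Zippel has nothing to act on. The greedy sizes agree with correct ones only when $N=K+2$ (or $K=1$), which is why the $(2,4,2,2)$ and $(2,5,2,3)$ examples hid the problem.

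The missing idea is that only $K$ servers may be asked for more than $\mu$ downloaded undesired symbols. The paper takes $I_n=\binom{N-1}{K-1}$ for $n\le K$, $I_n=\mu$ for $K<n\le N-\zeta$, and $I_n=0$ for the last $\zeta$ servers. Servers $K+1,\dots,N$ are ordered so that, for each $j$, the symbols $\widetilde U_{n,j}W_{m^cn}$ with $n\le N-\zeta$ generate the whole $j$-th block. This is possible because each block has dimension at most $\min\{N-1,2K\}=N-\zeta$, the first $K$ of these symbols are independent, and the coefficient structure is the same for every $j$. Then $\sum_nI_n=K\binom{N-1}{K-1}+(N-\zeta-K)\binom{N-2}{K-1}=I$.

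With this profile your witness does exist. For a fixed pattern, let $C_n$ for $K<n\le N-\zeta$ be the permutation matrix moving the $\mu$ positions of the $\widetilde U_{n,j}$ to the top; the $Z$-parts are known from the first $K$ servers by the MDS property. But no generic argument is needed. The paper takes these $C_n$ to be invertible Vandermonde matrices: after subtracting the known $Z$-contributions, the first $\mu$ rows restricted to any $\mu$ columns form an invertible Vandermonde matrix. This handles all permutations at once and needs only $q\ge\binom{N-1}{K-1}$ for this step.

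If you keep the Schwartz--Zippel route, two further fixes are needed:
\begin{itemize}
\item Use an $r\times r$ minor, where $r\le I$ is the true dimension of the queried undesired span. $r$ can be smaller than $I$, e.g.\ when $\mathbf H^\top$ generates a GRS code on the same evaluation points as a GRS storage code, and then every $I\times I$ minor vanishes.
\item Include $\prod_n\det C_n$ in the polynomial, since each $C_n$ must be invertible to recover $\pi_n(\mathcal V_n)W_{mn}$.
\end{itemize}
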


\begin{IEEEproof}
We have explained the correctness of the scheme through the squeezing phase. As for the privacy, the constructions of the query sets $\mathcal{V}_1,\mathcal{V}_2,\dots,\mathcal{V}_N$ and $\mathcal{U}_1,\mathcal{U}_2,\dots,\mathcal{U}_N$, as well as the random permutations $\{\mathbb{\pi}_n,\mathbb{\pi}_n':n\in[N]\}$, together guarantee that for $m\in[2]$ and any two colluding servers $n_1$ and $n_2$, it holds that
$$(Q_{n_1}^{[1]}(W_m), Q_{n_2}^{[1]}(W_m))\sim (Q_{n_1}^{[2]}(W_m), Q_{n_2}^{[2]} (W_m)).$$
Here the random permutations ensure that the colluding servers cannot learn anything from the relative orders of the query vectors inside each query set. The rate of the scheme can be computed as $\frac{L}{L+I}$. Plugging in $L={N\choose K}K$, $\sigma={N-2\choose K-2}$, and $\mu={N-2\choose K-1}$, and the theorem follows. 
\end{IEEEproof}

\begin{remark}
Note that for $(M,N,T,K)=(2,N,2,K)$, the FGHK-conjectured capacity is $\frac{N}{N+K+1}$. A routine comparison shows that Theorem \ref{th-1} provides more counterexamples to the FGHK-conjecture when $2N-K-1>(N-K)^2$.    
\end{remark}

\begin{remark} 
When $N=K$, there is no redundancy in the system and the capacity is trivially $\frac{1}{2}$. For $(M,N,T,K)=(2,N,2,N-1)$, the protocol constructed by Sun and Jafar in \cite{sun2017private} achieves the capacity of $\frac{N^2-N}{2N^2-3N+2}$. Their scheme can also be seen as a special case of our scheme, but it should be noted that the queried undesired symbols corresponding to query vectors of type $\widetilde{U}$ provides no redundancy.
\end{remark}

\begin{remark}
    When the generator matrix ${\bf G'}$ of the MDS storage code is not in systematic form, we could transform ${\bf G'}$ into the systematic form ${\bf G}$, and then use \cref{lem:G-H} to obtain the matrix ${\bf H}$ corresponding to ${\bf G}$. The matrix ${\bf H}$ could be directly used as described in the protocol above to construct the query vectors for the MDS storage system with generator matrix ${\bf G'}$, since Equation \eqref{redundancy} still holds and thus the squeezing phase still works. Therefore, our PIR scheme holds for any generator matrix, not necessarily in systematic form. 
\end{remark}  

\begin{remark}
In~\cite{holzbaur2021toward} it was proved that the FGHK conjecture is correct when restricting to PIR schemes which are linear and full support-rank. The term ``full support-rank" requires that for any set of $T$ servers, the query vectors regarding each file are linearly independent. The fact that our scheme (and also the results in \cite{sun2017private}) does not satisfy the full support-rank condition leads to breaking the barrier of the FGHK conjecture. 
\end{remark}

\subsection{The selection of $C_n$'s over a small field}\label{subsec:field}
In the work by Sun and Jafar~\cite{sun2017private}, for the case $(M,N,T,K)=(2,4,2,2)$, they set $I_n=2$ for all $n\in[4]$ and thus each server responds with 5 downloaded symbols: 2 desired, 2 undesired, and one paired-up summation. The matrices $\{C_n:n\in[4]\}$ are full-rank and should guarantee that the 8 downloaded undesired symbols could linearly generate all the queried undesired symbols, regardless of the how the three queries in each query set $\mathcal{U}_n$ are permuted. Since there are altogether $3!^4=1296$ possibilities, such matrices are difficult to find and finally Sun and Jafar only managed to find the desired matrices over a relatively large field $\mathbb{F}_{349}$. 

We observe that the field size can be dramatically reduced if we do not require the download cost to be uniform among all the servers. Continuing with the notations in Example~\ref{emp:SunJafar} and our general framework, let $
I_1 = 3, ~ I_2 = 3,  ~I_3 = 2, ~ I_4 = 0,$ and 
\begin{align*}
    C_1=C_2=C_4=\left(\begin{array}{ccc}
         1 & 0&0 \\
         0&1&0\\
         0&0&1
    \end{array}\right),\quad
    C_3=\left(\begin{array}{ccc}
        1 & 1&1 \\
         0&1&2\\
         0&1&1
    \end{array}\right).
\end{align*} 
In this way, the scheme can be defined simply over  $\mathbb{F}_3$. The user essentially collects 8 downloaded undesired symbols: 
$$\{Z_1\mathbf{c},(U_1+U_2)\mathbf{c},(U_3+U_4)\mathbf{c},Z_1\mathbf{d},(U_1+2U_2)\mathbf{d},(U_3+2U_4)\mathbf{d},F_1(\mathbf{c+d}),F_2(\mathbf{c+d})\},$$
where $F_1,F_2$ are the first two vectors of $C_3\pi_3'(\mathcal{U}_3)$. It is routine to check that for any $\pi_3'\in S_3$, these 8 downloaded undesired symbols could linearly generate all the queried undesired symbols $\{\mathbb{\pi}_n'(\mathcal{U}_n)W_{2n}:n\in[4]\}$ as expected. The interferences of the undesired file in the 4 summation symbols (1 from Server III and 3 from Server IV) can be eliminated, and thus the user successfully collects all the queried desired symbols.

One may argue that the nonuniform download cost among servers is a blemish in our scheme. However, we have at least three reasons to keep the non-uniformity. First, it should be noted that for certain parameters the value $I$ is not necessarily  divisible by $N$ (for example, in Subsection \ref{(2,5,2,3)}, for $(M,N,T,K)=(2,5,2,3)$ we have $I=21$ which is not  divisible by $N=5$) and thus it is natural to have nonuniform download cost, especially when we want to minimize the sub-packetization level of the scheme. Second, if one insists on constructing a PIR scheme with uniform download cost among servers, one could just repeat our scheme $N$ times (and thus the sub-packetization level is also $N$ times larger) and cyclically permute the servers in terms of their download cost, which would finally lead to a PIR scheme with uniformity. Third, and most importantly, setting $I_n$'s in a uniform way would require each server to contribute some downloaded undesired symbols (essentially, a linear subspace of the linear space spanned by the queried undesired symbols on that server), and thus we need to coordinate all the matrices $C_n$'s to guarantee that the downloaded undesired symbols from all servers linearly generate all the queried undesired symbols, regardless of the permutations on each query set. A total of ${(L/N)!}^N$ realizations of permutations should be considered, and this exponentially explosive number would make a brute-force search for the matrices computationally infeasible. In fact, as we will demonstrate now, the key to reduce the size of the field is a result of the nonuniform download cost. Now we explain how to choose the values of $I_n$'s and the matrices $C_n$'s in general.

Recall the query sets regarding the undesired file in our scheme. For the queried undesired symbols $\{Z_iW_{m^cn}:i\in[\sigma],n\in[N]\}$, we can simply download only from the first $K$ servers and consider those on the remaining $N-K$ servers as redundancy. For each $j\in[\mu]$, the queried undesired symbols $\{\widetilde{U}_{n,j} W_{m^c n} : n \in [N]\}$ contain at least $\zeta=\max\{1,N-2K\}$ redundant symbols. Without loss of generality, by reordering the servers, we consider the last $\zeta$ servers as redundancy. Then we can download only from the first $N-\zeta=\min\{N-1,2K\}>K$ servers. As a result, we set the values of $I_n$'s as follows:
$$I_n=\begin{cases}
    {N-1\choose K-1}, & \mbox{if } 1\leq n \leq K, \\
    {N-2\choose K-1}, & \mbox{if } K < n \leq N-\zeta, \\
    0, & \mbox{if } N-\zeta <  n \leq N.
        \end{cases}
$$
Each of the first $K$ servers produces ${N-1\choose K-1}$ downloaded desired symbols and ${N-1\choose K-1}$ downloaded undesired symbols. Each of the last $\zeta$ servers produces ${N-1\choose K-1}$ pairwise summation symbols. For these servers, we can simply choose $C_n$ as the identity matrix. As for the intermediate $N-\zeta-K$ servers, each server produces ${N-2\choose K-1}$ downloaded desired symbols, ${N-2\choose K-1}$ downloaded undesired symbols, as well as ${N-2\choose K-2}$ pairwise summation symbols. We need to design the matrices $C_n$'s, for $K<n\leq N-\zeta$, such that the downloaded undesired symbols from all servers could linearly generate all the queried undesired symbols $\{\mathbb{\pi}_n'(\mathcal{U}_n)W_{m^c n}: n\in[N]\}$, regardless of the permutations used on all the query sets. This can be done by choosing the matrices $C_n$'s, for $K<n\leq N-\zeta$, as invertible Vandermonde matrices of order ${N-1\choose K-1}$. Formally, we have the following theorem.
\begin{theorem}
The $I_n$'s and $C_n$'s defined above satisfy the following conditions:
\begin{enumerate}
    \item[{1.}] For every $n\in[N]$, the matrix $C_n$ is full-rank.
    \item[{2.}] $\sum_{n\in[N]}I_n=I$.
    \item[{3.}] The union of the downloaded undesired symbols can linearly generate all the queried undesired symbols 
   $\{\mathbb{\pi}_n'(\mathcal{U}_n) W_{m^c n} : n \in [N]\}$, regardless of the random permutations on the query sets over all servers. 
\end{enumerate}
\end{theorem}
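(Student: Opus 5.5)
Conditions 1 and 2 are direct verifications, and condition 3 carries the substance. For condition 1, the $C_n$ with $n\le K$ or $n>N-\zeta$ are identity matrices. For $K<n\le N-\zeta$, $C_n$ is a square Vandermonde matrix of order ${N-1\choose K-1}$ on distinct evaluation points, which requires $q\ge {N-1\choose K-1}$; its determinant is then a product of nonzero differences. For condition 2, sum the $I_n$:
\[
K{N-1\choose K-1}+(N-\zeta-K){N-2\choose K-1}.
\]
We compare this with $I=L-\sigma(N-K)-\mu\zeta$. Write $L=N{N-1\choose K-1}$ and ${N-1\choose K-1}=\sigma+\mu$. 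Then
\[
L-\sigma(N-K)-\mu\zeta=K(\sigma+\mu)+(N-K)\mu-\mu\zeta,
\]
which is exactly the sum above, since $\mu={N-2\choose K-1}$.

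For condition 3, let $\Omega$ be the span of all queried undesired symbols. I treat the $Z$-part and the $\widetilde U$-part separately. The first $K$ servers download everything they hold, namely all of $Z_iW_{m^cn}$ and $\widetilde U_{n,j}W_{m^cn}$ for $n\le K$. By the MDS property, the symbols $Z_iW_{m^cn}$ for $n\in[K]$ determine $Z_i W_{m^c}$ as a vector, and hence $Z_iW_{m^cn}$ for every $n$. The last $\zeta$ servers are handled by \eqref{redundancy}. When $\zeta=1$, $\widetilde U_{N,j}W_{m^cN}$ is a linear combination of $\{\widetilde U_{n,j}W_{m^cn}:n<N\}$. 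When $N>2K$, the $2K$ symbols $\{\widetilde U_{n,j}W_{m^cn}:n\le 2K\}$ determine the $2K$ quantities $U_j\mathbf x_k,U_{\mu+j}\mathbf x_k$, because the corresponding $2K\times 2K$ coefficient system is invertible. I expect to check this invertibility by combining the row-MDS property of $\bf H$ with the systematic form of $\bf G$ (rows $K+1,\dots,2K$ of $\bf H$ contain ${\bf I}_2$ and rows of ${\bf H}''$), and it may require choosing which servers are designated redundant. It therefore suffices to show that the downloaded symbols recover every queried symbol on each intermediate server $K<n\le N-\zeta$.

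Fix such an $n$. Its queried undesired vector is $\pi_n'(\mathcal U_n)W_{m^cn}$, with coordinates $\{Z_iW_{m^cn}\}_{i\in[\sigma]}\cup\{\widetilde U_{n,j}W_{m^cn}\}_{j\in[\mu]}$ placed in positions unknown to the server. The $\sigma$ $Z$-coordinates are already known to the user from the previous step. After subtracting them, the user sees the first $I_n=\mu$ rows of $C_n$ applied to a vector whose only unknown entries lie in $\mu$ positions $P\subseteq[\mu+\sigma]$ determined by $\pi_n'$. Recovering these $\mu$ unknowns is equivalent to the $\mu\times\mu$ submatrix of the first $\mu$ rows of $C_n$ on the columns $P$ being invertible, for every $\mu$-subset $P$. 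The first $\mu$ rows of a Vandermonde matrix with distinct nodes form a $\mu\times(\sigma+\mu)$ Vandermonde matrix, every maximal minor of which is itself a Vandermonde determinant and hence nonzero.

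This minor condition is where the argument rests, and it is the step I would verify most carefully. The danger is the orientation of the indices. If the $C_n$ as written index the powers along columns, so that column $c$ is $(x_c^0,\dots,x_c^{\mu-1})^\top$, then every $\mu$-column selection gives a square Vandermonde matrix and the argument goes through. Otherwise I would transpose the convention. With this in hand, every intermediate server's undesired symbols are recovered, the redundancy relations supply the last $\zeta$ servers, and $\Omega$ is generated regardless of all permutations.
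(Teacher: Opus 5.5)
Your route is the paper's. The $Z$-symbols come from the first $K$ servers by the MDS property. Each intermediate server's $\widetilde U$-symbols are recovered by subtracting the known $Z$-interference and inverting a submatrix of $C_n$. The last $\zeta$ servers are covered by redundancy. Your algebra for condition 2 and your use of \eqref{redundancy} when $\zeta=1$ are correct. You are also right that $C_n$ must have every $\mu\times\mu$ minor of its first $\mu$ rows nonzero, not merely $\det C_n\neq 0$. The paper leaves this implicit in ``invertible Vandermonde matrix''. Its explicit choices have the orientation you need: for instance, $C_4$ in the $(2,5,2,3)$ example has its row indexed by $r=0,\dots,5$ equal to $(x^{r})_{x=0,\dots,5}$.

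The step that fails as you set it up is the case $N>2K$. You want the $2K\times 2K$ coefficient system of servers $1,\dots,2K$ to be invertible, so that these symbols determine all $U_j\mathbf{x}_k,U_{\mu+j}\mathbf{x}_k$.

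\textbf{It does not follow from your hypotheses.} The row-MDS property of $\mathbf{H}$ and the systematic form of $\mathbf{G}$ do not give invertibility; it is an extra polynomial condition on $\mathbf{H}''$. For $K=2$, $N=6$ the determinant is $g_{2,1}g_{1,2}\overline{h}_{1,1}\overline{h}_{2,2}-g_{1,1}g_{2,2}\overline{h}_{1,2}\overline{h}_{2,1}$, and row-MDS does not keep it away from zero.

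\textbf{It can fail for every choice of servers.} For GRS storage, take $\mathbf{H}^{\top}$ to generate a $\mathrm{GRS}_2(\boldsymbol{\alpha},\mathbf{w})$ code, normalized so that rows $K+1,K+2$ form $\mathbf{I}_2$. Choose $\mathbf{w}$ so that the $\pm1$ coefficient vector of \eqref{redundancy} is a dual codeword of the Schur product. This $\mathbf{H}$ then has exactly the form of Lemma~\ref{lem:G-H}. By Lemma~\ref{GRS}, the symbols $\{\widetilde U_{n,j}W_{m^cn}:n\in[N]\}$ span at most $K+1<2K$ dimensions. So no set of $2K$ servers can determine the $2K$ unknowns.

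\textbf{Spanning is enough.} Determining the unknowns is more than you need. The coefficient vector of $\widetilde U_{n,j}W_{m^cn}$ with respect to $(U_j\mathbf{x}_k,U_{\mu+j}\mathbf{x}_k)_{k\in[K]}$ is the Kronecker product of the $n$-th row of $\mathbf{H}$ with the $n$-th column of $\mathbf{G}$, and it does not depend on $j$. These $N$ vectors lie in $\mathbb{F}_q^{2K}$. Hence some $2K=N-\zeta$ of them span the same space as all $N$, simultaneously for every $j$. Relabel the servers so these come first. Any $K$ of them can serve as the fully downloaded servers, since the $Z$-part uses only the MDS property.

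This is the paper's ``without loss of generality, by reordering the servers, we consider the last $\zeta$ servers as redundancy''. With this replacement your proof is complete.
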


\begin{IEEEproof}
The first item is obvious. Since $I=L-\sigma(N-K)-\mu\cdot \zeta=N{N-1\choose K-1}-{N-2\choose K-2}(N-K)-{N-2\choose K-1}\zeta=K{N-2\choose K-2}+(N-\zeta){N-2\choose K-1}=K{N-1\choose K-1}+(N-\zeta-K){N-2\choose K-1}=\sum_{n\in[N]}I_n$, the second item follows. 

Regardless of how the query vectors are permuted, the symbols $\{Z_i W_{m^c n} : i\in [\sigma],~n \in [K]\}$ and $\{\widetilde{U}_{n,j} W_{m^c n} : j\in [\mu],~n \in [K]\}$ can be directly downloaded from the first $K$ servers. Clearly, due to the MDS property of the storage system, $\{Z_i W_{m^c n} : i\in [\sigma],~n \in [K]\}$ could linearly generate all the symbols in $\{Z_i W_{m^c n} : i\in [\sigma],~n \in [N]\}$. For the $n^{\mathrm{th}}$ server, $n \in \{K+1, \dots, N-\zeta\}$, the first $\mu={N-2\choose K-1}$ symbols of $C_n  \mathbb{\pi}_n'(\mathcal{U}_n) W_{m^c n}$ are directly downloaded, where $C_n$ is an invertible Vandermonde matrix. After eliminating the interferences of the symbols $\{Z_iW_{m^cn}:i\in[\sigma]\}$ from the $\mu$ downloaded undesired symbols on the $n^{\mathrm{th}}$ server, we could retrieve $\{\widetilde{U}_{n,j} W_{m^c n} : j\in [\mu]\}$. Thus, the user has downloaded $\{\widetilde{U}_{n,j} W_{m^c n} : j\in [\mu],~n \in [N-\zeta]\}$, which will then linearly generate $\{\widetilde{U}_{n,j} W_{m^c n} : j\in [\mu],~n \in [N]\}$. To sum up, the union of the downloaded undesired symbols can linearly generate all the queried undesired symbols $\{\mathbb{\pi}_n'(\mathcal{U}_n) W_{m^c n} : n \in [N]\}$. 
\end{IEEEproof}

Therefore, our PIR scheme could work on the finite field $\mathbb{F}_q$ where $q$ is the minimum prime power larger than or equal to ${N-1\choose K-1}$. Whether or not the field size can be further reduced is left for future research.

\section{There is still more to squeeze}\label{Sec:squeeze}

In Subsection \ref{subsec:main}, when we exploit the redundancy, we only claim that there are {\it at least} $\sigma(N-K)+\mu\cdot\max\{1,N-2K\}$ redundant symbols among the queried undesired symbols. Apparently, the more redundancy we can squeeze from the queried undesired symbols, the better PIR rate we could derive. The exact number of redundant symbols among the queried undesired symbols might depend on the storage code and the specific queries, and seems to be a rather non-trivial problem. In this section, we consider a special case when the storage code is a generalized Reed-Solomon (GRS) code, and propose a method to squeeze maximum redundancy. Note that restricting the MDS storage system to be GRS is already a common practice in the study of PIR, say for example, in the paper~\cite{freij2017private} proposing the FGHK conjecture.  
 
\begin{definition}[GRS codes]
Given a finite field $\mathbb{F}_q$, for integers $N$ and $K$ such that $1 \leq K \leq N \leq q$, choose $N$ distinct elements $\alpha_1, \alpha_2, \dots, \alpha_N \in \mathbb{F}_q$ and $N$ nonzero (not necessarily distinct) elements $v_1, v_2, \dots, v_N \in \mathbb{F}_q^*$. 
The generalized Reed-Solomon code $\mathrm{GRS}_{K}(\boldsymbol{\alpha}, \bf{v})$ is the set of codewords
\[
(v_1 f(\alpha_1), v_2 f(\alpha_2), \dots, v_N f(\alpha_N)),
\]
where $f(x)$ ranges over all polynomials of degree less than $K$ over $\mathbb{F}_q$. The minimum distance of this code is $d = N - K + 1$. 
\end{definition}

\begin{definition}[Schur Products]
Let $C_1, C_2 \subseteq \mathbb{F}_q^N$ be two linear codes. The \emph{Schur product} of $C_1$ and $C_2$, denoted $C_1 \star C_2$, is the linear code defined as the span of the component-wise (element-wise) products of codewords from $C_1$ and $C_2$, i.e.,
\[
C_1 \star C_2 = \mathrm{span}\{ \mathbf{c}_1 \star \mathbf{c}_2 | \mathbf{c}_1 \in C_1, \mathbf{c}_2 \in C_2 \},
\]
where $\mathbf{c}_1 \star \mathbf{c}_2$ denotes the component-wise product of codewords $\mathbf{c}_1$ and $\mathbf{c}_2$. 
\end{definition}

Let $\langle \mathbf{x}, \mathbf{y}\rangle$ be the inner product of two vectors. The next lemma is the key to the design of queries for GRS-coded systems, aiming for the maximum redundancy among queried undesired symbols. 

\begin{lemma}\label{GRS}
Let ${\bf G}_{K\times N}$ and ${\bf G'}_{T\times N}$ be the generator matrices of codes $C$ and $C'$, respectively, where $C=\mathrm{GRS}_{K}(\boldsymbol{\alpha}, \bf{v})$ and $C'=\mathrm{GRS}_{T}(\boldsymbol{\alpha}, \bf{w})$. For any two matrices $\mathbf{W}\in \mathbb{F}_q^{\ell \times K}$ and $\mathbf{U}\in \mathbb{F}_q^{\ell \times T}$, define 
    \begin{align*}
        (\mathbf{x}_1,\dots,\mathbf{x}_N)=\mathbf{W}{\bf G}=\begin{pmatrix}
            x_{1,1}& \cdots & x_{1,N}\\
            \vdots&\ddots&\vdots \\
            x_{\ell,1}&\cdots&x_{\ell,N}
        \end{pmatrix},  ~ ~ (\mathbf{y}_1,\dots,\mathbf{y}_N)=\mathbf{U}{\bf G'}=\begin{pmatrix}
            y_{1,1}&\cdots&y_{1,N}\\
            \vdots&\ddots&\vdots\\
            y_{\ell,1}&\cdots&y_{\ell,N}
        \end{pmatrix}.
    \end{align*}

Then the set $\{\langle {\mathbf x}_1,{\mathbf y}_1\rangle,\dots,\langle {\mathbf x}_N,{\mathbf y}_N\rangle\}$ contains exactly $\max\{N-K-T+1,0\}$ redundant symbols. 
\end{lemma}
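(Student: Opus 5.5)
The plan is to read ``redundant symbols'' as in the squeezing phase: the number of independent linear relations $\sum_{n}\lambda_n\langle\mathbf{x}_n,\mathbf{y}_n\rangle=0$ that hold identically, i.e.\ for all $\mathbf{W}$ and all $\mathbf{U}$. Equivalently, it is $N$ minus the dimension of the span of the vectors $\mathbf{s}(\mathbf{W},\mathbf{U})\triangleq(\langle\mathbf{x}_1,\mathbf{y}_1\rangle,\dots,\langle\mathbf{x}_N,\mathbf{y}_N\rangle)$ as $(\mathbf{W},\mathbf{U})$ ranges over $\mathbb{F}_q^{\ell\times K}\times\mathbb{F}_q^{\ell\times T}$. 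The whole argument reduces to showing that this span equals $\mathrm{GRS}_{\min\{K+T-1,N\}}(\boldsymbol{\alpha},\mathbf{v}\star\mathbf{w})$. Its dual then has dimension $N-\min\{K+T-1,N\}=\max\{N-K-T+1,0\}$, and the dual vectors $\boldsymbol{\lambda}$ are precisely the coefficient vectors of the identically satisfied relations.

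First, the upper bound on the span. Write $\mathbf{s}(\mathbf{W},\mathbf{U})=\sum_{i=1}^{\ell}(\mathbf{W}_i\mathbf{G})\star(\mathbf{U}_i\mathbf{G}')$, where $\mathbf{W}_i,\mathbf{U}_i$ are the $i$-th rows. Each $\mathbf{W}_i\mathbf{G}$ is a codeword $(v_nf_i(\alpha_n))_n$ with $\deg f_i<K$, and each $\mathbf{U}_i\mathbf{G}'$ is a codeword $(w_ng_i(\alpha_n))_n$ with $\deg g_i<T$. Their componentwise product is $(v_nw_n(f_ig_i)(\alpha_n))_n$ with $\deg f_ig_i\le K+T-2$. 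Hence $\mathbf{s}$ lies in $C\star C'\subseteq\mathrm{GRS}_{\min\{K+T-1,N\}}(\boldsymbol{\alpha},\mathbf{v}\star\mathbf{w})$. The cap at $N$ is needed because polynomials of degree $\ge N$ evaluated at $N$ points already give all of $\mathbb{F}_q^N$ (the $v_nw_n$ are nonzero and the $\alpha_n$ are distinct). This gives at least $\max\{N-K-T+1,0\}$ independent relations.

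Second, the matching lower bound on the span, which is the step needing care. I would take $\ell\ge1$ and set all rows except the first to zero, so that $\mathbf{s}=(v_nw_nf(\alpha_n)g(\alpha_n))_n$ for arbitrary $f$ of degree $<K$ and $g$ of degree $<T$. Choosing $f=x^a$ with $a<K$ and $g=x^b$ with $b<T$ produces every monomial $x^{a+b}$ with $0\le a+b\le K+T-2$, so the span contains the evaluation vectors of $1,x,\dots,x^{K+T-2}$ scaled by $v_nw_n$. This is all of $\mathrm{GRS}_{K+T-1}(\boldsymbol{\alpha},\mathbf{v}\star\mathbf{w})$ when $K+T-1\le N$, and all of $\mathbb{F}_q^N$ otherwise. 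A Vandermonde argument confirms that these evaluation vectors have the claimed rank, namely $\min\{K+T-1,N\}$.

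Finally, combining the two bounds, the relation space is exactly the dual of a GRS code of dimension $\min\{K+T-1,N\}$. This gives exactly $\max\{N-K-T+1,0\}$ redundant symbols. I would add the remark that the dual of a GRS code is itself a GRS code (with suitable column multipliers), so these relations can be written explicitly; this should be useful later when the queries are designed. The main obstacle is the interpretive one: ``exactly'' must mean that no extra relations hold for generic $\mathbf{U}$, and the lower-bound step via monomials is exactly where this is settled.
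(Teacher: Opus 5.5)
Your proposal is correct and follows the paper's route. The paper also writes the vector of inner products as $\sum_{i}(\mathbf{W}_i\mathbf{G})\star(\mathbf{U}_i\mathbf{G}')\in C\star C'$, but then simply cites \cite[Proposition 3]{freij2017private} for $C\star C'=\mathrm{GRS}_{\min\{K+T-1,N\}}(\boldsymbol{\alpha},\mathbf{v}\star\mathbf{w})$. You instead prove that identity inline (a degree count for one inclusion, the products $x^a\cdot x^b$ for the other) and make explicit that the realized vectors span all of $C\star C'$, which the paper leaves implicit.

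One small correction to your closing remark: the monomial step rules out extra relations holding identically in $(\mathbf{W},\mathbf{U})$, not extra relations for a generic \emph{fixed} $\mathbf{U}$. For a fixed $\mathbf{U}$, the relations valid for all $\mathbf{W}$ form $\bigl(C\star\{\mathbf{u}\mathbf{G}':\mathbf{u}\in\mathrm{rowspace}(\mathbf{U})\}\bigr)^{\perp}$. So exactness is guaranteed once $\mathrm{rank}\,\mathbf{U}=T$, which holds in the scheme since the columns of $\mathbf{U}$ come from the full-rank $S'$. It can fail when $\ell<T$: for example, $\ell=1$ with $T\ge 2$ and $K+T-1\le N$ yields $N-K$ relations.
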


\begin{IEEEproof}
    For $n\in[N]$, $\langle {\mathbf x}_n, {\mathbf y}_n\rangle=\sum_{i=1}^{\ell}x_{i,n}y_{i,n}$. We have 
    \begin{align*}
        \left(\langle {\mathbf x}_1,{\mathbf y}_1\rangle,\dots,\langle {\mathbf x}_N,{\mathbf y}_N\rangle\right)=\sum_{i=1}^{\ell}\left(x_{i,1}y_{i,1},\dots,x_{i,N}y_{i,N}\right) \in C\star C'. 
    \end{align*}
    By \cite[Proposition 3]{freij2017private}, we have $\mathrm{GRS}_{K}({\boldsymbol{\alpha}}, {\bf{v}})\star \mathrm{GRS}_{T}({\boldsymbol{\alpha}}, {\bf{w}})=\mathrm{GRS}_{\min\{K+T-1,N\}}(\boldsymbol{\alpha}, {\bf{v}}\star{\bf{w}})$ and the lemma follows.
\end{IEEEproof}

Look back on our PIR scheme in Section \ref{Sec:main} for $(M,N,T,K)=(2,N,2,K)$ where $N\geq K+2$.
If the storage code is $\mathrm{GRS}_{K}(\boldsymbol{\alpha}, \bf{v})$, we can take the transpose of ${\bf H}$ as the generator matrix of $\mathrm{GRS}_{2}(\boldsymbol{\alpha}, \bf{w})$. Then there are $N-K-1$ redundant symbols in the set $\{\widetilde{U}_{n, j}W_{m^cn}:n\in[N]\}$, for every $j\in[\mu]$. Hence, in total there are $\sigma(N-K)+\mu(N-K-1)$ redundant symbols among the queried undesired symbols $\{\mathbb{\pi}_n'(\mathcal{U}_n)W_{m^cn}:n\in[N]\}$. In other words, the queried undesired symbols span a linear space of dimension  $I\triangleq L-\sigma(N-K)-\mu(N-K-1)$. By substituting $L={N\choose K}K$, $\sigma={N-2\choose K-2}$, and $\mu={N-2\choose K-1}$, into the rate $R=\frac{L}{L+I}$, we obtain the following theorem.

\begin{theorem}\label{thm:GRS}
For $(M,N,T,K)=(2,N,2,K)$, $N\geq K+2$, if the storage code is a GRS code, then there exists a PIR scheme with rate $R=\frac{N^2-N}{N^2+KN-2K}$.
\end{theorem}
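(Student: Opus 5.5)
I will prove Theorem~\ref{thm:GRS} by running the scheme of Subsection~\ref{subsec:main} unchanged in its disguising phase and changing only the squeezing phase. Concretely, the file layout, the sets $\mathcal{V}_n$, the common vectors $Z_i$, and the random permutations are kept as before. The only change is the choice of the $N\times 2$ matrix ${\bf H}$ in Equation~(\ref{eqn:undesired}): I take ${\bf H}^\top$ to be a generator matrix of $\mathrm{GRS}_2(\boldsymbol{\alpha},{\bf w})$, with the same evaluation points as the storage code $\mathrm{GRS}_K(\boldsymbol{\alpha},{\bf v})$. Any two rows of ${\bf H}$ form a $2\times 2$ Vandermonde-type matrix scaled by nonzero $w_n$'s, so ${\bf H}$ is row-MDS. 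Privacy therefore follows verbatim from the proof of Theorem~\ref{th-1}: any two colluding servers see, for each file, $\sigma$ common vectors together with $2\mu$ further vectors, all linearly independent, and randomly permuted.

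For the redundancy count, fix $j\in[\mu]$ and write $\widetilde{U}_{n,j}W_{m^cn}=\langle \mathbf{x}_n,\mathbf{y}_n\rangle$. Here $\mathbf{x}_n$ is the $n$th column of $W_{m^c}{\bf G}$, and $\mathbf{y}_n$ is the $n$th column of the $\binom{N}{K}\times 2$ matrix $(U_j^\top\; U_{\mu+j}^\top){\bf H}^\top$. Lemma~\ref{GRS} with $T=2$ then shows that the vector $(\widetilde{U}_{n,j}W_{m^cn})_{n\in[N]}$ is a codeword of $\mathrm{GRS}_{K+1}(\boldsymbol{\alpha},{\bf v}\star{\bf w})$. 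Hence any $K+1$ of these symbols linearly generate the remaining $N-K-1$. Combining this with the $N-K$ redundant symbols in each family $\{Z_iW_{m^cn}\}_n$, the queried undesired symbols are generated by $I=L-\sigma(N-K)-\mu(N-K-1)$ of them. For correctness, only this upper bound on the dimension is needed, not exactness.

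For the combination strategy I will use the Vandermonde approach of Subsection~\ref{subsec:field}, with $\zeta$ replaced by $N-K-1$. Set
\begin{itemize}
\item $I_n=\binom{N-1}{K-1}$ for $n\le K$;
\item $I_{K+1}=\mu$;
\item $I_n=0$ for $n>K+1$.
\end{itemize}
Take $C_n$ to be the identity except for $C_{K+1}$, which is an invertible Vandermonde matrix. Retrieval then goes as follows.
\begin{enumerate}
\item The first $K$ servers give all $Z_iW_{m^cn}$ and $\widetilde{U}_{n,j}W_{m^cn}$ for $n\le K$.
\item By MDS, these give $Z_iW_{m^c,K+1}$.
\item Subtracting these from the $\mu$ downloaded combinations at server $K+1$ gives $\{\widetilde{U}_{K+1,j}W_{m^c,K+1}\}_j$. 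This works for every permutation $\pi'_{K+1}$ because any $\mu$ columns of the first $\mu$ rows of a Vandermonde matrix are independent.
\item The $K+1$ known evaluations per $j$ determine all the others via the GRS relation.
\end{enumerate}
Together these steps cancel all interference, and the desired symbols are recovered exactly as in Theorem~\ref{th-1}. The counting check is
\[
\sum_n I_n=K\binom{N-1}{K-1}+\binom{N-2}{K-1}=I,
\]
which is the identity from the earlier theorem with $N-\zeta=K+1$.

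It remains to compute the rate $L/(L+I)$. Substitute $L=N\binom{N-1}{K-1}$, $\sigma=\binom{N-2}{K-2}=\frac{K-1}{N-1}\binom{N-1}{K-1}$ and $\mu=\frac{N-K}{N-1}\binom{N-1}{K-1}$. This gives
\[
\frac{I}{\binom{N-1}{K-1}}=K+\frac{N-K}{N-1},
\]
so
\[
R=\frac{N}{N+K+\frac{N-K}{N-1}}=\frac{N^2-N}{N^2+KN-2K}.
\]
The main obstacle I expect is the identification step in Lemma~\ref{GRS}: the vectors $U_j,U_{\mu+j}$ must combine with the storage code so that the per-$j$ symbol vector genuinely lies in the Schur product. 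Writing $\widetilde{U}_{n,j}=H_{n1}U_j+H_{n2}U_{\mu+j}$ and expanding the inner product as a sum over coordinates of rank-one Schur products makes this explicit.
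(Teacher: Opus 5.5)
Your proposal is correct and follows the paper's route. You take ${\bf H}^\top$ as a generator of $\mathrm{GRS}_2(\boldsymbol{\alpha},{\bf w})$, use Lemma~\ref{GRS} to obtain $N-K-1$ redundant symbols for each $j$, and evaluate $R=L/(L+I)$ with $I=L-\sigma(N-K)-\mu(N-K-1)$. Your explicit combination strategy ($I_n=\binom{N-1}{K-1}$ for $n\le K$, $I_{K+1}=\mu$, and a Vandermonde $C_{K+1}$) is exactly the Subsection~\ref{subsec:field} construction with $\zeta=N-K-1$, which the paper uses implicitly, as in its $(2,5,2,2)$ example.
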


\begin{remark}
Note that after squeezing more redundancy, the rate above always exceeds the FGHK-conjectured capacity value $\frac{N}{N+K+1}$. Therefore, for $(M,N,T,K)=(2,N,2,K)$, $N\geq K+2$, when the storage code is a GRS code, our scheme provides an infinite class of counterexamples for the FGHK conjecture and also beats the state-of-the-art result in~\cite{d2019one} and~\cite{holzbaur2021toward}.    
\end{remark}

\begin{remark}
In particular, when further restricting to $K=2$, an easier way is as follows. Suppose the storage system is based on an arbitrary MDS matrix $\bf{G}$ (let the two row vectors be $\bf{g}_1$ and $\bf{g}_2$), then we can simply set $\bf{H}=\bf{G}^{\top}$. The Schur product of the two linear codes determined by $\bf{G}$ and $\bf{H}^{\top}$ can be spanned by $\{\bf{g}_1\star\bf{g}_1,\bf{g}_1\star\bf{g}_2,\bf{g}_2\star\bf{g}_2\}$ and is thus of dimension 3. Thus there are $N-3$ redundant symbols in the set $\{\widetilde{U}_{n, j}W_{m^cn}:n\in[N]\}$, for every $j\in[\mu]$. Hence we derive the same result as Theorem \ref{thm:GRS} for $K=2$.
\end{remark}

We present a concrete example for $(M,N,T,K)=(2,5,2,2)$. For this parameter our general scheme only has rate $10/17$, less than the FGHK-conjectured value $5/8$. When the storage code is GRS, by squeezing more we can beat the FGHK-conjectured value.

\begin{example}\label{ex(5,2,2)}
This example could work on the finite field $\mathbb{F}_5$. Each file is comprised of $L=20$ independent symbols from $\mathbb{F}_5$:
\begin{align*}
    W_1=({\bf a}_1, {\bf a}_2)\quad W_2=({\bf b}_1, {\bf b}_2),
\end{align*}
where ${\bf a}_1, {\bf a}_2,{\bf b}_1, {\bf b}_2\in \mathbb{F}_5^{10\times1}$. Each file is stored in the system via a $(5,2)$-MDS code. Let 
\begin{align*}
    {\bf G}=\left(\begin{array}{ccccc}
        1 &1&1&1&1  \\
         0&1&2&3&4 
    \end{array}\right),\; {\bf H}=\left(\begin{array}{cc}
      1   &0  \\
       1  &1\\
       1&2\\
       1&3\\
       1&4
    \end{array}\right).
\end{align*}
Both ${\bf G}$ and ${\bf H}^{\top}$ are generator matrices of GRS codes with the same evaluation points.
The storage is specified as
\begin{equation*}
 (W_{11},\dots, W_{15})=W_1{\bf G}, \quad (W_{21},\dots, W_{25})=W_2{\bf G}.   
\end{equation*}
where $W_{mn}\in \mathbb{F}_5^{10\times1}$ for $m\in[2],n\in[5]$.

\subsubsection{The disguising phase} 
Let $S$ and $S'$ be independently and uniformly chosen from the set of all full rank $10\times 10$ matrices over $\mathbb{F}_5$. Label the rows of $S$ as $\{V_i: i\in [10]\}$ and the rows of $S'$ as $\{Z_1, U_1, \dots, U_9\}$. Define
\begin{align*}
\mathcal{V}_1&=\{V_1, V_2, V_3, V_4\},\quad ~ \mathcal{U}_1=\{Z_1, \widetilde{U}_{1,1}, \widetilde{U}_{1,2},\widetilde{U}_{1,3}\},\\
\mathcal{V}_2&=\{V_1, V_5, V_6,V_7\},\quad ~\mathcal{U}_2=\{Z_1, \widetilde{U}_{2,1}, \widetilde{U}_{2,2},\widetilde{U}_{2,3}\},\\
\mathcal{V}_3&=\{V_2, V_5, V_8,V_9\},\quad ~\mathcal{U}_3=\{Z_1, \widetilde{U}_{3,1},\widetilde{U}_{3,2},\widetilde{U}_{3,3}\},\\
 \mathcal{V}_4&=\{V_3, V_6, V_8,V_{10}\},\quad\mathcal{U}_4=\{Z_1, \widetilde{U}_{4,1},\widetilde{U}_{4,2},\widetilde{U}_{4,3}\},\\
\mathcal{V}_5&=\{V_4, V_7, V_9,V_{10}\},\quad\mathcal{U}_5=\{Z_1, \widetilde{U}_{5,1},\widetilde{U}_{5,2},\widetilde{U}_{5,3}\},
\end{align*}
where 
   \begin{eqnarray*}
\begin{array}{l}
\left(\begin{array}{ccc}\widetilde{U}_{1,1} & \widetilde{U}_{1,2} & \widetilde{U}_{1, 3} \\ 
 
\vdots &\vdots & \vdots\\
\widetilde{U}_{5, 1} & \widetilde{U}_{5,2} & \widetilde{U}_{5, 3}\end{array}\right) 
\triangleq{\bf H}\left(\begin{array}{lll}U_{1} & U_2 & U_{3} \\
U_{4} & U_5 & U_{6}\end{array}\right).
\end{array}
\end{eqnarray*} 

Let $W_m$ be the desired file and  $W_{m^c}$ be  the undesired file. For $\forall n\in[5]$, $Q_n^{[m]} = (Q_n^{[m]}(W_m), Q_n^{[m]}(W_{m^c}))=(\mathbb{\pi}_n(\mathcal{V}_n),\mathbb{\pi}_n'(\mathcal{U}_n))$, where $\mathbb{\pi}_n$ and $\mathbb{\pi}_n'$ are independently and uniformly random permutations in $S_4$. After receiving the queries, the $n^{\mathrm{th}}$ server applies the query vectors to its stored contents $(W_{mn},\, W_{m^c n})$, producing $4$ queried desired symbols $\mathbb{\pi}_n(\mathcal{V}_n)W_{mn}$ and 
$4$ queried undesired symbols $\mathbb{\pi}_n'(\mathcal{U}_n)W_{m^cn}$.

%Now we briefly analyze how the queries keep the privacy by disguising. Here $\{\mathcal{V}_n:n\in[5]\}$ are the query sets for the desired file, where every two sets share exactly one common vector from $S$. Consequently, every two sets in $\{\mathcal{V}_n:n\in[5]\}$ share exactly one common vector, and this key property is imitated in $\{\mathcal{U}_n:n\in[5]\}$, the query sets for the undesired file, by the vector $Z_1$ appearing in all the sets in $\{\mathcal{U}_n:n\in[5]\}$. This property, together with the random permutations applied on each query set and the row-MDS property of $\bf H$, guarantees that any $T=2$ colluding servers can only observe that their query sets regarding either file have one common vector and altogether $7$ linear independent vectors in $\mathbb{F}_5^{10}$. Thus the roles of the desired and undesired files are disguised.  

\subsubsection{The squeezing phase}
There are $3$ redundant symbols among $\{Z_1W_{m^cn}:n\in[5]\}$ since the storage code is $(5,2)$-MDS. As for the other queried undesired symbols, the design of $\bf H$ helps us squeeze more. Let $W_{m^c}=({\bf x}_1,{\bf x}_2)$.
For $\forall j\in [3]$,
\begin{align*}
   \left( \begin{array}{l}
       \widetilde{U}_{1,j}{\bf x}_1   \\
          \widetilde{U}_{2,j}({\bf x}_1+{\bf x}_2) \\
           \widetilde{U}_{3,j}({\bf x}_1+2{\bf x}_2 )\\
        \widetilde{U}_{4,j}({\bf x}_1+3{\bf x}_2 )\\ 
         \widetilde{U}_{5,j}({\bf x}_1+4{\bf x}_2 ) 
    \end{array}\right)=\left(\begin{array}{cccc}
    1&0&0&0\\
    1&1&1&1\\
    1&2&2&4\\
    1&3&3&4\\
    1&4&4&1
    \end{array}\right)\left(
    \begin{array}{l}
    U_{j}{\bf x}_1\\
    U_{j+3}{\bf x}_1\\
    U_{j}{\bf x}_2\\
    U_{j+3}{\bf x}_2
    \end{array}\right).
\end{align*}
Note that the rank of the coefficient matrix in the equation above is only $3$. Thus, there are two redundant symbols among $\{ \widetilde{U}_{1,j}{\bf x}_1 ,\widetilde{U}_{2,j}({\bf x}_1+{\bf x}_2),\widetilde{U}_{3,j}({\bf x}_1+2{\bf x}_2 ),\widetilde{U}_{4,j}({\bf x}_1+3{\bf x}_2 ),\widetilde{U}_{5,j}({\bf x}_1+4{\bf x}_2 )\}$ for each $j\in[3]$. Hence, altogether there are $3+3\times 2=9$ redundant symbols among the queried undesired symbols. In other words, $\{\mathbb{\pi}_n'(\mathcal{U}_n)W_{m^cn}:n\in[5]\}$ span a linear space of dimension at most $I=11$.

The combination strategy is similar as before.
Specifically, we take $I_1=4,~I_2=4,~I_3=3,~I_4=0,~I_5=0$, and
\begin{align*}
    C_1=C_2=C_4=C_5=\left(\begin{array}{cccc}
         1 & 0&0&0 \\
         0&1&0&0\\
         0&0&1&0\\
         0&0&0&1
    \end{array}\right),\quad 
    C_3=\left(\begin{array}{cccc}
         1 & 1&1&1 \\
         0&1&2&3\\
         0&1&4&4\\
         0&1&3&2
    \end{array}\right).
\end{align*} 
Here $C_3$ is an invertible Vandermonde matrix, guaranteeing that the 11 downloaded undesired symbols could linearly generate all the queried undesired symbols, regardless of the permutation on the query set for the $3^{\mathrm{rd}}$ server. The rate of the scheme is $20/31$, beating the value  $10/17$ of our general result in Theorem \ref{th-1} and also the FGHK-conjectured value $5/8$.
\end{example}

%{\color{red} TBD} In fact, in Section xxx we will show that $20/31$ is the exact linear PIR capacity for $(M,N,T,K)=(2,5,2,2)$.

\section{Some generalizations of our scheme}\label{Sec:general}

In this section, we consider several possible generalizations of our disguise-and-squeeze MDS-TPIR scheme.

\subsection{$M$ could be arbitrary}

The generalization from two files to an arbitrary number of files is rather straightforward. In the disguising phase, $M$ families of query sets (one family for each file) are generated independently, where the sets have the same setup like $\mathcal{V}$ for the desired file and the sets have the same setup like $\mathcal{U}$ for all the $M-1$ undesired files. For each server, $M$ permutations have to be generated independently, each one for permuting the queries for each file. In the squeezing phase, the only difference is that in the combination strategy we have $M$-wise summations instead of pairwise summations. The interferences in the $M$-wise summations could be eliminated based on the downloaded undesired symbols for all the $M-1$ undesired files. In this way, for arbitrary $M$, Theorems \ref{th-1} and \ref{thm:GRS} can be extended as follows.

\begin{theorem}\label{thm:M}
For $(M,N,T=2,K)$ where $N\geq K+2$, there exists a PIR scheme with rate $R=\frac{N^2-N}{N^2-N+(M-1)(N^2-N+K^2-NK)}$ when $N\leq 2K$ and $R=\frac{N^2-N}{N^2-N+(M-1)(2NK-K^2-K)}$ when $N> 2K$. Moreover, if the storage code is a GRS code, then there exists a PIR scheme with rate $R=\frac{N^2-N}{N^2-N+(M-1)(NK+N-2K)}$.
\end{theorem}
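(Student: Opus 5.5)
The plan is to run the two-file scheme of Subsection~\ref{subsec:main} (and its GRS refinement from Section~\ref{Sec:squeeze}) with one independent undesired-type query family per undesired file. The desired file $W_m$ keeps the query sets $\mathcal{V}_1,\dots,\mathcal{V}_N$ built from a random full-rank matrix $S$. For each undesired file $W_{m'}$, $m'\neq m$, the user draws an independent full-rank matrix $S'_{m'}$ and builds sets $\mathcal{U}^{(m')}_n$ exactly as in \eqref{eqn:undesired}: the common vectors $Z^{(m')}_i$, $i\in[\sigma]$, together with $\widetilde{U}^{(m')}_{n,j}$ obtained from the matrix ${\bf H}$ of \cref{lem:G-H}, or from the GRS generator matrix in the GRS case. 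Every query set, for every file and every server, is permuted by an independent uniform permutation in $S_{L/N}$.

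Privacy follows by the same argument as in the proof of Theorem~\ref{th-1}. For any two colluding servers, the query pair for each file is, in either role, two sets of $L/N$ vectors that share exactly $\sigma$ vectors and span $2\binom{N-1}{K-1}-\sigma$ dimensions. The remaining vectors are uniformly random subject to this, using the full-rank randomness of $S$ and $S'_{m'}$ and the row-MDS property of ${\bf H}$. Since the families for different files are independent, the joint distribution of the colluding queries across all $M$ files is the product of identical marginals, and therefore does not depend on $\theta$.

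For the squeezing phase, server $n$ uses the same $I_n$ and the same $C_n$ as in Subsection~\ref{subsec:field}, applied to every file. It returns the following symbols.
\begin{itemize}
\item $I_n$ desired-type symbols $X_{n,1},\dots,X_{n,I_n}$.
\item For each of the $M-1$ other files $m'$, $I_n$ symbols $Y^{(m')}_{n,1},\dots,Y^{(m')}_{n,I_n}$.
\item $L/N-I_n$ $M$-wise sums $X_{n,s}+\sum_{m'}Y^{(m')}_{n,s}$.
\end{itemize}
The server does not know which file is desired, so it treats all files symmetrically. In practice it downloads $I_n$ symbols from each of the $M$ files, plus the summations. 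The user reads off the file labels from its own knowledge.

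The preceding theorem on $I_n$ and $C_n$ is applied separately to each undesired file. It shows that the downloaded $Y^{(m')}$ symbols generate all queried symbols of $W_{m'}$, whatever the permutations. The user can therefore cancel every undesired term in the $M$-wise sums and recover all queried desired symbols, and hence $W_m$. In the GRS case the same argument works with $I=L-\sigma(N-K)-\mu(N-K-1)$, using Lemma~\ref{GRS}, with the corresponding nonuniform $I_n$ (download from the first $K+1$ servers for the $\widetilde U$ part).

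The total download is $L+(M-1)I$, because the desired-type plus summation symbols number $L$ and each undesired file contributes $I$ symbols. This gives $R=L/(L+(M-1)I)$. With $L=N\binom{N-1}{K-1}$, the ratio $I/L$ was already computed in Theorems~\ref{th-1} and \ref{thm:GRS}: those give $L/(L+I)$, so $I/L$ equals the corresponding denominator minus $(N^2-N)$, divided by $N^2-N$. Explicitly, $I/L$ is $(N^2-N+K^2-NK)/(N^2-N)$ when $N\le 2K$, $(2NK-K^2-K)/(N^2-N)$ when $N>2K$, and $(NK+N-2K)/(N^2-N)$ in the GRS case. Substituting these ratios yields the three stated rates. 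The one point that needs care is the privacy claim across many files, namely that independence of the families makes the joint view factor. The rest is routine bookkeeping.
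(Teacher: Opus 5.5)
Your proposal is correct and follows essentially the same route as the paper's proof. The shared ingredients are independent $\mathcal{V}$-type and $\mathcal{U}$-type query families per file with independent permutations, the same $I_n$ and $C_n$ reused for every file with $M$-wise summations, and a total download of $L+(M-1)I$, which gives rate $L/(L+(M-1)I)$. Your added details are all consistent with what the paper leaves implicit: the product-form privacy argument, downloading the $\widetilde{U}$-part from the first $K+1$ servers in the GRS case (i.e.\ $\zeta=N-K-1$), and reading $I/L$ off Theorems~\ref{th-1} and~\ref{thm:GRS}.
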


\begin{IEEEproof}
The correctness and the privacy of the scheme are exactly the same as before. The rate of the scheme has changed from $\frac{L}{L+I}$ to $\frac{L}{L+(M-1)I}$. Here $I=L-\sigma(N-K)-\mu\zeta$ represents the number of downloaded undesired symbols for each of the $M-1$ undesired files. Similar as the previous analysis, for arbitrary MDS-coded system we have $\zeta=1$ when $N\leq 2K$ and $\zeta=N-2K$ when $N> 2K$, and for GRS-coded system we have $\zeta=N-K-1$. Plugging in $L={N\choose K}K$, $\sigma={N-2\choose K-2}$, and $\mu={N-2\choose K-1}$, and the theorem follows.   
\end{IEEEproof}

Unfortunately, we have to admit that the current scheme for arbitrary $M$ does not deserve much attention, since the rate is usually not satisfactory. However, if we consider the $P$-out-of-$M$ multi-file retrieval, then we can make some progress.

\subsection{The multi-file version}\label{subsec:multi}

The multi-file generalization also involves setting up $M$ families of query sets (one family for each file). Now the $P$ families of sets regarding the desired files have the same setup like $\mathcal{V}$, and the other $M-P$ families of sets regarding the undesired files have the same setup like $\mathcal{U}$. On the $n^{\mathrm{th}}$ server the permutations $\pi_{1,n},\dots,\pi_{M,n}$ are generated independently, where $\pi_{m,n}$ is used to permute the queries regarding the $m^{\mathrm{th}}$ file. The preset full rank matrix $C_n$'s are the same as in the single-file retrieval scheme, and they are used to transform the queried symbols of each file, either the desired $\pi_{m,n}(\mathcal{V}_n)W_{mn}$ or the undesired $\pi_{m,n}(\mathcal{U}_n)W_{mn}$, into $L/N$ symbols $X_{m,n,1},\dots,X_{m,n,{L/N}}$.

The vital difference lies in the combination function $\mathcal{L}'_n$ for each $n\in[N]$. For the $P$-out-of-$M$ multi-file retrieval, the combination function $\mathcal{L}'_n$ relies on a fixed generator matrix ${\bf D}_{P\times M}$ of an $(M, P)$-MDS code. Out of the $ML/N$ symbols $\{X_{m,n,i}:m\in[M],i\in[L/N]\}$, only the $MI_n$ symbols $\{X_{m,n,i}:m\in[M],i\in[I_n]\}$ are directly downloaded. The rest are combined into $P(L/N-I_n)$ pieces of $M$-wise summations defined by
$\left\{ \mathbf{D} (X_{1,n,i+I_n}, X_{2,n,i+I_n},\dots,X_{M,n,i+I_n})^{\top}: i\in\left[L/N-I_n\right]\right\}$. Once the interferences of the undesired files in these $M$-wise summations are eliminated based on the directly downloaded undesired symbols, we have $L/N-I_n$ systems of linear equations. Each system has $P$ unknown variables (corresponding to desired files) and $P$ equations. The MDS property of the matrix $\bf D$ allows the user to solve all the remaining queried desired symbols. The total number of downloaded symbols is $\sum_{n\in[N]} \left(P(L/N-I_n)+MI_n\right) = PL+(M-P)I$, and thus the rate of our multi-file scheme is $R=\frac{PL}{PL+(M-P)I}$. Plugging in $L=\tbinom{N}{K}K$ and $I=L-\sigma(N-K)-\mu\zeta$. Similar as the previous analysis, for arbitrary MDS-coded system we have $\zeta=1$ when $N\leq2K$ and $\zeta=N-2K$ when $N> 2K$, and for GRS-coded system we have $\zeta=N-K-1$. Then we have the following result.

\begin{theorem}
For $(M,N,T=2,K)$ where $N\geq K+2$, there exists a $P$-out-of-$M$ multi-file PIR scheme with rate $R=\frac{P(N^2-N)}{P(N^2-N)+(M-P)(N^2-N+K^2-NK)}$ when $N\leq 2K$ and $R=\frac{P(N^2-N)}{P(N^2-N)+(M-P)(2NK-K^2-K)}$ when $N> 2K$. Moreover, if the storage code is a GRS code, then there exists a $P$-out-of-$M$ multi-file PIR scheme with rate $R=\frac{P(N^2-N)}{P(N^2-N)+(M-P)(NK+N-2K)}$. 
\end{theorem}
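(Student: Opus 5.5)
The proof will establish three things: correctness, privacy, and the rate formula for the multi-file construction just described. The rate formula then reduces to the arithmetic already used in Theorem \ref{thm:M}.

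\emph{Correctness.} On server $n$, every file $m$ (desired or undesired) is transformed by the same full-rank $C_n$. Since the matrices $C_n$ and the values $I_n$ are the ones from the single-file scheme, the theorem of Subsection \ref{subsec:field} applies separately to each undesired file $m$. Thus the directly downloaded undesired symbols $\{X_{m,n,i}: n\in[N], i\in[I_n]\}$ linearly generate all queried undesired symbols $\pi_{m,n}(\mathcal{U}_n)W_{mn}$, whatever the permutations; in the GRS case this uses Lemma \ref{GRS} with $\zeta=N-K-1$. In particular every $X_{m,n,i}$ with $i>I_n$ is known for each undesired $m$. Subtracting these contributions from $\mathbf{D}(X_{1,n,i},\dots,X_{M,n,i})^\top$ leaves $P$ equations whose coefficient matrix is the $P\times P$ submatrix of $\mathbf{D}$ indexed by the desired files. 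This submatrix is invertible because $\mathbf{D}$ generates an $(M,P)$-MDS code, so the user recovers $X_{m,n,i}$ for every desired $m$. Combined with the directly downloaded desired symbols and the invertibility of $C_n$, this yields all of $\pi_{m,n}(\mathcal{V}_n)W_{mn}$. As in Section \ref{Sec:main}, each $V_i$ reaches $K$ servers of an MDS system and $S$ has full rank, so each desired file is recovered.

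\emph{Privacy.} Because the randomness $S$, $S'$ and the permutations are drawn independently for each file, the view of any two colluding servers factors into independent per-file components. For each file, the argument in the proof of Theorem \ref{th-1} shows that the joint distribution of the two permuted query sets is the same whether that file is of $\mathcal{V}$-type or $\mathcal{U}$-type. Hence the joint query distribution does not depend on which $P$-subset is desired. This equality of distributions, stated there for pairs of servers, is the step I expect to need the most care. The way to nail it down is to note that in both setups the pair consists of two uniformly permuted lists spanning $2\binom{N-1}{K-1}-\sigma$ independent uniformly random vectors with exactly $\sigma$ shared entries, which the full-rank uniform $S,S'$ and the row-MDS property of $\mathbf{H}$ give.

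\emph{Rate.} The download from server $n$ is $MI_n+P(L/N-I_n)$ symbols. Summing over $n$ gives $PL+(M-P)I$, so
\[
R=\frac{PL}{PL+(M-P)I}.
\]
Substitute $L=\binom{N}{K}K=N\binom{N-1}{K-1}$, $\sigma=\binom{N-2}{K-2}$, $\mu=\binom{N-2}{K-1}$ and $I=L-\sigma(N-K)-\mu\zeta$, with $\zeta=1$ for $N\le 2K$, $\zeta=N-2K$ for $N>2K$, and $\zeta=N-K-1$ for GRS codes. Then divide numerator and denominator by $\binom{N-2}{K-2}\cdot\frac{N-1}{K-1}\cdot\frac{1}{N}$, i.e.\ express everything as multiples of $\binom{N-2}{K-1}(N-1)/(N-K)$, exactly as in Theorem \ref{thm:M}. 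This gives $L\propto N^2-N$ and $I\propto N^2-N+K^2-NK$, $2NK-K^2-K$, or $NK+N-2K$ in the three cases; these are precisely the ratios $L:I$ implicit in Theorems \ref{th-1}, \ref{thm:GRS} and \ref{thm:M}, which completes the proof.
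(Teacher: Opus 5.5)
Your proposal is correct and follows the paper's argument: the same construction, cancellation of the undesired interference using the single-file $C_n$'s, recovery via invertibility of the $P\times P$ submatrices of $\mathbf{D}$, privacy from independent per-file queries with $\mathbf{D}$ and the $C_n$'s fixed, and the download count $PL+(M-P)I$. One small slip in the rate computation: the normalizing unit should be $\binom{N-1}{K-1}/(N-1)=\binom{N-2}{K-1}/(N-K)$, which gives $L=N(N-1)$, $\sigma=K-1$, $\mu=N-K$, whereas your two expressions equal $\binom{N-1}{K-1}/N$ and $\binom{N-1}{K-1}$ and disagree with each other; this is harmless because any common factor cancels in the ratio.
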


\begin{remark}
Now we compare our result with the multi-file PIR scheme of rate~$\frac{P\tbinom{N}{K}}{M\tbinom{N}{K}-(M-P)\tbinom{N-2}{K}}$ in~\cite{zhang2017private}. When the storage system uses an arbitrary MDS code, it is routine to check the two rates are the same when $N>2K$ and our new result is slightly better when $N\leq 2K$. When the storage code is GRS, then our new result always beats the scheme in~\cite{zhang2017private}.  
\end{remark}

Now we present an example for $(P,M,N,T,K)=(2,3,5,2,2)$ and for simplicity set the storage code to be GRS. Assume the user wants to retrieve both $W_1$ and $W_2$.
\begin{example}
 Each file is comprised of $L=20$ independent symbols from $\mathbb{F}_5$:
\begin{align*}
    W_1=({\bf a}_1, {\bf a}_2)\quad W_2=({\bf b}_1, {\bf b}_2),\quad  W_3=({\bf c}_1, {\bf c}_2),
\end{align*}
where ${\bf a}_1, {\bf a}_2,{\bf b}_1, {\bf b}_2, {\bf c}_1, {\bf c}_2\in \mathbb{F}_5^{10\times1}$. Each file is stored in the system via a $(5,2)$-MDS code. Let 
\begin{align*}
    {\bf G}=\left(\begin{array}{ccccc}
        1 &1&1&1&1  \\
         0&1&2&3&4 
    \end{array}\right),\; {\bf H}=\left(\begin{array}{cc}
      1   &0  \\
       1  &1\\
       1&2\\
       1&3\\
       1&4
    \end{array}\right).
\end{align*}
Both ${\bf G}$ and ${\bf H}^{\top}$ are generator matrices of GRS codes with the same evaluation points.
The storage is specified as
\begin{equation*}
 (W_{11},\dots, W_{15})=W_1{\bf G}, \quad (W_{21},\dots, W_{25})=W_2{\bf G},\quad (W_{31},\dots, W_{35})=W_3{\bf G}.   
\end{equation*}
where $W_{mn}\in \mathbb{F}_5^{10\times1}$ for $m\in[3],n\in[5]$.

Let $S$, $S'$ and $S''$ be independently and uniformly chosen from the set of all full rank $10\times 10$ matrices over $\mathbb{F}_5$. Label the rows of $S$ as $\{V_i: i\in [10]\}$, the rows of $S'$ as $\{V_i': i\in [10]\}$ and the rows of $S''$ as $\{Z_1, U_1, \dots, U_9\}$. Define
\begin{align*}
\mathcal{V}_1&=\{V_1, V_2, V_3, V_4\},\quad ~ \mathcal{V}_1'=\{V_1', V_2', V_3', V_4'\},\quad ~\mathcal{U}_1=\{Z_1, \widetilde{U}_{1,1}, \widetilde{U}_{1,2},\widetilde{U}_{1,3}\},\\
\mathcal{V}_2&=\{V_1, V_5, V_6,V_7\},\quad ~\mathcal{V}_2'=\{V_1', V_5', V_6',V_7'\},\quad ~\mathcal{U}_2=\{Z_1, \widetilde{U}_{2,1}, \widetilde{U}_{2,2},\widetilde{U}_{2,3}\},\\
\mathcal{V}_3&=\{V_2, V_5, V_8,V_9\},\quad ~ \mathcal{V}_3'=\{V_2', V_5', V_8',V_9'\},\quad ~ \mathcal{U}_3=\{Z_1, \widetilde{U}_{3,1},\widetilde{U}_{3,2},\widetilde{U}_{3,3}\},\\
 \mathcal{V}_4&=\{V_3, V_6, V_8,V_{10}\},\quad \mathcal{V}_4'=\{V_3', V_6', V_8',V_{10}'\},\quad \mathcal{U}_4=\{Z_1, \widetilde{U}_{4,1},\widetilde{U}_{4,2},\widetilde{U}_{4,3}\},\\
\mathcal{V}_5&=\{V_4, V_7, V_9,V_{10}\},\quad \mathcal{V}_5'=\{V_4', V_7', V_9',V_{10}'\},\quad \mathcal{U}_5=\{Z_1, \widetilde{U}_{5,1},\widetilde{U}_{5,2},\widetilde{U}_{5,3}\},
\end{align*}
where 
   \begin{eqnarray*}
\begin{array}{l}
\left(\begin{array}{ccc}\widetilde{U}_{1,1} & \widetilde{U}_{1,2} & \widetilde{U}_{1, 3} \\ 
 
\vdots &\vdots & \vdots\\
\widetilde{U}_{5, 1} & \widetilde{U}_{5,2} & \widetilde{U}_{5, 3}\end{array}\right) 
\triangleq{\bf H}\left(\begin{array}{lll}U_{1} & U_2 & U_{3} \\
U_{4} & U_5 & U_{6}\end{array}\right).
\end{array}
\end{eqnarray*}

Let $\mathcal{P}=\{1,2\}$ be the set of indices of the desired files.
For $n\in[5]$, $Q_n^{[\mathcal{P}]} = (Q_n^{[\mathcal{P}]}(W_1), Q_n^{[\mathcal{P}]}(W_{2}),Q_n^{[\mathcal{P}]}(W_{3}))=(\mathbb{\pi}_{1,n}(\mathcal{V}_n),\mathbb{\pi}_{2,n}(\mathcal{V}_n'),\mathbb{\pi}_{3,n}(\mathcal{U}_n))$,
where $\mathbb{\pi}_{1,n}$, $\mathbb{\pi}_{2,n}$ $\mathbb{\pi}_{3,n}$ are independently and uniformly random permutations in $S_4$. After receiving the queries, the $n^{\mathrm{th}}$ server applies the query vectors to its stored contents $(W_{1n},\, W_{2n},\, W_{3n})$, producing $8$ queried desired symbols $\mathbb{\pi}_{1,n}(\mathcal{V}_n)W_{1n},\mathbb{\pi}_{2,n}(\mathcal{V}_n')W_{2n}$ and 
$4$ queried undesired symbols $\mathbb{\pi}_{3,n}(\mathcal{U}_n)W_{3n}$.

From Example \ref{ex(5,2,2)}, there are $9$ redundant symbols among the queried undesired symbols, i.e., $\{\mathbb{\pi}_{3,n}(\mathcal{U}_n)W_{3n}:n\in[5]\}$ span a linear space of dimension at most $I=11$. Regarding the combination strategy, the $I_n$'s and $C_n$'s are the same as in Example~\ref{ex(5,2,2)}. For $n\in [5]$, a preset deterministic full rank matrix $C_n$ of order $4$ is applied to transform 
\begin{eqnarray*}
 (X_{1,n,1},\dots,X_{1,n,{4}}) &=& (C_n\mathbb{\pi}_{1,n}(\mathcal{V}_n)W_{1n})^{\top},\\ (X_{2,n,1},\dots,X_{2,n,{4}}) &=& (C_n\mathbb{\pi}_{2,n}(\mathcal{V}_n')W_{2n})^{\top},\\(X_{3,n,1},\dots,X_{3,n,{4}}) &=& (C_n\mathbb{\pi}_{3,n}(\mathcal{U}_n)W_{3n})^{\top}.
 \end{eqnarray*}
Pick an MDS matrix \begin{align*}
    {\mathbf D}=\left(\begin{array}{ccc}
        1 & 1&1 \\
        0 &1&2 
    \end{array}
    \right).
\end{align*}
A linear combination function $\mathcal{L}'_n$ is used to compress  $(X_{1,n,1},\dots,X_{1,n,4},X_{2,n,1},\dots,X_{2,n,4},X_{3,n,1},\dots,X_{3,n,4})$ into 
{\it downloaded desired symbols} $\{X_{1,n,1},\dots,X_{1,n,{I_n}},X_{2,n,1},\dots,X_{2,n,{I_n}}\}$, {\it downloaded undesired symbols} $\{X_{3,n,1},\dots,X_{3,n,{I_n}}\}$ and $2\cdot(4-I_n)$ pieces of {\it 3-wise summation symbols} $\left\{ \mathbf{D} (X_{1,n,i+I_n}, X_{2,n,i+I_n},X_{3,n,i+I_n})^{\top}: i\in\left[4-I_n\right]\right\}$. For example, $I_3=3$ and the two summation symbols produced by the $3^{\mathrm{rd}}$ server is $X_{1,3,4}+X_{2,3,4}+X_{3,3,4}$ and $X_{2,3,4}+2X_{3,3,4}$.

Regardless of which files are retrieved, ${\mathbf D}$ is fixed just like the $C_n$'s. Therefore, together with the previous privacy analysis on the queries, the multi-file PIR scheme also guarantees privacy (note that the colluding servers will know the number of desired files, i.e., the value of $P$, based on the dimension of $\mathbf{D}$, but this is allowed in the literature of multi-file PIR). The 11 downloaded undesired symbols $\bigcup_{n\in [5]}{\{X_{3,n,1},\dots,X_{3,n,{I_n}}\}}$ can linearly generate all the queried undesired symbols $\{\mathbb{\pi}_{3,n}(\mathcal{U}_n)W_{3n}:n\in[5]\}$, regardless of the random permutations on the query sets over all servers.
Thus the interferences of the undesired file in $\left\{ \mathbf{D} (X_{1,n,i+I_n}, X_{2,n,i+I_n},X_{3,n,i+I_n})^{\top}: i\in\left[4-I_n\right]\right\}$ can be eliminated. Due to the MDS property of the ${\mathbf D}$, $\{X_{1,n,i+I_n},X_{2,n,i+I_n}, i\in [4-I_n] \}$ can be recovered for each $n\in [5]$. Therefore, the user can collect all the desired symbols $\{\mathbb{\pi}_{1,n}(\mathcal{V}_n)W_{1n},\mathbb{\pi}_{2,n}(\mathcal{V}_n')W_{2n}: n\in [5]\}$ to recover $W_1$ and $W_2$. The rate of the scheme is $\frac{40}{51}$.
\end{example}

\subsection{Cyclically adjacent colluding sets}\label{subsec:cyc}

In certain practical applications, collusion among servers may be restricted. For example, servers that are geographically or logically adjacent are more likely to collude, rather than any two  servers colluding.
In this subsection, for $(M,N,K)=(2,N,K)$, we consider the case where only two cyclically consecutive servers could collude, i.e., the collusion pattern is $\{\{1, 2\}, \{2, 3\}, \dots, \{N-1, N\}, \{N, 1\}\}$. We will show how the rate could be further improved compared to our general scheme for $(M,N,T,K)=(2,N,2,K)$.

\subsubsection{Files and storage code}
Each file is comprised of $L=KN$ independent symbols from a sufficiently large finite field $\mathbb{F}_q$:
\begin{align*}
    W_1=({\bf a}_1, {\bf a}_2,\dots,{\bf a}_K),\quad W_2=({\bf b}_1, {\bf b}_2,\dots,{\bf b}_K),
\end{align*}
where ${\bf a}_1, \dots,{\bf a}_K,{\bf b}_1,\dots, {\bf b}_K\in \mathbb{F}_q^{N\times1}$ are vectors comprised of i.i.d. uniform symbols from $\mathbb{F}_q$. Each file is stored in the system via an $(N,K)$-MDS code with generator matrix ${\bf G}_{K\times N}$, i.e.,
\begin{align*}
(W_{11},\dots, W_{1N})=W_1{\bf G}, \quad (W_{21},\dots, W_{2N})=W_2{\bf G},
\end{align*}
where $W_{mn}\in \mathbb{F}_q^{N\times1}$ for $m\in[2],n\in[N]$.

\subsubsection{The disguising phase}
The user privately chooses two matrices, $S$ and $S'$, independently and uniformly from the set of all full rank $N\times N$ matrices over $\mathbb{F}_q$. Label the rows of $S$ as $\{V_i, i\in [N]\}$. Divide the rows of $S'$ into two parts and label them as $\{Z_i:i\in [K-1]\}$ and $\{U_i:i\in[N-K+1]\}$. 

Define $\mathcal{V}_1,\mathcal{V}_2,\dots,\mathcal{V}_N$ to be the query sets for the desired file as follows. Each $\mathcal{V}_n$ contains $K$ row vectors from $S$. Every $K$ sets with cyclically consecutive indices share a common row vector. As a result, every two cyclically adjacent colluding servers share exactly $K-1$  common row vectors and altogether $K+1$ linearly independent query vectors. 

Next, define $\mathcal{U}_1,\mathcal{U}_2,\dots,\mathcal{U}_N$ to be the query sets for the undesired file. Firstly, we put $\{Z_i:i\in [K-1]\}$ into all the sets $\mathcal{U}_1,\mathcal{U}_2,\dots,\mathcal{U}_N$. Then let 
\begin{equation*}
\left(
  \begin{array}{c}
    \widetilde{U}_{1} \\
    \widetilde{U}_{2} \\
    \vdots\\
    \widetilde{U}_{N}\\
  \end{array}
\right) \triangleq {\bf H} \left(
  \begin{array}{c}
    U_{1}  \\
    U_{2} \\
  \end{array}
\right),
\end{equation*}
and assign $\widetilde{U}_{n}$ to $\mathcal{U}_n$, for $n\in[N]$. Here the matrix $\bf H$ is from \cref{lem:G-H} with the row-MDS property. In this way, every two cyclically adjacent colluding servers share also exactly $K-1$ common query vectors and altogether $K+1$ linearly independent query vectors regarding the undesired file.

Except for the difference in building these query sets, the rest of the disguising phase is the same as before and thus omitted. Finally the $n^{\mathrm{th}}$ server produces $K$ queried desired symbols $\mathbb{\pi}_n(\mathcal{V}_n)W_{mn}$ and $K$ queried undesired symbols $\mathbb{\pi}_n'(\mathcal{U}_n)W_{m^cn}$.

\subsubsection{The squeezing phase}

The squeezing phase works in almost the same way as before. The only thing to note is the number of redundant symbols among the queried undesired symbols: $(K-1)(N-K)$ symbols among $\{Z_iW_{m^cn}:n\in[N],i\in[K-1]\}$ and another $\zeta$ symbols among $\{\widetilde{U}_{n}W_{m^cn}:n\in[N]\}$. Similar as the previous analysis, for arbitrary MDS-coded system we have $\zeta=1$ when $N\leq 2K$ and $\zeta=N-2K$ when $N> 2K$, and for GRS-coded system we have $\zeta=N-K-1$. In total, the queried undesired symbols span a linear space of dimension at most $I\triangleq L-(K-1)(N-K)-\zeta$. 

The rest of the scheme is similar as the general scheme in \cref{Sec:main} and is thus omitted. Then we generalize Theorems \ref{th-1} and \ref{thm:GRS} as follows.

\begin{theorem}\label{thm:cyclic}
For $(M,N,K)=(2,N,K)$ where $N\geq K+2$ and the colluding sets only consist of every two cyclically adjacent servers, there exists a PIR scheme with rate $R=\frac{KN}{KN+K^2-K+N-1}$ when $N\leq 2K$ and $R=\frac{N}{N+K+1}$ when $N>2K$. Moreover, if the storage code is a GRS code, then there exists a PIR scheme with rate $R=\frac{KN}{KN+K^2+1}$. 
\end{theorem}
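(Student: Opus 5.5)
The proof will follow the template of Theorem~\ref{th-1}: establish correctness, privacy against each cyclically adjacent pair, and then compute the rate $\frac{L}{L+I}$ with $L=KN$ and $I=L-(K-1)(N-K)-\zeta$.

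\emph{Construction of the desired query sets.} Take $\mathcal{V}_n=\{V_n,V_{n-1},\dots,V_{n-K+1}\}$, with indices taken cyclically modulo $N$. Then each $V_i$ lies in exactly the $K$ consecutive sets $\mathcal{V}_i,\dots,\mathcal{V}_{i+K-1}$. Two adjacent sets $\mathcal{V}_n,\mathcal{V}_{n+1}$ share exactly $K-1$ vectors, since $N\geq K+2$ rules out wrap-around coincidences.

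\emph{Privacy.} Fix a colluding pair $\{n,n+1\}$. For the undesired file, the pair sees $\{Z_i\}\cup\{\widetilde U_n,\widetilde U_{n+1}\}$. By the row-MDS property of $\mathbf H$ with $T=2$, the vectors $\widetilde U_n,\widetilde U_{n+1}$ are an invertible transform of $(U_1,U_2)$. Hence these $K+1$ vectors are the image of $K+1$ rows of a uniformly random full-rank matrix $S'$ under a fixed invertible map, and are therefore distributed as $K+1$ rows of a uniformly random invertible matrix. The desired side has the same structure, namely $K-1$ common vectors plus one private vector per server, all rows of $S$. After the independent uniform permutations $\pi_n,\pi'_n$, the joint distributions of $(Q_n,Q_{n+1})$ coincide for both files.

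\emph{Correctness and the redundancy count.} Each $V_i$ is sent to $K$ servers, so the MDS property recovers $V_iW_m$, and hence $W_m$ because $S$ has full rank. It remains to show that the queried undesired symbols span a space of dimension at most $I$ and that the combination strategy works.
\begin{itemize}
\item Each $Z_i$ gives $N-K$ redundant symbols, which accounts for $(K-1)(N-K)$.
\item Equation~\eqref{redundancy} with $\mu=1$ gives at least one redundancy among $\{\widetilde U_nW_{m^cn}\}$. When $N>2K$ there are $N-2K$ redundancies, since all these symbols lie in the span of $2K$ symbols.
\item For GRS codes, Lemma~\ref{GRS} with $\mathbf H^\top$ chosen as the generator of $\mathrm{GRS}_2(\boldsymbol\alpha,\mathbf w)$ gives $N-K-1$ redundancies.
\end{itemize}

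\emph{Combination strategy and the main obstacle.} Set $I_n=K$ for $n\le K$. Set $I_n=1$ for $K<n\le N-\zeta$, using a Vandermonde (or generic) $C_n$ so that the first row has nonzero weight on every query. Set $I_n=0$ on the last $\zeta$ servers. The $Z$-symbols from the first $K$ servers generate all $Z_iW_{m^cn}$. Subtracting these from the single downloaded undesired symbol of server $n$ leaves a nonzero multiple of $\widetilde U_nW_{m^cn}$, whatever the permutation. The symbols $\widetilde U_nW_{m^cn}$ for $n\le N-\zeta$ then generate the rest. This gives $\sum_n I_n=I$, and every paired-up summation can be cleaned. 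I expect this step to be the main obstacle: one must check that the strategy is valid for all permutations and that $\sum_n I_n=I$ holds exactly in each case.

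\emph{Rate computation.}
\begin{itemize}
\item For $N\le 2K$, $\zeta=1$ and $L+I=2KN-(K-1)(N-K)-1=KN+K^2-K+N-1$.
\item For $N>2K$, $\zeta=N-2K$ and $L+I=KN+K^2+K$. Since $\frac{KN}{KN+K^2+K}=\frac{N}{N+K+1}$, this gives the stated rate.
\item For GRS, $\zeta=N-K-1$ and $L+I=KN+K^2+1$.
\end{itemize}
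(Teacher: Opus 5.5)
Your proposal is correct and follows essentially the paper's route. The paper also reuses the $T=2$ construction with $K-1$ common $Z$-vectors and a single $\widetilde U_n$ per server, counts $(K-1)(N-K)+\zeta$ redundancies, and defers to the Vandermonde combination strategy with $I_n\in\{K,1,0\}$ to get rate $\frac{L}{L+I}$; your explicit cyclic $\mathcal{V}_n$ and distributional privacy argument fill in details it omits.

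One point to tighten: for a non-GRS code with $N>2K$, the $\zeta$ servers with $I_n=0$ must be chosen so that the other $2K$ servers' $\widetilde U$-symbols span all of them (the paper says ``by reordering the servers''). This is not automatic for the literal last $N-2K$ servers, and the reordering is harmless for privacy because the combination strategy does not depend on the desired index.
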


\begin{example}
Let $(M,N,K)=(2,5,3)$ and the collusion pattern be $\{\{1, 2\}, \{2, 3\}, \{3, 4\}, \{4, 5\},\{5, 1\}\}$. Each file is comprised of $L=15$ independent symbols from $\mathbb{F}_5$:
\begin{align*}
    W_1=({\bf a}_1, {\bf a}_2,{\bf a}_3),\quad W_2=({\bf b}_1, {\bf b}_2,{\bf b}_3),
\end{align*}
where ${\bf a}_1, {\bf a}_2,{\bf a}_3,{\bf b}_1, {\bf b}_2,{\bf b}_3\in \mathbb{F}_5^{5\times1}$. Each file is stored in the system via a $(5,3)$-MDS code. The matrices $\mathbf{G},\mathbf{H}$ and the storage method are the same as in Subsection \ref{(2,5,2,3)}.

The user privately chooses two matrices, $S$ and $S'$, independently and uniformly from the set of all full rank $5\times 5$ matrices over $\mathbb{F}_5$. Label the rows of $S$ as $\{V_i, i\in [5]\}$. Divide the rows of $S'$ into two parts and label them as $\{Z_i:i\in [2]\}$ and $\{U_i:i\in[3]\}$. 
Define
\begin{eqnarray*}
\mathcal{V}_1=\{V_1, V_4,V_5\},&&\mathcal{U}_1=\{Z_1,Z_2,\widetilde{U}_{1}\},\\
\mathcal{V}_2=\{V_1, V_2,V_5\},&&\mathcal{U}_2=\{Z_1,Z_2,\widetilde{U}_{2}\},\\
\mathcal{V}_3=\{V_1, V_2, V_3\},&&\mathcal{U}_3=\{Z_1,Z_2,\widetilde{U}_{3}\},\\
 \mathcal{V}_4=\{V_2, V_3, V_4\},&&\mathcal{U}_4=\{Z_1,Z_2,\widetilde{U}_{4}\},\\
\mathcal{V}_5=\{V_3, V_4, V_5\},&&\mathcal{U}_5=\{Z_1,Z_2,\widetilde{U}_{5}\},
\end{eqnarray*}
where
\begin{equation*}
\left(
  \begin{array}{c}
    \widetilde{U}_{1} \\
    \widetilde{U}_{2} \\
    \vdots\\
    \widetilde{U}_{5}\\
  \end{array}
\right) \triangleq {\bf H} \left(
  \begin{array}{c}
    U_{1}  \\
    U_{2} \\
  \end{array}
\right).
\end{equation*}

Now among the queried undesired symbols there are at least 5 redundant symbols, 2 from $\{Z_1W_{m^cn}:n\in[5]\}$, 2 from $\{Z_2W_{m^cn}:n\in[5]\}$, and another 1 from $\{\widetilde{U}_{n}W_{m^cn}:n\in[5]\}$. Thus the queried undesired symbols span a linear space of dimension at most $I=10$. Set $I_1=3,~I_2=3,~I_3=3,~I_4=1,~I_5=0$ and
\begin{align*}
    C_1=C_2=C_3=C_5=\left(\begin{array}{ccc}
         1 & 0&0 \\
         0&1&0\\
         0&0&1
    \end{array}\right),\quad C_4=\left(\begin{array}{ccc}
          1 & 1&1\\
         0&1&2\\
         0&1&4
    \end{array}\right).
\end{align*}  
Thus, by the similar combination strategy as described before, we can achieve PIR rate $\frac{L}{L+I}=\frac{3}{5}$, which is better than the rate $\frac{10}{17}$ of the $(2,5,2,3)$-scheme in Subsection \ref{(2,5,2,3)}.
\end{example}

We close this subsection by briefly analyzing how Theorem \ref{thm:cyclic} performs better than Theorems \ref{th-1} and \ref{thm:GRS} for the setting where only two cyclically consecutive servers could collude. Essentially, by observing the structure of the query sets $\mathcal{V}$'s for desired files, one can see that the modified scheme of this subsection is in some sense a ``sub-scheme" of our general scheme in Section \ref{Sec:main}. For the purpose of disguising, the structure of the query sets $\mathcal{U}$'s for undesired files is also changed. The trick is that the ratio of redundant symbols among the queried undesired symbols becomes larger, due to the improved ratio of the query vectors of type $Z$ compared to the query vectors of type $\widetilde{U}$. It is expected that the results in this subsection might be further generalized to PIR against arbitrary collusion patterns.
%as long as we can carefully design the structure of $\mathcal{V}$ such that any set of colluding servers have a constant number of common query vectors towards the desired file.

\section{Some converse bounds}\label{sec:converse}

In this section we provide the information theoretic converse bound of MDS-TPIR for two settings. One is the case $(M,N,T,K)=(2,N,2,2)$, for which we show that our scheme for GRS-coded system in Theorem \ref{thm:GRS} achieves the linear PIR capacity. Here the notion ``linear capacity"~\cite{sun2017private} indicates that we only consider linear PIR schemes, in which the responses from servers must be inner products of query vectors and stored message vectors. The other is the case $(M,N,K)=(2,N,K)$ and any two cyclically adjacent servers could collude,  for which we also show that our scheme for GRS-coded system in Theorem \ref{thm:cyclic} achieves the capacity (not necessarily linear).

\subsection{The converse bound for $(M,N,T,K)=(2,N,2,2)$}\label{subsec:con-main}

Without loss of generality, we may always assume that the scheme has symmetry across servers{\footnote{For any scheme designed for a specific file size $L$, we may repeat the scheme for each of the $N!$ permutations of the servers to form a larger scheme designed for file size $N!L$, which has symmetry across servers.}}. Then in a linear PIR scheme, the download from each server is a function of $d\leq L/K=L/2$ linearly independent coded symbols from each file. In other words, to retrieve $W_m$, $m\in[2]$, the responses from the servers can be expressed as:
\begin{equation} \label{An=VW} 
A_n^{[m]}=V_{1n}^{[m]} W_{1n} + V_{2n}^{[m]} W_{2n}, \quad \operatorname{rank}(V_{1n}^{[m]})= \operatorname{rank}(V_{2n}^{[m]}) = d, \quad \forall n \in [N].
\end{equation}
Here, $V_{1n}^{[m]}$ and $V_{2n}^{[m]}$ are  matrices of size $\frac{D}{N} \times \frac{L}{2}$ over $\mathbb{F}_q$, generated randomly by the user, where $D$ represents the total download size. The constraint $\operatorname{rank}(V_{1n}^{[m]})= \operatorname{rank}(V_{2n}^{[m]}) = d$ is a consequence of the privacy requirement. Furthermore, let the notation $\mathbf{V}$ represent the linear space spanned by all row vectors of a matrix $V$ over $\mathbb{F}_q$. By the symmetry of the scheme and the privacy requirement against $T=2$ servers, for $m\in[2]$, the dimension of $\mathbf{V}_{1i}^{[m]}\cap \mathbf{V}_{1j}^{[m]}$ and $\mathbf{V}_{2i}^{[m]}\cap \mathbf{V}_{2j}^{[m]}$ must be equal, and we denote this value as $\dim(\mathbf{V}_{i\cap j})=\alpha d$, where $i,j \in [N], i\neq j$, and $0\leq \alpha \leq 1$. To recover all $L$ symbols of the desired file, it is necessary that $Nd \geq L$. Define $\epsilon \geq 0$ such that
\begin{align}\label{Nd=L}
    Nd = L(1 + \epsilon). 
\end{align}

Now it holds that
\begin{eqnarray}
  D\geq H(A_{1:N}^{[1]} | \mathcal{F}, \mathcal{G}) &\overset{\eqref{correctness}}{=}& H(A_{1:N}^{[1]}, W_1 | \mathcal{F}, \mathcal{G})\notag \\
&\geq& H(W_1 | \mathcal{F},\mathcal{G}) + H(A_1^{[1]}| W_1, \mathcal{F},\mathcal{G})+H(A_{2}^{[1]},A_{3}^{[1]} | W_1,A_1^{[1]},\mathcal{F}, \mathcal{G})\notag\\
&\overset{\eqref{W_{mn}}\eqref{F-W}\eqref{Answer}}{\geq}& H(W_1) + H(A_1^{[1]}| W_1, \mathcal{F},\mathcal{G})+H(A_{2}^{[1]},A_{3}^{[1]} | W_1,W_{21},\mathcal{F}, \mathcal{G})\notag \\
& = &H(W_1) + H(A_1^{[2]}| W_1, \mathcal{F},\mathcal{G})+H(A_{2}^{[2]},A_{3}^{[2]} | W_1,W_{21},\mathcal{F}, \mathcal{G})\label{eqn:15}  \\
&\overset{\eqref{W}\eqref{An=VW}}{=}&L+H(V_{21}^{[2]}W_{21}| W_1,\mathcal{F},\mathcal{G})+H(V_{22}^{[2]}W_{22},V_{23}^{[2]}W_{23}| W_1,W_{21},\mathcal{F},\mathcal{G})\notag\\
&=&L+H(V_{21}^{[2]}W_{21}| \mathcal{F},\mathcal{G})+H(V_{22}^{[2]}W_{22},V_{23}^{[2]}W_{22}| \mathcal{F},\mathcal{G}) \label{eqn:16} \\
&=&L+\dim(\mathbf{V}_{21}^{[2]})+\dim(\mathbf{V}_{22}^{[2]}+\mathbf{V}_{23}^{[2]})\notag\\
&=&L+d+\dim(\mathbf{V}_{22}^{[2]})+\dim(\mathbf{V}_{23}^{[2]})-\dim(\mathbf{V}_{2\cap3})\notag\\
&=&L+d+2d-\alpha d\notag \\
&=&L+(3-\alpha)d\label{L+(3-)d},
\end{eqnarray}
where (\ref{eqn:15}) follows from a more general fact as explained in~\cite[Lemma 2]{sun2017private} (intuitively, it means that the joint entropy of the downloads from up to $T$ servers should be equal regardless of which file is retrieved), and (\ref{eqn:16}) follows from the fact that $W_{23}$ is a linear combination of $W_{21}$ and $W_{22}$.

The next lemma describes a simple yet powerful fact: Whenever a fraction of the downloads regarding the desired file can be shown to be redundant, then its size is upper bounded by $\epsilon L$ since we only download at most $Nd=L(1+\epsilon)$ symbols for the desired file. For simplicity we state this lemma by assuming that $W_2$ is the desired file.

\begin{lemma}\label{lem:sumdim}
For $i\in [3:N]$, let $\mathcal{I}_{i-1}=[i-1]$. If there exists a linear subspace $\mathbf{V}_{i}\subseteq \mathbf{V}_{2i}^{[2]}$ with $H(\mathbf{V}_{i}W_{2i}| V_{2\mathcal{I}_{i-1}}^{[2]}W_{2\mathcal{I}_{i-1}}, \mathcal{F}, \mathcal{G})=0$, then $\sum_{i\in [3:N]}{\dim(\mathbf{V}_{i})}\leq \epsilon L$.
\end{lemma}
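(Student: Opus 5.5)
We argue by a dimension count inside the ambient space of $W_2$ itself. Since $W_2$ is the desired file, correctness together with the linearity in \eqref{An=VW} means that the user recovers $W_2$ from the desired-file components it effectively obtains. Concretely, the user collects the symbols $V_{2n}^{[2]}W_{2n}$ for $n\in[N]$, after the interference of $W_1$ has been cancelled. Hence the span of all these collected symbols, viewed as linear functionals of $W_2\in\mathbb{F}_q^{L}$, must have dimension at least $L$. Each $\mathbf{V}_{2n}^{[2]}W_{2n}$ contributes a subspace of functionals of dimension exactly $d$: $W_{2n}$ is a full-rank ($L/2$-dimensional) linear image of $W_2$ by the MDS property, and $\operatorname{rank}(V_{2n}^{[2]})=d$. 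The total collected dimension therefore satisfies
\[
L\le \dim\Big(\sum_{n\in[N]} \mathbf{V}_{2n}^{[2]}W_{2n}\Big)\le Nd = L(1+\epsilon).
\]

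Next we peel off the redundant parts one at a time. For each $i\in[3:N]$, choose a complement $\mathbf{V}_{2i}^{[2]}=\mathbf{V}_i\oplus\mathbf{R}_i$. The hypothesis $H(\mathbf{V}_iW_{2i}\mid V^{[2]}_{2\mathcal{I}_{i-1}}W_{2\mathcal{I}_{i-1}},\mathcal{F},\mathcal{G})=0$ says that, for every realization of $(\mathcal{F},\mathcal{G})$, the functionals $\mathbf{V}_iW_{2i}$ lie in the span of the functionals collected from servers $1,\dots,i-1$. Here we use that the files are uniform and independent of $\mathcal{F},\mathcal{G}$, so zero conditional entropy of linear functions is equivalent to span containment. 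Consequently
\[
\sum_{n\in[i]}\mathbf{V}_{2n}^{[2]}W_{2n}=\sum_{n\in[i-1]}\mathbf{V}_{2n}^{[2]}W_{2n}+\mathbf{R}_iW_{2i}.
\]
By induction on $i$, the full span is generated by the contributions of servers $1$ and $2$ together with $\mathbf{R}_iW_{2i}$ for $i\in[3:N]$. Its dimension is thus at most
\[
2d+\sum_{i=3}^N\big(d-\dim\mathbf{V}_i\big)=Nd-\sum_{i\in[3:N]}\dim\mathbf{V}_i .
\]

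Combining this with the lower bound $L$ gives $\sum_{i\in[3:N]}\dim(\mathbf{V}_i)\le Nd-L=\epsilon L$. The main point needing care is the first step: that correctness forces the collected desired-file functionals to span all $L$ dimensions. The user's total information about $W_2$ comes from $A_{1:N}^{[2]}$. In a linear scheme, after conditioning on $W_1$ (the worst case for the user, since knowing $W_1$ can only help), $A_n^{[2]}$ reveals at most the functionals $V_{2n}^{[2]}W_{2n}$. Hence
\[
L=H(W_2)\le H\big(\{V_{2n}^{[2]}W_{2n}\}_{n\in[N]}\mid\mathcal{F},\mathcal{G}\big),
\]
and this entropy equals the dimension of the span, computed per realization and averaged. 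The same conditional-entropy form also handles the randomness in the peeling step. We work throughout for a fixed realization of $(\mathcal{F},\mathcal{G})$, where the subspaces are deterministic, and average at the end.
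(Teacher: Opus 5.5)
Your proposal is correct and follows essentially the same route as the paper. The paper likewise uses correctness, after conditioning on $W_1$, to write $L=H(\{V_{2n}^{[2]}W_{2n}\}_{n\in[N]}\mid\mathcal{F},\mathcal{G})$, expands this by the chain rule over servers $1,\dots,N$, and bounds each term $H(V_{2i}^{[2]}W_{2i}\mid V_{2\mathcal{I}_{i-1}}^{[2]}W_{2\mathcal{I}_{i-1}},\mathcal{F},\mathcal{G})$ by $d-\dim(\mathbf{V}_i)$; that is exactly your complement-peeling step, written with entropies instead of per-realization span dimensions. One wording slip: conditioning on $W_1$ is the best case for the user, not the worst, and that is precisely why it gives a valid lower bound.
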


\begin{IEEEproof}
    Since $H(\mathbf{V}_{i}W_{2i}| V_{2\mathcal{I}_{i-1}}^{[2]}W_{2\mathcal{I}_{i-1}}, \mathcal{F}, \mathcal{G})=0$, it holds that\footnote{Here by abuse of notation, $V_{2\mathcal{I}_{i-1}}^{[2]}W_{2\mathcal{I}_{i-1}}$ denotes $\{V^{[2]}_{2j}W_{2j}:j\in\mathcal{I}_{i-1}\}$. The same holds for other similar notations.} \begin{equation}\label{3}  H(V_{2i}^{[2]}W_{2i}|\mathbf{V}_{i}W_{2i}, \mathcal{F},\mathcal{G})\geq H(V_{2i}^{[2]}W_{2i}|V_{2\mathcal{I}_{i-1}}^{[2]}W_{2\mathcal{I}_{i-1}},\mathcal{F},\mathcal{G}).
    \end{equation}
Thus, the dimension of $\mathbf{V}_{i}$ can be bounded by
\begin{align}
\dim(\mathbf{V}_{i})=&H(\mathbf{V}_{i}W_{2i}| \mathcal{F},\mathcal{G})\notag\\
=& H(V_{2i}^{[2]}W_{2i}|\mathcal{F},\mathcal{G})-H(V_{2i}^{[2]}W_{2i}|\mathbf{V}_{i}W_{2i}, \mathcal{F},\mathcal{G})\notag\\
\overset{\eqref{3}}{\leq}&
    d-H(V_{2i}^{[2]}W_{2i}| V_{2\mathcal{I}_{i-1} }^{[2]}W_{2\mathcal{I}_{i-1}}, \mathcal{F},\mathcal{G})\label{2}.
    \end{align}
Moreover,   
\begin{align}
    L&=H(A_{1:N}^{[2]}| W_1,\mathcal{F},\mathcal{G})\notag\\
    &=H(V_{21}^{[2]}W_{21}, \ldots,V_{2N}^{[2]}W_{2N}| \mathcal{F},\mathcal{G})\notag\\
     &= H(V_{21}^{[2]}W_{21},V_{22}^{[2]}W_{22}| \mathcal{F}, \mathcal{G})+\sum_{i\in [3:N] }H(V_{2i}^{[2]}W_{2i}| V_{2\mathcal{I}_{i-1} }^{[2]}W_{2\mathcal{I}_{i-1}}, \mathcal{F},\mathcal{G})\notag\\
    &= \sum_{i\in [1:2] }H(V_{2i}^{[2]}W_{2i}| \mathcal{F}, \mathcal{G})+\sum_{i\in [3:N] }H(V_{2i}^{[2]}W_{2i}| V_{2\mathcal{I}_{i-1} }^{[2]}W_{2\mathcal{I}_{i-1}}, \mathcal{F},\mathcal{G})\label{4}\\
    %&\quad +H(V_{24}^{[2]}W_{24}| V_{21}^{[2]}W_{21}, V_{22}^{[2]}W_{22},\mathcal{F},\mathcal{G})+H(V_{25}^{[2]}W_{25}| V_{21}^{[2]}W_{21}, V_{22}^{[2]}W_{22},\mathcal{F},\mathcal{G})\notag\\
    &=2d+\sum_{i\in [3:N] }H(V_{2i}^{[2]}W_{2i}| V_{2\mathcal{I}_{i-1} }^{[2]}W_{2\mathcal{I}_{i-1}}, \mathcal{F},\mathcal{G}),\label{1}
\end{align}
where \eqref{4} follows from  the independence of the contents on two servers from an $(N,2)$-MDS coded storage system. Combining \eqref{2} and \eqref{1}, it holds that $\sum_{i\in [3:N]}\dim(\mathbf{V}_{i}) \leq Nd-L=\epsilon L$. 
\end{IEEEproof}

To fully exploit Lemma \ref{lem:sumdim}, we pick the proper linear subspaces $\mathbf{V}_i\subseteq\mathbf{V}^{[2]}_{2i}$ as follows.

\begin{lemma}\label{lem:red}
 For $i\in [3:N]$, let $\mathcal{I}_{i-1}=[i-1]$. Let $\mathbf{V}_i=\left(\sum_{j=2}^{i-1}\mathbf{V}_{1\cap j }\right)\cap \mathbf{V}_{1\cap i }$,  then $H(\mathbf{V}_{i}W_{2i}| V_{2\mathcal{I}_{i-1}}^{[2]}W_{2\mathcal{I}_{i-1}}, \mathcal{F}, \mathcal{G})=0$.
\end{lemma}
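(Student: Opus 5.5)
The plan is to show directly that, for every $v\in\mathbf{V}_i$, the symbol $vW_{2i}$ is a deterministic linear function of the answers $\{V^{[2]}_{2j}W_{2j}: j\in\mathcal{I}_{i-1}\}$ together with $\mathcal{F},\mathcal{G}$. I read $\mathbf{V}_{1\cap j}$ as $\mathbf{V}^{[2]}_{21}\cap\mathbf{V}^{[2]}_{2j}$, the undesired-file query spaces of servers $1$ and $j$. Since the query matrices are functions of $\mathcal{F}$ by \eqref{F-Q}, conditioning on $\mathcal{F}$ fixes the subspaces $\mathbf{V}_{1\cap j}$. It therefore suffices to show that a basis of $\mathbf{V}_i$, applied to $W_{2i}$, can be computed from the conditioning variables; the conditional entropy is then $0$. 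Also $\mathbf{V}_i\subseteq\mathbf{V}_{1\cap i}\subseteq\mathbf{V}^{[2]}_{2i}$, so $\mathbf{V}_i$ is a legitimate subspace for Lemma \ref{lem:sumdim}.

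The main step uses the $(N,2)$-MDS structure, which is where $K=2$ enters. For any two distinct servers $1$ and $j$ (with $j\neq i$), the contents $W_{21},W_{2j}$ determine the file. Hence there are scalars $\lambda_j,\beta_j$, fixed by the storage code ${\bf G}$, such that
\[
W_{2i}=\lambda_j W_{21}+\beta_j W_{2j}.
\]
Concretely, column $i$ of ${\bf G}$ is a combination of columns $1$ and $j$, since any two columns form a basis of $\mathbb{F}_q^2$.

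Now take $v\in\mathbf{V}_i$ and write $v=\sum_{j=2}^{i-1}v_j$ with $v_j\in\mathbf{V}_{1\cap j}$. Each $v_j$ lies in $\mathbf{V}^{[2]}_{21}$, so $v_jW_{21}$ is a linear combination of the entries of $V^{[2]}_{21}W_{21}$. Each $v_j$ also lies in $\mathbf{V}^{[2]}_{2j}$, so $v_jW_{2j}$ is a linear combination of the entries of $V^{[2]}_{2j}W_{2j}$. Both are available because $1,j\in\mathcal{I}_{i-1}$, and the coefficients depend only on the queries, i.e.\ on $\mathcal{F}$. Therefore
\[
v_jW_{2i}=\lambda_j\,v_jW_{21}+\beta_j\,v_jW_{2j}
\]
is known, and summing over $j$ gives $vW_{2i}$. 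Applying this to a basis of $\mathbf{V}_i$ yields the claimed zero conditional entropy.

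The only delicate point is bookkeeping: the decomposition $v=\sum_j v_j$ and the coefficients $\lambda_j,\beta_j$ must be fixed once $\mathcal{F}$ is fixed, which holds since they depend only on the (known) query subspaces and on ${\bf G}$. The intersection with $\mathbf{V}_{1\cap i}$ is not needed for recoverability itself. It only guarantees $\mathbf{V}_i\subseteq\mathbf{V}^{[2]}_{2i}$, so that Lemma \ref{lem:sumdim} applies; presumably these dimensions are then summed via inclusion--exclusion to lower-bound $\epsilon$ in terms of $\alpha$.
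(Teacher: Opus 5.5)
Your proposal is correct and follows the paper's proof essentially verbatim. Like the paper, you decompose $v=\sum_{j=2}^{i-1}v_j$ with $v_j\in\mathbf{V}_{1\cap j}$, use the $(N,2)$-MDS relation $W_{2i}=a_{i,j}W_{21}+b_{i,j}W_{2j}$, and read off $v_jW_{21}$ and $v_jW_{2j}$ from the answers of servers $1$ and $j$; your remark that the intersection with $\mathbf{V}_{1\cap i}$ only serves to place $\mathbf{V}_i$ inside $\mathbf{V}^{[2]}_{2i}$ for Lemma~\ref{lem:sumdim} is also accurate.
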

\begin{IEEEproof}
For any vector $\mathbf{v} \in \mathbf{V}_i=\left(\sum_{j=2}^{i-1}\mathbf{V}_{1\cap j}\right)\cap \mathbf{V}_{1\cap i }$, there exist $\mathbf{v}_j\in \mathbf{V}_{1\cap j}$ for $j\in[2:i-1]$ such that $\mathbf{v}=\sum_{j=2}^{i-1}\mathbf{v}_j$. Moreover, by the MDS property \eqref{mds}, we know that for all $j\in[2:i-1]$, $W_{2i}=a_{i,j}W_{21}+b_{i,j}W_{2j}$ for some certain coefficients $a_{i,j},b_{i,j}\in\mathbb{F}_q^*$. Now it holds that
\begin{equation*}
\mathbf{v}W_{2i}=\sum_{j=2}^{i-1}\mathbf{v}_jW_{2i}
=\sum_{j=2}^{i-1}\mathbf{v}_j\left(a_{i,j}W_{21}+b_{i,j}W_{2j}\right)
=\sum_{j=2}^{i-1}a_{i,j}\mathbf{v}_jW_{21}+\sum_{j=2}^{i-1}b_{i,j}\mathbf{v}_jW_{2j}.
\end{equation*}
On one hand, since $\mathbf{v}_j\in\mathbf{V}^{[2]}_{21}$, $\mathbf{v}_jW_{21}$ is a linear combination of $\{V^{[2]}_{21}W_{21}\}$. On the other hand, since $\mathbf{v}_j\in\mathbf{V}^{[2]}_{2j}$, $\mathbf{v}_jW_{2j}$ is a linear combination of $\{V^{[2]}_{2j}W_{2j}\}$, $j\in[2:i-1]$. Thus, $\mathbf{v}W_{2i}$ is a linear combination of 
$\{V^{[2]}_{2j}W_{2j}:j\in [1:i-1]\}$ and hence
$H(\mathbf{V}_{i}W_{2i}| V_{2\mathcal{I}_{i-1}}^{[2]}W_{2\mathcal{I}_{i-1}}, \mathcal{F}, \mathcal{G})=0$.
\end{IEEEproof}
According to Lemma \ref{lem:sumdim} and Lemma \ref{lem:red}, one can directly obtain the following corollary:
\begin{corollary}\label{cor1} Following the previous notations, it holds that
    \begin{equation}
\sum_{i={3}}^{N} \dim \left(\left(\sum_{j=2}^{i-1}\mathbf{V}_{1\cap j }\right)\cap \mathbf{V}_{1\cap i}\right)\leq \epsilon L. 
    \end{equation}
\end{corollary}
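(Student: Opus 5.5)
This corollary follows by composing the two preceding lemmas, so the plan is a direct substitution. For each $i\in[3:N]$, I will set
\[
\mathbf{V}_i=\Big(\sum_{j=2}^{i-1}\mathbf{V}_{1\cap j}\Big)\cap \mathbf{V}_{1\cap i},
\]
and check that this is a legitimate choice for Lemma \ref{lem:sumdim}. That requires two facts: $\mathbf{V}_i\subseteq \mathbf{V}^{[2]}_{2i}$, and the conditional-entropy condition.

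For the containment, I will interpret $\mathbf{V}_{1\cap i}$ (for the desired file $W_2$, as in Lemma \ref{lem:sumdim}) as $\mathbf{V}^{[2]}_{21}\cap \mathbf{V}^{[2]}_{2i}$. This is the reading under which the proof of Lemma \ref{lem:red} treats $\mathbf{v}_j$ as lying in both $\mathbf{V}^{[2]}_{21}$ and $\mathbf{V}^{[2]}_{2j}$. Then $\mathbf{V}_i\subseteq \mathbf{V}_{1\cap i}\subseteq \mathbf{V}^{[2]}_{2i}$ is immediate.

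The entropy condition $H(\mathbf{V}_{i}W_{2i}\mid V_{2\mathcal{I}_{i-1}}^{[2]}W_{2\mathcal{I}_{i-1}},\mathcal{F},\mathcal{G})=0$ is exactly the conclusion of Lemma \ref{lem:red}. With both hypotheses of Lemma \ref{lem:sumdim} verified for every $i\in[3:N]$, that lemma gives $\sum_{i=3}^N\dim(\mathbf{V}_i)\le \epsilon L$. Substituting the definition of $\mathbf{V}_i$ yields the displayed inequality.

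There is no real obstacle here. The only point needing care is to keep the notation $\mathbf{V}_{1\cap j}$ consistent between Lemma \ref{lem:red} and Lemma \ref{lem:sumdim}, so that the subspaces used in the first are subspaces of $\mathbf{V}^{[2]}_{2i}$ as required by the second.
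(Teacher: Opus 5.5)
Your proposal is correct and takes the same route as the paper, which obtains the corollary directly by taking the subspaces $\mathbf{V}_i$ of Lemma~\ref{lem:red} as the subspaces in Lemma~\ref{lem:sumdim}. Your explicit check that $\mathbf{V}_i\subseteq\mathbf{V}_{1\cap i}=\mathbf{V}^{[2]}_{21}\cap\mathbf{V}^{[2]}_{2i}\subseteq\mathbf{V}^{[2]}_{2i}$ is the one hypothesis the paper leaves implicit. Your reading of the notation $\mathbf{V}_{1\cap i}$ also matches how it is used in the proof of Lemma~\ref{lem:red}.
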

Based on this inequality we are now ready to derive a relationship between $\alpha$ and $\epsilon$. 

\begin{lemma}\label{alpha,d}
Following the previous notations, it holds that $\alpha \leq \frac{1}{N-1}+\frac{N}{N-1}\cdot \frac{\epsilon}{1+\epsilon}$.
   %\begin{align}
    %24\alpha d&\leq 6d+11\epsilon L\\
  %\text{i.e.}, \quad   \alpha &\leq \frac{1}{4}+\frac{55}{24}\frac{\epsilon}%{1+\epsilon}\label{alpha}
%    \end{align}
\end{lemma}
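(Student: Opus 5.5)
The plan is to turn Corollary~\ref{cor1} into a bound on $\alpha$ by a telescoping dimension count. The corollary bounds the total overlap of the nested sums of the spaces $\mathbf{V}_{1\cap j}$; each of these spaces has dimension $\alpha d$ and lies inside the single $d$-dimensional space $\mathbf{V}_{21}^{[2]}$. Here $\mathbf{V}_{1\cap j}=\mathbf{V}_{21}^{[2]}\cap\mathbf{V}_{2j}^{[2]}$, as in the setup with $W_2$ desired. The $N-1$ spaces $\mathbf{V}_{1\cap j}$, $j\in[2:N]$, cannot fit into $\mathbf{V}_{21}^{[2]}$ without large overlaps unless $\alpha$ is small, and the overlaps are paid for by $\epsilon L$.

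\textbf{Step 1: telescope the intersection dimensions.} For $i\in[2:N]$ set $\mathbf{S}_i=\sum_{j=2}^{i}\mathbf{V}_{1\cap j}$. By the modular law for subspaces,
\begin{equation*}
\dim\left(\mathbf{S}_{i-1}\cap\mathbf{V}_{1\cap i}\right)=\dim\mathbf{S}_{i-1}+\dim\mathbf{V}_{1\cap i}-\dim\mathbf{S}_i=\dim\mathbf{S}_{i-1}+\alpha d-\dim\mathbf{S}_i .
\end{equation*}
Summing over $i\in[3:N]$ telescopes to
\begin{equation*}
(N-2)\alpha d+\dim\mathbf{S}_2-\dim\mathbf{S}_N=(N-1)\alpha d-\dim\mathbf{S}_N ,
\end{equation*}
using $\dim\mathbf{S}_2=\alpha d$.

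\textbf{Step 2: bound $\dim\mathbf{S}_N$.} Every $\mathbf{V}_{1\cap j}$ lies in $\mathbf{V}_{21}^{[2]}$, so $\mathbf{S}_N\subseteq\mathbf{V}_{21}^{[2]}$ and hence $\dim\mathbf{S}_N\le d$. Combining this with Corollary~\ref{cor1} gives
\begin{equation*}
(N-1)\alpha d-d\le\epsilon L .
\end{equation*}

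\textbf{Step 3: rewrite in terms of $d$.} By \eqref{Nd=L}, $L=Nd/(1+\epsilon)$, so $\epsilon L=\frac{N\epsilon}{1+\epsilon}\,d$. We may assume $d>0$; otherwise nothing is downloaded about the desired file, which contradicts $Nd\ge L>0$. Dividing by $(N-1)d$ then yields
\begin{equation*}
\alpha\le\frac{1}{N-1}+\frac{N}{N-1}\cdot\frac{\epsilon}{1+\epsilon},
\end{equation*}
which is the claimed bound.

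There is no real obstacle here. The only points needing care are checking that all the $\mathbf{V}_{1\cap j}$ sit inside the common space $\mathbf{V}_{21}^{[2]}$, and checking that the symmetry assumption indeed gives $\dim\mathbf{V}_{1\cap j}=\alpha d$ for every $j$. Once these hold, the rest is the telescoping identity.
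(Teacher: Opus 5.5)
Your proposal is correct and is essentially the paper's proof. The paper also uses $\sum_{i=2}^{N}\mathbf{V}_{1\cap i}\subseteq\mathbf{V}_{21}^{[2]}$ to get $d\ge\dim\bigl(\sum_{i=2}^{N}\mathbf{V}_{1\cap i}\bigr)=(N-1)\alpha d-\sum_{i=3}^{N}\dim\bigl((\sum_{j=2}^{i-1}\mathbf{V}_{1\cap j})\cap\mathbf{V}_{1\cap i}\bigr)\ge(N-1)\alpha d-\epsilon L$, then substitutes $L=Nd/(1+\epsilon)$; your telescoping of the subspace dimension formula (Grassmann's formula rather than the modular law, strictly speaking) simply spells out the middle equality that the paper states in one line.
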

\begin{IEEEproof}
Note that $\sum_{i=2}^{N}\mathbf{V}_{1\cap i}\subseteq \mathbf{V}_{21}^{[2]}$. Then it holds that
\begin{eqnarray}
d&=&\dim\left(\mathbf{V}_{21}^{[2]}\right)\geq \dim\left(\sum_{i=2}^{N}\mathbf{V}_{1\cap i}\right)
\notag\\
&=&\sum_{i=2}^{N}\dim\left(\mathbf{V}_{1\cap i}\right)-\sum_{i=3}^{N}{\dim\left(\left(\sum_{j=2}^{i-1}\mathbf{V}_{1\cap j }\right)\cap \mathbf{V}_{1\cap i}\right)}\notag\\
&\geq& (N-1)\alpha d-\epsilon L\notag.
\end{eqnarray}

Plugging in $L=\frac{Nd}{1+\epsilon}$, we derive that $\alpha \leq \frac{1}{N-1}+\frac{N}{N-1}\cdot \frac{\epsilon}{1+\epsilon}$.
\end{IEEEproof}

According to Lemma \ref{alpha,d}, we obtain 
\begin{align*}
   H(A_{1:N}^{[1]} | \mathcal{F}, \mathcal{G})&\overset{\eqref{L+(3-)d}}{\geq} L+(3-\alpha)d \\
   &\geq L+\left(3-\frac{1}{N-1}-\frac{N}{N-1}\cdot \frac{\epsilon}{1+\epsilon}\right)\frac{1+\epsilon}{N}L\\
   &=\left(1+\frac{3N-4}{N(N-1)}\right)L+\left(\frac{3N-4}{N}-1\right)\frac{\epsilon L}{N-1}.\\
   &\geq \frac{N^2+2N-4}{N^2-N}L
\end{align*}
Thus, the linear MDS-TPIR capacity for the parameters $(2,N,2,2)$ is at most $\frac{N^2-N}{N^2+2N-4}$, and if equality holds then $\epsilon$ will be forced to be zero. Recall that the rate in \cref{thm:GRS} achieved by our scheme for GRS-coded system is indeed $R(2,N,2,2)=\frac{N^2 - N}{N^2 + 2N - 4}$. Thus, we obtain the following result.

\begin{theorem}
For $(M,N,T,K)=(2,N,2,2)$ and $N\geq 4$, the linear MDS-TPIR capacity is $\frac{N^2 - N}{N^2 + 2N - 4}$.
\end{theorem}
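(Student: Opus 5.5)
The theorem follows by matching an achievability result with the converse just derived. For achievability, I would invoke Theorem~\ref{thm:GRS} with $K=2$. Its hypothesis $N\geq K+2$ becomes exactly $N\geq 4$. Substituting $K=2$ into $\frac{N^2-N}{N^2+KN-2K}$ gives $\frac{N^2-N}{N^2+2N-4}$. The scheme there is linear, since every downloaded symbol is a linear combination of inner products of query vectors with stored coded symbols. This matches the notion of linear capacity used in this section.

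It remains to note that a GRS-coded system is available here, or more generally that the remark after Theorem~\ref{thm:GRS} applies. For $K=2$ we may take ${\bf H}={\bf G}^{\top}$ for an arbitrary MDS generator matrix ${\bf G}$. The Schur product of the code with itself is then spanned by ${\bf g}_1\star{\bf g}_1$, ${\bf g}_1\star{\bf g}_2$ and ${\bf g}_2\star{\bf g}_2$, so it has dimension at most $3$. Hence the same $N-3$ redundancies per column index $j$ are obtained. So the rate $\frac{N^2-N}{N^2+2N-4}$ is achievable by a linear scheme for any $(N,2)$-MDS storage.

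For the converse, I would take an arbitrary linear scheme and symmetrize it across servers; this is without loss of generality by the footnote. Its responses then have the form \eqref{An=VW}, with parameters $d,\alpha,\epsilon$ defined as above. I would chain three inequalities:
\begin{enumerate}
\item the entropy bound \eqref{L+(3-)d}, namely $D\geq L+(3-\alpha)d$;
\item the estimate $\alpha\leq \frac{1}{N-1}+\frac{N}{N-1}\cdot\frac{\epsilon}{1+\epsilon}$ from Lemma~\ref{alpha,d}, which rests on Corollary~\ref{cor1};
\item the substitution $d=\frac{(1+\epsilon)L}{N}$.
\end{enumerate}
After simplification this gives $D\geq \left(1+\frac{3N-4}{N(N-1)}\right)L+\frac{2N-4}{N}\cdot\frac{\epsilon L}{N-1}$.

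The main point to check is that the $\epsilon$-term has a nonnegative coefficient. Here $\frac{3N-4}{N}-1=\frac{2N-4}{N}\geq 0$ for $N\geq 2$, so it can be dropped. This gives $D\geq \frac{N^2+2N-4}{N^2-N}L$, and hence every linear rate is at most $\frac{N^2-N}{N^2+2N-4}$. Combining this with the achievability result closes the gap and proves the theorem.
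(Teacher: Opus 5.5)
Your proposal is correct and follows the paper's argument essentially step for step. Achievability comes from Theorem~\ref{thm:GRS} at $K=2$, together with the remark that $\mathbf{H}=\mathbf{G}^{\top}$ works for any $(N,2)$-MDS code. The converse chains \eqref{L+(3-)d}, Lemma~\ref{alpha,d} and $d=(1+\epsilon)L/N$, then drops the nonnegative term $\frac{2N-4}{N}\cdot\frac{\epsilon L}{N-1}$; your explicit check that this coefficient is nonnegative is the step the paper leaves implicit.
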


\subsection{The converse bound for $(M,N,K)=(2,N,K)$ where two cyclically adjacent servers collude} \label{subsec:con-cyc}

In this subsection, for $(M,N,K)=(2,N,K)$ where only any two cyclically adjacent servers could collude, we can prove that our scheme in Theorem \ref{thm:cyclic} does achieve the capacity (not necessarily linear).

\begin{theorem}
For $(M,N,K)=(2,N,K)$ where two cyclically adjacent servers collude, the capacity is $C= \frac{NK}{NK+K^2+1}$.
\end{theorem}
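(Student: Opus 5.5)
Achievability is already given by Theorem~\ref{thm:cyclic}: for a GRS storage code the cyclic scheme has rate $\frac{KN}{KN+K^2+1}$. (For $N>2K$ the arbitrary-MDS rate $\frac{N}{N+K+1}$ is weaker, so we use the GRS version.) What remains is the information-theoretic converse $D\geq L+\frac{K^2+1}{NK}L$ for an arbitrary, not necessarily linear, scheme with collusion pattern $\{\{n,n+1\}\}$ taken cyclically.

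As in Subsection~\ref{subsec:con-main}, I would first symmetrize across servers by repeating the scheme over the $N$ cyclic shifts, which preserve the collusion pattern. Then set $\delta\triangleq H(A_n^{[2]}\mid W_1,\mathcal{F},\mathcal{G})$, which is independent of $n$. Since $W_2$ must be decodable, $N\delta\geq L$. Write $N\delta=L(1+\epsilon)$, mirroring \eqref{Nd=L}.

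\textbf{First chain.} Expand
$$D\geq H(A^{[1]}_{1:N}\mid\mathcal{F},\mathcal{G})\geq L+\sum_{i}H(A^{[1]}_{n+i}\mid W_1,A^{[1]}_{n:n+i-1},\mathcal{F},\mathcal{G})$$
over $K$ cyclically consecutive servers $n,\dots,n+K-1$. Their $W_2$-contents are independent by the MDS property, so given $W_1,\mathcal{F},\mathcal{G}$ each conditional term reduces to a single-server quantity. A single server, and each adjacent pair, is a permitted colluding set, so the analogue of \cite[Lemma 2]{sun2017private} lets us replace $A^{[1]}$ by $A^{[2]}$ term by term. This gives $D\geq L+K\delta-(\text{overlap terms})$. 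I would keep track of the overlap: the answers of servers $n+1,\dots,n+K-1$ partly determine symbols of $W_2$ that are also seen at server $n+K$ and beyond.

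\textbf{Second chain.} This imitates Lemmas~\ref{lem:sumdim} and~\ref{alpha,d}. Under $\theta=2$, decompose $L=H(A^{[2]}_{1:N}\mid W_1,\mathcal{F},\mathcal{G})$ by the chain rule in cyclic order. The first $K$ servers contribute $K\delta$ exactly. Each later server $i>K$ contributes at most $\delta$ minus the information it shares with servers $1,\dots,i-1$, and that shared part is exactly what colluding adjacent servers force to be nonzero. For an adjacent pair $(n,n+1)$, privacy forces the dependence between $A_n$ and $A_{n+1}$ to look the same for the desired and the undesired file. Since the interference (undesired-file) queries must reproduce this overlap while being decodable from only $K$ independent coded pieces, the total redundancy among desired-file downloads is at most $\epsilon L$.

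Combining the two inequalities should give a linear program in $(\delta,\epsilon,\text{overlap})$ whose optimum is $D/L=1+\frac{K^2+1}{NK}$. I would check this against the extremal structure of the scheme: $(K-1)$ shared $Z$-type queries per server and one $\widetilde U$-type query, yielding $I=L-(K-1)(N-K)-(N-K-1)=K^2+1$ for $L=NK$.

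\textbf{Main obstacle.} The hard step is the second chain. The quantity to control is the pairwise overlap $\alpha\delta\triangleq H(A_n)+H(A_{n+1})-H(A_n,A_{n+1})$, where the entropies are $W_2$-parts under $\theta=2$. I need a lower bound on the interference of the form $K\delta+(K-1)\alpha\delta$-type savings lost, together with an upper bound on $\alpha$ in terms of $\epsilon$, as in Lemma~\ref{alpha,d}. I would first try the non-linear analogue of Lemma~\ref{lem:red}: bound the mutual information chained around the cycle, using that any $K$ consecutive servers determine the file, and use submodularity of entropy in place of subspace intersections. I expect the tight constant $K^2+1$ to come from treating the single "wrap-around" redundancy of the cycle separately from the $K-1$ shared directions.
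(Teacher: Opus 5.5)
Your achievability part is fine, but there is a genuine gap: the converse, which is the whole content of this theorem beyond Theorem~\ref{thm:cyclic}, is never derived, and the template you propose for it does not transfer.

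Your first chain over $K$ consecutive servers yields exactly $D\ge L+K\delta\ge L+KL/N$, which is only the non-colluding MDS-PIR bound. The $K$ coded pieces are independent, so no ``overlap terms'' arise. The missing $\frac{L}{NK}$ can only come from bringing an adjacent colluding pair into the expansion.

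Your second chain imitates Lemmas~\ref{lem:sumdim}--\ref{alpha,d}, and that does not carry over. Lemma~\ref{alpha,d} needs $\dim(\mathbf{V}_{1\cap i})=\alpha d$ for \emph{every} $i\neq 1$, i.e.\ every pair $\{1,i\}$ must collude. It also relies on subspace arguments for linear schemes with $K=2$. Under cyclic collusion only neighbours are constrained, so that star-of-pairs bound is unavailable.

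The unconditional overlap $H(A_n)+H(A_{n+1})-H(A_n,A_{n+1})$ is also the wrong quantity. The pair term in a useful expansion is conditioned on $W_{2,1:K-1}$, and that conditioning creates extra dependence between $A_j$ and $A_{j+1}$. So a bound on the unconditional overlap gives no lower bound on the conditional pair entropy. Your closing ``the linear program should give'' step is therefore unsupported.

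\textbf{The missing idea.} For each $j\in[K:N-1]$, expand using $W_1$, the servers $1,\dots,K-1$, and the adjacent pair $(j,j+1)$. Apply privacy for the singletons and for that pair to get
\begin{equation*}
D\ge L+\sum_{i=1}^{K-1}H(A_i^{[2]}\mid W_1,\mathcal{F},\mathcal{G})+H(A_j^{[2]},A_{j+1}^{[2]}\mid W_1,W_{2,1:K-1},\mathcal{F},\mathcal{G}).
\end{equation*}
Sum over $j$ and apply $H(A,B)+H(B,C)\ge H(A,B,C)+H(B)$ along the path $K,K+1,\dots,N$. This telescopes the pair terms into two pieces:
\begin{itemize}
\item $H(A^{[2]}_{K:N}\mid W_1,W_{2,1:K-1},\mathcal{F},\mathcal{G})=L/K$, by correctness and the MDS property;
\item $\sum_{j=K+1}^{N-1}H(A_j^{[2]}\mid W_1,\mathcal{F},\mathcal{G})$, since independence removes the conditioning on $W_{2,1:K-1}$.
\end{itemize}
Then average the resulting inequality over the $N$ cyclic shifts and use $\sum_n H(A_n^{[2]}\mid W_1,\mathcal{F},\mathcal{G})\ge L$. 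This gives $ND\ge NL+\frac{K^2+1}{K}L$, with no symmetrization of the scheme and no $\epsilon$ needed.

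In your own variables, use the conditional overlaps $\beta_j=I(A_j^{[2]};A_{j+1}^{[2]}\mid W_1,W_{2,1:K-1},\mathcal{F},\mathcal{G})$. The telescoping shows $\sum_{j=K}^{N-1}\beta_j\le (N-K+1)\delta-L/K$. Combined with $D\ge L+(K+1)\delta-\beta_j$ and $\delta\ge L/N$, this closes your LP. The bound comes from a path of adjacent pairs, not a star through one server. The $+1$ in $K^2+1$ comes from that $L/K$ term spread over the $N-K$ pairs, not from a wrap-around redundancy.
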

\begin{IEEEproof}For $j\in[K:N-1]$, we have\footnote{Denote $A_{1:K-1}^{[1]}=\{A_n^{[1]}: n\in[1:K-1]\}$ and $W_{2,1:K-1}=\{W_{2n}:n\in [1:K-1]\}.$ The same holds for other similar notations.}
\begin{eqnarray}
H(A_{1:N}^{[1]} | \mathcal{F}, \mathcal{G}) &\overset{\eqref{correctness}}{=}& H(A_{1:N}^{[1]}, W_1 | \mathcal{F}, \mathcal{G}) \notag\\
&\geq &H(W_1| \mathcal{F},\mathcal{G}) + H(A_{1:K-1}^{[1]}| W_1, \mathcal{F},\mathcal{G}) + H(A_{j}^{[1]},A_{j+1}^{[1]} | W_1, A_{1: K-1}^{[1]}, \mathcal{F}, \mathcal{G})\notag\\
&=&H(W_1) + \sum_{i=1}^{K-1}H(A_i^{[1]}| W_1, \mathcal{F},\mathcal{G})+ H(A_{j}^{[1]},A_{j+1}^{[1]} | W_1, A_{1: K-1}^{[1]},\mathcal{F}, \mathcal{G})\label{eqn:24}\\
&\overset{\eqref{W}\eqref{W_{mn}}\eqref{Answer}}{\geq}& L + \sum_{i=1}^{K-1}H(A_i^{[1]}| W_1, \mathcal{F},\mathcal{G})+ H(A_{j}^{[1]},A_{j+1}^{[1]} | W_1, W_{2,1:K-1},\mathcal{F}, \mathcal{G})\notag\\
&=&L + \sum_{i=1}^{K-1}H(A_i^{[2]}| W_1, \mathcal{F},\mathcal{G})+H(A_{j}^{[2]},A_{j+1}^{[2]} | W_1, W_{2,1:K-1},\mathcal{F}, \mathcal{G}),\label{eq:capacity}
\end{eqnarray}
where \eqref{eqn:24} follows from \eqref{F-W} and the independence of the contents on $K-1$ servers from an $(N,K)$-MDS coded storage system, and \eqref{eq:capacity} follows from a more general fact as explained in \cite[Lemma 2]{sun2017private}. Summing up the last term in Inequality \eqref{eq:capacity} for $j\in [K,N-1]$, we obtain
\begin{align}
\sum^{N-1}_{j=K}H(A_{j}^{[2]},A_{j+1}^{[2]} | W_1, W_{2,1:K-1},\mathcal{F}, \mathcal{G})\geq&H(A_{K:N}^{[2]} | W_1, W_{2,1:K-1},\mathcal{F}, \mathcal{G})+\sum_{j=K+1}^{N-1}H(A_{j}^{[2]} | W_1, W_{2,1:K-1},\mathcal{F}, \mathcal{G}) \label{eqn:26}\\
=&H(A_{K:N}^{[2]} | W_1, W_{2,1:K-1},\mathcal{F}, \mathcal{G})+\sum_{j=K+1}^{N-1}H(A_{j}^{[2]} | W_1,\mathcal{F}, \mathcal{G})\label{eqn:27}\\
=&\frac{L}{K}+\sum_{j=K+1}^{N-1}H(A_{j}^{[2]} | W_1,\mathcal{F}, \mathcal{G}).\label{eqn:28}
\end{align}
Here \eqref{eqn:26} follows from repeatedly performing the information inequality of the form $H(A,B)+H(B.C)\geq H(A,B,C)+H(B)$, \eqref{eqn:27} results from the independence of $W_{2,j}$ with $W_{2,1:K-1}$ in an $(N,K)$-MDS storage system for every $j\in[K+1:N-1]$, and \eqref{eqn:28} holds since $H(A_{K:N}^{[2]} | W_1, W_{2,1:K-1},\mathcal{F}, \mathcal{G})\overset{\eqref{Answer}}{=}H(A_{1:N}^{[2]} | W_1, W_{2,1:K-1},\mathcal{F}, \mathcal{G})\overset{\eqref{correctness}}{=}H(W_2 | W_1, W_{2,1:K-1},\mathcal{F}, \mathcal{G})=L/K$.

Combining  \eqref{eq:capacity} and \eqref{eqn:28}, it holds that 
\begin{equation}\label{eq:capacity2}
    H(A_{1:N}^{[1]} | \mathcal{F}, \mathcal{G})\geq  L + \sum_{i=1}^{K-1}H(A_i^{[2]}| W_1, \mathcal{F},\mathcal{G}) +\frac{1}{N-K}\left(\frac{L}{K}+\sum_{j=K+1}^{N-1}H(A_{j}^{[2]}| W_1,\mathcal{F}, \mathcal{G})\right).
\end{equation}

Now consider the permutation $\pi\in S_N$ where $\pi(i)=i+1$ for $i\in[N-1]$ and $\pi(N)=1$. Let $\pi^s$ denote the $s$-th power of $\pi$ and in particular $\pi^0$ is the identity permutation. By cyclically shifting the subscripts via $\pi^s$ on \eqref{eq:capacity2}, we have the following $N$ inequalities. For $0\leq s\leq N-1$,
\begin{align*}
H(A_{1:N}^{[1]} | \mathcal{F}, \mathcal{G})
    &\geq L + \sum^{K-1}_{i=1}H(A_{\pi^s(i)}^{[2]}| W_1, \mathcal{F},\mathcal{G}) 
 +\frac{1}{N-K}\left(\frac{L}{K}+\sum_{i=K+1}^{N-1}H(A_{\pi^s(i)}^{[2]} | W_1,\mathcal{F}, \mathcal{G})\right).
\end{align*}
Summing up the $N$ inequalities above, it holds that
\begin{align*}
NH(A_{1:N}^{[1]} | \mathcal{F}, \mathcal{G})
    &\geq NL + (K-1)\sum_{i=1}^{N} H(A_i^{[2]}| W_1, \mathcal{F},\mathcal{G})\\
&\quad +\frac{N}{N-K}\cdot\frac{L}{K}+\frac{N-K-1}{N-K}\sum_{i=1}^{N} H(A_i^{[2]}| W_1, \mathcal{F},\mathcal{G})\\
&\geq NL + \frac{KN-K^2-1}{N-K} H(A_{1:N}^{[2]}| W_1, \mathcal{F},\mathcal{G})
+\frac{N}{N-K}\cdot\frac{L}{K}\\
&=NL+\frac{KN-K^2-1}{N-K}\cdot L+\frac{N}{N-K}\cdot\frac{L}{K}\\
&=\frac{KN+K^2+1}{K}\cdot L.
\end{align*}
Dividing both sides of the equation above by $NL$, we finally obtain 
\begin{equation*}
    C\leq \frac{L}{H(A_{1:N}^{[1]} | \mathcal{F}, \mathcal{G})}\leq \frac{KN}{KN+K^2+1}.    
\end{equation*}

This upper bound, together with our scheme in Theorem~\ref{thm:cyclic}, determines that the capacity for the case $(M,N,K)=(2,N,K)$ where two cyclically adjacent servers could collude is exactly $C=\frac{KN}{KN+K^2+1}$.
\end{IEEEproof}

\begin{remark}
Note that for the MDS-TPIR model with $(M,N,T,K)=(2,N,2,K)$, since the privacy requirement is stronger than the model where only two cyclically adjacent servers could collude, the capacity $C(2,N,2,K)$ is then upper bounded by $\frac{KN}{KN+K^2+1}$. Recall that for $(M,N,T,K)=(2,N,2,K)$, $N\geq K+2$, Theorem \ref{thm:GRS} presents a scheme for the GRS-coded system with rate $\frac{N^2-N}{N^2+KN-2K}$, which is then a constructive lower bound of the capacity. Thus, the exact value of $C(2,N,2,K)$ is now between $\frac{N^2-N}{N^2+KN-2K}$ and $\frac{KN}{KN+K^2+1}$.  
\end{remark}

\section{When more servers collude}\label{Sec:Tgeq3}

Up until now in this paper we have fixed $T=2$. The generalization of our disguise-and-squeeze scheme to $T\geq 3$ is much more complicated. On one hand, in the disguising phase we will demonstrate that it is not enough to simply design query {\it sets} for each file due to the privacy requirement. Instead, we need to design query {\it spaces} for each file with certain properties for privacy. On the other hand, due to the form of the queries (say, the reduced row echelon form of the query spaces), to find a precise combination strategy in the squeezing phase becomes more difficult or even impossible. Thus, we may have to follow a similar approach as in~\cite{sun2017private} to employ a randomized combination strategy, at the cost of accepting a nonzero error probability for the retrieval correctness. 

In this section, we will start with an illustrative example for $(M,N,T,K)=(2,6,3,3)$, then present the scheme for $T=3$, and finally move on to discuss the general case with arbitrary $T$.

\subsection{An example for $(M,N,T,K)=(2,6,3,3)$}
 
\subsubsection{Files and storage code} Each file is comprised of $L=60$ independent symbols from a sufficiently large finite field $\mathbb{F}_q$:
\begin{align*}
    W_1=({\bf a}_1, {\bf a}_2,{\bf a}_3),\quad W_2=({\bf b}_1, {\bf b}_2,{\bf b}_3),
\end{align*}
where ${\bf a}_1, {\bf a}_2,{\bf a}_3, {\bf b}_1, {\bf b}_2,{\bf b}_3\in \mathbb{F}_q^{20\times1}$. Each file is stored in the system via a $(6,3)$-MDS code. Let 
\begin{align*}
    {\bf G}=\left(\begin{array}{cccccc}
        1 &0&0&1&1&1  \\
         0&1&0&1&2&4\\
         0&0&1&1&3&2
    \end{array}\right),\; {\bf H}=\left(\begin{array}{ccc}
      1   &1&1 \\
       1&2&4\\
       1&3&2\\
       1&0&0\\
       0&1&0\\
       0&0&1
    \end{array}\right).
\end{align*}
Both ${\bf G}$ and ${\bf H}$ are MDS over a finite field of characteristic $7$ (so $q$ is a power of $7$), and thus our protocol is also built over this field.
The storage is specified as
\begin{equation*}
(W_{11},\dots, W_{16})=W_1{\bf G},\quad (W_{21},\dots, W_{26})=W_2{\bf G},   
\end{equation*}
where $W_{mn}\in \mathbb{F}_q^{20\times1}$ for $m\in[2],n\in[6]$.

\subsubsection{The disguising phase}
Let $S$ and $S'$ be independently and uniformly chosen from the set of all full rank $20\times 20$ matrices. Label the rows of $S$ as $\{V_i: i\in [20]\}$. Label the rows of $S'$ as $\{Z_1\} \cup \{U_{1,j},U_{2,j},U_{3,j}:j\in[3]\} \cup\{U_\ell:\ell\in [10]\}$.
Let $\mathcal{V}_1, \mathcal{V}_2, \dots, \mathcal{V}_6$ be the query sets for the desired file, where each $\mathcal{V}_n$ contains $10$ row vectors from $S$, and every $3$ sets out of $\mathcal{V}_1, \mathcal{V}_2, \dots, \mathcal{V}_6$ share exactly one common row vector from $S$. 

Now here comes the vital difference between the schemes for $T=2$ and $T=3$. For $T=3$ we have to analyze the perspective of any three colluding servers. Except for one common vector shared by the three colluding servers, every two out of the three colluding servers share another 3 vectors, which do not appear in the third server. Can we set the the query sets $\mathcal{U}_n$'s for the undesired file  imitating the same properties as the $\mathcal{V}_n$'s, and at the same time aiming for more redundancies among the queried undesired symbols? Setting a common vector $Z_1$ in $\{\mathcal{U}_n:1\leq n\leq 6\}$ is a straightforward way to account for the common vector shared by any three colluding servers. However, now every row vector of $S'$ other than $Z_1$ can only appear at most twice in the $\mathcal{U}_n$'s, or otherwise some three colluding servers will share two common vectors, leading to a contradiction to the property of the $\mathcal{V}_n$'s. As a result, in the $\mathcal{U}_n$'s we will need at least $\frac{54}{2}=27$ query vectors other than $Z_1$, but apparently $S'$ cannot provide enough vectors. Thus, instead of imitating the properties of the $\mathcal{V}_n$'s as {\it sets}, we turn to design the $\mathcal{U}_n$'s to imitate the properties of the $\mathcal{V}_n$'s as {\it linear spaces}. 

Let $\Gamma_{n}$ be the linear space spanned by the query vectors in $\mathcal{V}_{n}$, for $n\in [6]$. The properties of the $\mathcal{V}_n$'s as sets naturally transform to the following properties as linear spaces: For any $\{n_1,n_2,n_3\}\subseteq[6]$, it holds that $\mathrm{dim}(\Gamma_{n_1}\cap\Gamma_{n_2}\cap\Gamma_{n_3})=1$, $\mathrm{dim}\left((\Gamma_{n_1}\cap\Gamma_{n_2})\setminus\Gamma_{n_3}\right)=3$, $\mathrm{dim}\left((\Gamma_{n_1}\cap\Gamma_{n_3})\setminus\Gamma_{n_2}\right)=3$, and $\mathrm{dim}\left((\Gamma_{n_2}\cap\Gamma_{n_3})\setminus\Gamma_{n_1}\right)=3$. Let $\Gamma'_{n}$ be the linear space spanned by the query vectors in $\mathcal{U}_{n}$, for $n\in [6]$. To imitate the same properties of the $\mathcal{V}_n$'s as linear spaces, we define $\mathcal{U}_1, \dots, \mathcal{U}_6$ by
$\mathcal{U}_{n}=\{Z_{1},U_{2n-1,1}^{*}, U_{2n, 1}^{*},U_{2n-1,2}^{*},U_{2n, 2}^{*},U_{2n-1,3}^*,U_{2n, 3}^{*}, \widetilde{U}_{n,1}, \widetilde{U}_{n,2},\widetilde{U}_{n, 3}\}.$ 
Here the common vector $\{Z_1\}$ accounts for the $1$-dimensional intersection for any three query spaces. For $j\in [3]$ and $n\in[6]$, the query vectors $\{U_{2n-1,j}^{*}, U_{2n, j}^{*}\}$ in $\mathcal{U}_n$ are defined by 
\begin{eqnarray*}
    (U_{1,j}^*,U_{2,j}^*,\dots, U_{12,j}^*)^{\top}=\left(
    \begin{array}{rrr}
    0&1&0\\
    0&0&1\\\hdashline
    6&1&0\\
    6&0&1\\\hdashline
    5&1&0\\
    3&0&1\\\hdashline
    4&1&0\\
    5&0&1\\\hdashline
    3&1&0\\
    5&0&1\\\hdashline
    1&0&0\\
    0&1&0
    \end{array}\right)
    \left(\begin{array}{c}
     U_{1,j}  \\
     U_{2,j}\\
     U_{3,j}
\end{array}\right).
\end{eqnarray*}
This matrix is defined over a finite field of characteristic $7$ (i.e., $q$ is a power of $7$).  These vectors are used to guarantee that $\mathrm{dim}\left((\Gamma'_{n_1}\cap\Gamma'_{n_2})\setminus\Gamma'_{n_3}\right)=3$ for any $n_1,n_2,n_3$. Essentially, for each $j\in [3]$, each server will receive a $2$-dimensional subspace of $\mathrm{span}\{U_{1,j},U_{2,j},U_{3,j}\}$, and the $1$-dimensional intersection of every two such subspaces does not lie in another subspace. Finally, the vectors $\{\widetilde{U}_{n,j}:n\in [6],j\in[3]\}$ are defined in the same way as in previous sections:
\begin{eqnarray*}
\begin{array}{l}
\left(\begin{array}{ccc}\widetilde{U}_{1,1} & \widetilde{U}_{1,2} & \widetilde{U}_{1, 3} \\ 
\widetilde{U}_{2,1} & \widetilde{U}_{2,2} & \widetilde{U}_{2, 3} \\ 
\vdots &  \vdots& \vdots \\
\widetilde{U}_{6, 1} & \widetilde{U}_{6,2} & \widetilde{U}_{6, 3 }\end{array}\right)
\triangleq{\bf H}\left(\begin{array}{lll}U_{1} & U_2 & U_{3} \\
U_{4} & U_5 & U_{6}\\
U_{7}& U_8 & U_{9}
\end{array}\right).
\end{array}
\end{eqnarray*}

To send a query space  to a server, the user could in fact pick an arbitrary basis of the space. One natural way, as suggested in~\cite{sun2017private}, is to send the matrix $\mathbb{B}(\mathcal{V})$ of size $10 \times 20$ which is the reduced row echelon form of $\mathcal{V}$. Note that for a given linear space, the reduced row echelon form is unique. Let $W_m$ be the desired file and  $W_{m^c}$ be the undesired file. For $n\in[6]$, let $Q_n^{[m]} =(Q_n^{[m]}(W_m), Q_n^{[m]}(W_{m^c}))=(\mathbb{B} (\mathcal{V}_n),\mathbb{B} (\mathcal{U}_n))$. After receiving the queries, the $n^{\mathrm{th}}$ server applies the query vectors to its stored contents $(W_{mn},\, W_{m^c n})$, producing $10$ {\it queried desired symbols} $\mathbb{B}(\mathcal{V}_n)W_{mn}$ and $10$ {\it queried undesired symbols} $\mathbb{B}(\mathcal{U}_n)W_{m^cn}$.

\subsubsection{The squeezing phase}

Now we analyze the number of redundant symbols among $\{\mathcal{U}_nW_{m^cn}:n\in[6]\}$. Similar as the analysis in the previous sections, there are $N-K=3$ redundant symbols among $\{Z_1W_{m^cn}:n\in[6]\}$, and at least $1$ redundant symbol among $\{\widetilde{U}_{n, j}W_{m^cn}:n\in[6]\}$ for every $j\in[3]$ due to the relations of $\bf G$ and $\bf H$ based on Lemma \ref{lem:G-H}. Moreover, let $W_{m^c}=(\mathbf{x}_1,\mathbf{x}_2,\mathbf{x}_3)$, for every $j\in[3]$, since 
\begin{align*}
    \{{U}_{2n-1, j}^*W_{m^cn}, {U}_{2n, j}^*W_{m^cn}:n\in[6]\}\subseteq \text{span}\{U_{1,j}{\bf x}_1,U_{2,j}{\bf x}_1,U_{3,j}{\bf x}_1,\dots,U_{1,j}{\bf x}_3,U_{2,j}{\bf x}_3,U_{3,j}{\bf x}_3\},
\end{align*}
there are naturally at least $12-9=3$ redundant symbols among $\{{U}_{2n-1, j}^*W_{m^cn}, {U}_{2n, j}^*W_{m^cn}:n\in[6]\}$. Therefore, in total there are at least $3+3\times 1+3\times 3=15$ redundant symbols among $\{\mathcal{U}_nW_{m^cn}:n\in[6]\}$.
Ideally, a carefully designed combination strategy would allow us to download $45$  desired symbols, $45$  undesired symbols, and $15$ paired-up summation symbols. As long as the $45$ downloaded undesired symbols span 
$\{\mathcal{U}_nW_{m^cn}:n\in[6]\}$, the interferences of the undesired file in the paired-up summations could be eliminated and the user successfully retrieves all the queried desired symbols. This scheme has rate $R=\frac{60}{60+60-15}=\frac{4}{7}$, beating the FGHK-conjectured value $\frac{6}{11}$.

\subsection{The scheme for $T=3$}

In the rest of this section, for convenience we assume the system is based on a GRS code to simplify the redundancy analysis in the squeezing phase. Assume that the storage code is $\mathrm{GRS}_{K}(\boldsymbol{\alpha}, \bf{v})$ with generator matrix ${\bf G}$, and the corresponding matrix by Lemma \ref{GRS} is chosen such that ${\bf H}^{\top}$ is the generator matrix of $\mathrm{GRS}_{T}(\boldsymbol{\alpha}, \bf{w})$. 

Assume that each file consists of $L=\tbinom{N}{K}K=\tbinom{N-1}{K-1}N$ symbols from a sufficiently large finite field $\mathbb{F}_q$. Let 
\begin{align*}
W_1 = ({\bf a}_1, \dots, {\bf a}_{K}), \quad W_2 = ({\bf b}_1, \dots, {\bf b}_{K}),
\end{align*}
where ${\bf a}_1,\dots,{\bf a}_{K},{\bf b}_1,\dots,{\bf b}_{K} \in \mathbb{F}_q^{{N\choose K}\times 1}$ are vectors comprised of i.i.d. uniform symbols from $\mathbb{F}_q$. Each file is stored in the system via the same  storage code  $\mathrm{GRS}_{K}(\boldsymbol{\alpha}, \bf{v})$ with generator matrix ${\bf G}_{K\times N}$, i.e.,
\begin{align*}
(W_{11},\dots, W_{1N})=W_1{\bf G}, \quad (W_{21},\dots, W_{2N})=W_2{\bf G},
\end{align*}
where $W_{mn}\in \mathbb{F}_q^{{N\choose K}\times1}$ for $m\in[2],n\in[N]$.

\subsubsection{The disguising phase (Constructions of queries)} Independently choose two random matrices $S, S'$ uniformly from all the  $\tbinom{N}{K}\times \tbinom{N}{K}$ full rank matrices over $\mathbb{F}_q$. Let $\delta=\tbinom{N-3}{K-3}$, $\tau=\tbinom{N-3}{K-2}$, and $\gamma=\tbinom{N-3}{K-1}$. 
Label the rows of $S$ as $\left\{V_i, i\in \left[\tbinom{N}{K}\right]\right\}$. Divide the rows of $S'$ into three parts  $\left\{Z_i:i\in [\delta]\right\}$, $\left\{U_{1,j},U_{2,j},U_{3,j}:j\in[\tau]\right\}$, and $\left\{U_\ell:\ell\in \left[{N \choose K}-\delta-3\tau\right]\right\}$.

The query sets $\mathcal{V}_1,\mathcal{V}_2,\dots,\mathcal{V}_N$ for the desired file are defined in the same way as before, i.e., each $\mathcal{V}_n$ contains $\tbinom{N-1}{K-1}$ row vectors from $S$, and every $K$ sets out of $\mathcal{V}_1,\mathcal{V}_2,\dots,\mathcal{V}_N$ share exactly one common row vector from $S$. Let $\Gamma_{n}$ be the linear space spanned by the query vectors in $\mathcal{V}_{n}$, $n\in [N]$. Then for any $\{n_1,n_2,n_3\}\subseteq[N]$, it holds that $\mathrm{dim}(\Gamma_{n_1}\cap\Gamma_{n_2}\cap\Gamma_{n_3})=\delta$, $\mathrm{dim}\left((\Gamma_{n_1}\cap\Gamma_{n_2})\setminus\Gamma_{n_3}\right)=\tau$, $\mathrm{dim}\left((\Gamma_{n_1}\cap\Gamma_{n_3})\setminus\Gamma_{n_2}\right)=\tau$, and $\mathrm{dim}\left((\Gamma_{n_2}\cap\Gamma_{n_3})\setminus\Gamma_{n_1}\right)=\tau$. 
For disguising, these properties should be imitated in the queries regarding the undesired file. Towards this goal,
we define the query sets $\mathcal{U}_1,\mathcal{U}_2,\dots,\mathcal{U}_N$ for the undesired file as follows. For $n\in [1:N]$,
\begin{eqnarray*}
  \mathcal{U}_{n}=\{Z_{1} , \dots , Z_{\delta}, U_{2n-1,1}^{*} , U_{2n, 1}^{*} , \dots , U_{2n-1,\tau}^*, U_{2n, \tau}^{*}, 
\widetilde{U}_{n,1}, \dots,\widetilde{U}_{n, \gamma}\}. 
\end{eqnarray*}
Let $\Gamma'_{n}$ be the linear space spanned by the vectors in $\mathcal{U}_{n}$, $n\in [N]$. The roles of these vectors are as follows.
\begin{itemize}
    \item $\{Z_1,\dots,Z_{\delta}\}$ are directly picked from $S'$, accounting for the $\delta$-dimensional intersection for any three spaces. 
    \item The vectors $\{\widetilde{U}_{n,j}:{n\in[N],j\in[\gamma]}\}$ are defined by
\begin{eqnarray*}
\left(\begin{array}{ccc}\widetilde{U}_{1,1} & \cdots & \widetilde{U}_{1, \gamma} \\ 
\widetilde{U}_{2,1} & \cdots & \widetilde{U}_{2, \gamma} \\ 
\vdots &  \ddots& \vdots \\
\widetilde{U}_{N, 1} & \cdots & \widetilde{U}_{N, \gamma }\end{array}\right)
={\bf H}\left(\begin{array}{lll}U_{1} & \cdots & U_{\gamma} \\
U_{\gamma +1} & \cdots & U_{2\gamma}\\
U_{2\gamma+1}& \cdots & U_{3\gamma}
\end{array}\right),
\end{eqnarray*} 
and this equation is well defined since it is routine to check that ${N\choose K}-\delta-3\tau \geq 3\gamma$.
    \item The vectors $\{U_{2n-1,j}^*,U_{2n,j}^*:{n\in[N],j\in[\tau]}\}$ are used to guarantee that $\mathrm{dim}\left((\Gamma'_{n_1}\cap\Gamma'_{n_2})\setminus\Gamma'_{n_3}\right)=\tau$ for any $\{n_1,n_2,n_3\}\subseteq[N]$. For $j \in [\tau]$, we define
\begin{eqnarray*}
    (U_{1,j}^* ,U_{2,j}^*, \dots, U_{2N,j}^*)^{\top}={\bf H}^*\left(\begin{array}{c}
     U_{1,j}  \\
     U_{2,j}\\
     U_{3,j}
\end{array}\right),
\end{eqnarray*}
where the matrix ${\bf H}^*$ is of size $2N\times 3$ satisfying the following conditions: 
\begin{itemize}
     \item[{P1:}] For any $\mathcal{T}'=\{n_1,n_2\}\subseteq[N]$, 
     $\mathrm{span}\{U_{2n_1-1,j}^*,U_{2n_1,j}^*\}\cap\mathrm{span}\{U_{2n_2-1,j}^*,U_{2n_2,j}^*\}\triangleq\mathrm{span}\{\mathbf{u}_{\mathcal{T}',j}\}$ has dimension 1. 
     %and we denote this dimension as $U_{n_1,n_2;j}^*$.
    \item[{ P2:}] For any $\mathcal{T}=\{n_1,n_2,n_3\} \subseteq [N]$, $\{\mathbf{u}_{\mathcal{T}',j}:\mathcal{T}'\subseteq\mathcal{T},|\mathcal{T}'|=2\}$ are linearly independent.
    %For any $\{n_1,n_2,n_3\} \subseteq [N]$, $\mathrm{span}\{U_{n_1,n_2;j}^*,U_{n_1,n_3;j}^*,U_{n_2,n_3;j}^*\}$ is exactly the same space as $\mathrm{span}\{U_{1,j},U_{2,j},U_{3,j}\}$. In other words,  $U_{n_1,n_2;j}^*$ does not belong to  $\mathrm{span}\{U_{2n_3-1,j}^*,U_{2n_3,j}^*\}$ for any $\{n_1,n_2,n_3\} \subseteq [N]$. Another equivalent way is to say that 
\end{itemize}
\end{itemize}

The existence of the matrix ${\bf H}^*$ is provided in the next lemma.

\begin{lemma}\label{lem:oval}
Given a prime power $q\geq N-1$, let $\mathcal{C}=\{\mathbf{v}_1,\dots,\mathbf{v}_{q+1}\}=\{(1,x,x^2)|x\in \mathbb{F}_q\}\cup \{(0,0,1)\}$. For every $n\in [N]$, let the $(2n-1)^{\mathrm{th}}$ and $2n^{\mathrm{th}}$ rows of $\mathbf{H}^*$ be any basis of the two-dimensional orthogonal complement $\operatorname{span}\{\mathbf{v}_n\}^{\perp}\subseteq \mathbb{F}_q^3$. Then ${\bf H}^*$ satisfies the two conditions above.
\end{lemma}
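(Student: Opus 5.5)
The plan is to recast both conditions as statements about lines in $\mathbb{F}_q^3$ and their duals, and then use the fact that $\mathcal{C}$ is an oval in $\mathrm{PG}(2,q)$. Fix $j$. The rows $2n-1,2n$ of $\mathbf{H}^*$ span $P_n:=\operatorname{span}\{\mathbf{v}_n\}^{\perp}$, and we need $N\le q+1$ distinct points of $\mathcal{C}$, which holds since $q\ge N-1$. Write $\mathbf{U}_j$ for the $3\times\binom{N}{K}$ matrix with rows $U_{1,j},U_{2,j},U_{3,j}$. Since $S'$ has full rank, these rows are linearly independent, so the map $\mathbf{h}\mapsto \mathbf{h}\mathbf{U}_j$ from $\mathbb{F}_q^3$ to row vectors is injective. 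Therefore
\[
\operatorname{span}\{U^*_{2n-1,j},U^*_{2n,j}\}=P_n\mathbf{U}_j,
\]
and every intersection and every linear-independence statement can be checked inside $\mathbb{F}_q^3$ on the spaces $P_n$.

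\textbf{Condition P1.} For $n_1\neq n_2$ the vectors $\mathbf{v}_{n_1},\mathbf{v}_{n_2}$ are non-proportional. This is clear from their form, since distinct $x$ give distinct first-normalized vectors, and $(0,0,1)$ has first coordinate $0$. Hence
\[
P_{n_1}\cap P_{n_2}=\operatorname{span}\{\mathbf{v}_{n_1},\mathbf{v}_{n_2}\}^{\perp},
\]
which is one-dimensional. It is spanned by the cross product $\mathbf{v}_{n_1}\times\mathbf{v}_{n_2}$, or more abstractly by any nonzero vector orthogonal to both. So $\mathbf{u}_{\mathcal{T}',j}$ corresponds to $\mathbf{w}_{12}:=\mathbf{v}_{n_1}\times \mathbf{v}_{n_2}$.

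\textbf{Condition P2.} For a triple, we need $\mathbf{w}_{12},\mathbf{w}_{13},\mathbf{w}_{23}$ to be linearly independent. The key step, which I expect to be the main obstacle, is to show that for any three points of $\mathcal{C}$ the vectors $\mathbf{v}_{n_1},\mathbf{v}_{n_2},\mathbf{v}_{n_3}$ are linearly independent. This is the arc property. For three affine points it is a Vandermonde determinant $(x_2-x_1)(x_3-x_1)(x_3-x_2)\neq0$. If one point is $(0,0,1)$, expanding gives $\pm(x_2-x_1)\neq0$.

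Given that independence, let $M$ be the invertible matrix with rows $\mathbf{v}_{n_1},\mathbf{v}_{n_2},\mathbf{v}_{n_3}$. The vectors $\mathbf{w}_{23},\mathbf{w}_{13},\mathbf{w}_{12}$ are, up to nonzero scalars, the columns of $M^{-1}$: for example, $\mathbf{w}_{23}$ is orthogonal to $\mathbf{v}_{n_2}$ and $\mathbf{v}_{n_3}$ but not to $\mathbf{v}_{n_1}$, since otherwise it would be orthogonal to all of $\mathbb{F}_q^3$. Equivalently, I would directly use the identity
\[
\det(\mathbf{v}_{1}\times\mathbf{v}_{2},\ \mathbf{v}_{2}\times\mathbf{v}_{3},\ \mathbf{v}_{3}\times\mathbf{v}_{1})=\det(\mathbf{v}_1,\mathbf{v}_2,\mathbf{v}_3)^2\neq 0.
\]
Either way the three intersection vectors are independent in $\mathbb{F}_q^3$. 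Transporting this through the injective map $\mathbf{h}\mapsto\mathbf{h}\mathbf{U}_j$ gives P2.

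Finally, note that the construction is the same for every $j\in[\tau]$, so the argument applies uniformly in $j$. Also, the particular choice of basis of $P_n$ does not matter, because only the spans enter into P1 and P2.
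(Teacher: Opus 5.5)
Your proof is correct and takes essentially the same route as the paper. Both arguments use the Vandermonde (arc) property to get that any three points of $\mathcal{C}$ are independent, and both deduce that each pairwise intersection $\operatorname{span}\{\mathbf{v}_{n_1},\mathbf{v}_{n_2}\}^{\perp}$ is one-dimensional and that these intersection vectors are, up to scalars, the dual basis of $\{\mathbf{v}_{n_1},\mathbf{v}_{n_2},\mathbf{v}_{n_3}\}$ (your columns of $M^{-1}$), hence independent. Your separate check of the point $(0,0,1)$ and your explicit transport through the injective map $\mathbf{h}\mapsto\mathbf{h}\mathbf{U}_j$ are details the paper leaves implicit.
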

\begin{IEEEproof}
By the Vandermonde property, any $3$ vectors in $\mathcal{C}$ are linearly independent. For any $\{n_1,n_2\}\subseteq [N]$, $\mathbf{v}_{n_1}, \mathbf{v}_{n_2}$ are linearly independent.  The intersection $\operatorname{span}\{\mathbf{v}_{n_1}\}^\perp \cap \operatorname{span}\{\mathbf{v}_{n_2}\}^\perp$ is the solution space of the linear system
\[
\begin{cases}
\mathbf{v}_{n_1} \cdot \mathbf{x} = 0, \\
\mathbf{v}_{n_2} \cdot \mathbf{x} = 0.
\end{cases}
\]
Since $\mathbf{v}_{n_1}$ and $\mathbf{v}_{n_2}$ are linearly independent, the coefficient matrix $(\mathbf{v}_{n_1}^\top \; \mathbf{v}_{n_2}^\top)$ has rank $2$, and thus the solution space is one-dimensional. For any $\{n_1,n_2,n_3\} \subseteq [N]$, since $\mathbf{v}_{n_1}$, $\mathbf{v}_{n_2}$, and $\mathbf{v}_{n_3}$ are linearly independent, we can regard $\{\mathbf{v}_{n_1},\mathbf{v}_{n_2},\mathbf{v}_{n_3}\}$ as a basis of $\mathbb{F}_q^3$. Let $\{\mathbf{u}_1, \mathbf{u}_2, \mathbf{u}_3\}$ be the dual basis of $\{\mathbf{v}_{n_1}, \mathbf{v}_{n_2}, \mathbf{v}_{n_3}\}$, i.e.,  
$\langle \mathbf{v}_{n_i}, \mathbf{u}_j \rangle$ equals 1 if $i=j$ and 0 otherwise. 
Then  
$\operatorname{span}\{\mathbf{v}_{n_i}\}^{\perp} = \operatorname{span}\{\mathbf{u}_j \mid j \neq i\}$ and thus $\operatorname{span}\{\mathbf{v}_{n_1}\}^{\perp}\cap\operatorname{span}\{\mathbf{v}_{n_2}\}^{\perp}=\mathrm{span}\{\mathbf{u}_3\}, \operatorname{span}\{\mathbf{v}_{n_1}\}^{\perp}\cap\operatorname{span}\{\mathbf{v}_{n_3}\}^{\perp}=\mathrm{span}\{\mathbf{u}_2\}, \operatorname{span}\{\mathbf{v}_{n_2}\}^{\perp}\cap\operatorname{span}\{\mathbf{v}_{n_3}\}^{\perp}=\mathrm{span}\{\mathbf{u}_1\}$. Therefore, the matrix ${\bf H}^*$ satisfies the two conditions as desired.
\end{IEEEproof}

Look back on the example in the previous subsection for $(M,N,T,K)=(2,6,3,3)$. We take  $\mathbf{v}_1=(1 , 0 , 0)$, $\mathbf{v}_2=( 1 , 1 , 1 )$, $\mathbf{v}_3=(1 , 2 , 4 )$, $\mathbf{v}_4=(1 , 3 , 2 )$, $\mathbf{v}_5=( 1 , 4 ,2 )$, and $\mathbf{v}_6=(0 , 0 , 1 )$. Then we get the desired coefficient matrix as shown in that example. 

By the constructions of the query sets $\mathcal{U}_n$'s for the undesired file, the query spaces $\Gamma'_n$'s have the same properties as the $\Gamma_n$'s, and thus the roles of the desired and undesired files are disguised. Let $W_m$ be the desired file and $W_{m^c}$ be the undesired file. Write each of the query sets $\mathcal{V}_n$'s and $\mathcal{U}_n$'s in the form of $(L/N)\times
\tbinom{N}{K}$ matrices, where each matrix $\mathbb{B}(\mathcal{V})$ is the reduced row echelon form of the matrix $\mathcal{V}$. For $n\in[N]$, let $Q_n^{[m]} =(Q_n^{[m]}(W_m), Q_n^{[m]}(W_{m^c}))
=(\mathbb{B} (\mathcal{V}_n),\mathbb{B} (\mathcal{U}_n))$. After receiving the queries, the $n^{\mathrm{th}}$ server applies the query vectors to its stored contents $(W_{mn},\, W_{m^c n})$, producing $L/N$ queried desired symbols $\mathbb{B} (\mathcal{V}_n)W_{mn}$ and $L/N$ queried undesired symbols $\mathbb{B} (\mathcal{U}_n)W_{m^cn}$. 

\subsubsection{The squeezing phase (Exploit the redundant symbols)}
Since each query vector $V_i$ belongs to the query space sent to exactly $K$ servers from an $(N,K)$-MDS system and $S$ is of full rank, $\{\mathcal{V}_nW_{mn}:n\in[N]\}$ can recover $W_m$.  Therefore, the scheme works as long as the user collects $\{\mathbb{B}(\mathcal{V}_n)W_{mn}:n\in[N]\}$ because there is a bijection between $\{\mathbb{B}(\mathcal{V}_n)W_{mn}:n\in[N]\}$ and $\{\mathcal{V}_nW_{mn}:n\in[N]\}$. As for the undesired file,
we would like that  $\{\mathcal{U}_nW_{m^cn}:n\in[N]\}$ have redundancy as much as possible:
\begin{itemize}
    \item For each $i\in[\delta]$, there are $N-K$ redundant symbols among $\{Z_iW_{m^cn}:n\in[N]\}$.
    \item For $1\leq j\leq \gamma$, due to the relations of $\bf G$ and $\bf H$ from Lemma \ref{GRS}, for a GRS-coded system, there are $N-K-2$ redundant symbols among $\{\widetilde{U}_{n, j}W_{m^cn}:n\in[N]\}$.
    \item Recall that we designed the vectors $\{U_{2n-1,j}^*,U_{2n,j}^*:{n\in[N],j\in[\tau]}\}$ based on Lemma \ref{lem:oval}. How much redundancy can we squeeze among $\{{U}_{2n-1, j}^*W_{m^cn}, {U}_{2n, j}^*W_{m^cn}:n\in[N]\}$ for this construction? A more general problem is, how should we design the vectors $\{U_{2n-1,j}^*,U_{2n,j}^*:{n\in[N],j\in[\tau]}\}$ aiming for maximum redundancy? We believe this problem is highly non-trivial. Nevertheless, when $2N>3K$, there are naturally at least $2N-3K$ redundant symbols among $\{{U}_{2n-1, j}^*W_{m^cn}, {U}_{2n, j}^*W_{m^cn}:n\in[N]\}$, since they are all linear combinations of the inner products of the $3$ vectors $\{U_{1,j},U_{2,j},U_{3,j}\}$ and the $K$ systematic message vectors.   
\end{itemize}
Therefore, in total there are at least $\delta(N-K)+\gamma(N-K-2)+\tau\cdot\max\{2N-3K,0\}$ redundant symbols among $\{\mathcal{U}_nW_{m^cn}:n\in[N]\}$. In other words, the queried undesired symbols have dimension at most $I\triangleq L-\delta(N-K)-\gamma(N-K-2)-\tau\cdot\max\{2N-3K,0\}$.

When the $n^{\mathrm{th}}$ server responds to the query vectors, it applies a combination strategy to map  $(\mathbb{B} (\mathcal{V}_n)W_{mn}, \mathbb{B} (\mathcal{U}_n)W_{m^cn})$ into an equal or smaller number of downloaded symbols. The framework is almost the same as what we have presented for the case $T=2$. A full rank matrix $C_n$ of order $L/N$ is applied to transform $\mathbb{B} (\mathcal{V}_n)W_{mn}$ and $\mathbb{B} (\mathcal{U}_n)W_{m^cn}$ into $\{X_{n,1},\dots,X_{n,{L/N}},Y_{n,1},\dots,Y_{n,{L/N}}\}$. Then a combination function $\mathcal{L}_n$ will compress these $2L/N$ symbols into $I_n$ downloaded desired symbols, $I_n$ downloaded undesired symbols, and $L/N-I_n$ paired-up summation symbols. The key is still to make sure that $\sum_{n\in[N]}I_n=I$ and the union of the downloaded undesired symbols should linearly generate all the queried undesired symbols. However, now it seems difficult or impossible to preset the matrices $C_n$'s in a deterministic way\footnote{In~\cite{sun2017private}, the existence proof of the $C_n$'s for $T=2$ relies on the celebrated Schwarz–Zippel lemma. A key ingredient for the proof is that the number of permutations on each query sets is a constant independent of the field size $q$, and then in sufficiently large fields the Schwarz–Zippel lemma can guarantee the existence of the desired matrices.
However, as pointed out also in~\cite{sun2017private}, such a proof no longer works for $T\geq 3$, since now the possibility of each query space also grows with $q$.}. Therefore, we switch to a randomized combination strategy as in~\cite{sun2017private}, by randomly choosing the matrices $C_n$'s over a sufficiently large finite field and allowing the scheme to have $\epsilon$-error, with $\epsilon$ approaching zero when the message size approaches infinity (for interested readers, please refer to~\cite[Lemma 5]{sun2017private} and the contexts). 

Summing up the above, we have the following theorem. 
\begin{theorem}
For $(M,N,T,K)=(2,N,3,K)$, where $N\geq K+2$, if the storage code is a GRS code, then there exists a PIR scheme with rate $R=\frac{N(N-1)(N-2)}{N^3-N^2+NK^2+KN^2-K^3+3K^2-8NK+4K}$ when $2N>3K$, and $R=\frac{N(N-1)(N-2)}{N^3-3N^2+3N^2K-4NK^2-3NK+2K^3+4K}$ when $2N\leq 3K$.
\end{theorem}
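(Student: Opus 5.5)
The theorem follows the template of Theorem \ref{th-1} and Theorem \ref{thm:GRS}: establish privacy from the query spaces, correctness (up to vanishing error) from the randomized combination strategy, and then evaluate the rate as $\frac{L}{L+I}$ with
$$I=L-\delta(N-K)-\gamma(N-K-2)-\tau\max\{2N-3K,0\}.$$

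\textbf{Rate.} Write $B=\binom{N}{K}$, so that $L=KB$. The three parameters can be expressed as
\[
\delta=\tfrac{BK(K-1)(K-2)}{N(N-1)(N-2)},\qquad
\tau=\tfrac{BK(K-1)(N-K)}{N(N-1)(N-2)},\qquad
\gamma=\tfrac{BK(N-K)(N-K-1)}{N(N-1)(N-2)} .
\]
Substituting these into $\frac{L}{2L-\delta(N-K)-\gamma(N-K-2)-\tau\max\{2N-3K,0\}}$ and multiplying the numerator and denominator by $\frac{N(N-1)(N-2)}{KB}$ gives numerator $N(N-1)(N-2)$. The denominator becomes
$$2N(N-1)(N-2)-(K-1)(K-2)(N-K)-(N-K)(N-K-1)(N-K-2)-(K-1)(N-K)\max\{2N-3K,0\}.$$
Expanding this separately in the two cases $2N>3K$ and $2N\le 3K$ should reproduce the two stated polynomials; this is routine algebra. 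I would also check the side conditions:
\begin{itemize}
\item $\binom{N}{K}-\delta-3\tau\ge 3\gamma$, so that the $\widetilde U$'s are well defined;
\item $q\ge N-1$ and $q\ge N$ hold, so that Lemma \ref{lem:oval} and the GRS construction are available.
\end{itemize}

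\textbf{Correctness.} The redundancy count uses three facts:
\begin{itemize}
\item The MDS property handles the $Z_i$-type symbols.
\item For the $\widetilde U$-type symbols, Lemma \ref{GRS} applied with $T=3$ shows that, for each $j\in[\gamma]$, $\{\widetilde U_{n,j}W_{m^cn}\}_{n}$ lies in $\mathrm{GRS}_{K+2}$, which gives $N-K-2$ redundancies.
\item For $2N>3K$, the $U^*$-type symbols for each $j$ live in a space of dimension at most $3K$ (inner products of $U_{1,j},U_{2,j},U_{3,j}$ with the $K$ systematic vectors), which gives $2N-3K$ redundancies.
\end{itemize}
The queried undesired symbols therefore span a space of dimension at most $I$. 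Choose integers $I_n\le L/N$ with $\sum_n I_n=I$, and choose the matrices $C_n$ uniformly at random over a large field. Following \cite[Lemma 5]{sun2017private}, the $I$ downloaded undesired symbols then span all queried undesired symbols except with probability tending to $0$. When this holds, the user cancels the interference in the paired sums and recovers all of $\{\mathbb{B}(\mathcal V_n)W_{mn}\}$. Since every $V_i$ lies in exactly $K$ of the spaces $\Gamma_n$, this yields $W_m$. This gives an $\epsilon$-error scheme.

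\textbf{Privacy (main obstacle).} Take any three servers $\{n_1,n_2,n_3\}$. Because each query is given in reduced row echelon form, it depends only on the subspace. Because $S$ and $S'$ are uniform invertible matrices, the joint law of $(\Gamma_{n_1},\Gamma_{n_2},\Gamma_{n_3})$ is uniform over the $\mathrm{GL}$-orbit of the configuration, and the same holds for the $\Gamma'$'s. It therefore suffices to show that the desired triple and the undesired triple lie in the same $\mathrm{GL}$-orbit.

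Triples of subspaces have finite representation type, so each orbit is determined by the dimensions of the lattice expressions they generate. Besides the individual dimensions, the pairwise and triple intersections, and the dimension of the total sum, this includes quantities such as $\dim(\Gamma_{n_1}\cap(\Gamma_{n_2}+\Gamma_{n_3}))$.

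For the $\Gamma_n$'s, which are spanned by subsets of a basis, all of these are counting quantities. For the $\Gamma'_n$'s I would decompose the ambient space as
$$\mathrm{span}(Z)\;\oplus\;\bigoplus_j \mathrm{span}(U_{1,j},U_{2,j},U_{3,j})\;\oplus\;\mathrm{span}(U_\ell).$$
Each $\Gamma'_n$ is compatible with this direct sum, so every lattice dimension splits as a sum over the blocks, and each block is handled separately:
\begin{itemize}
\item In each $U^*$ block, P1 and P2 of Lemma \ref{lem:oval} say the three $2$-planes in $\mathbb{F}_q^3$ have pairwise $1$-dimensional intersections with the three intersection lines independent. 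This is exactly the "three coordinate planes" configuration, which also arises on the desired side.
\item In the $\widetilde U$ block, the row-MDS property of $\mathbf H$ with $T=3$ makes the three $\gamma$-dimensional pieces independent.
\end{itemize}
Matching these block-by-block with the desired-side counts ($\delta$ common vectors, $\tau$ per pair, $\gamma$ private vectors per server) shows that the two triples lie in the same orbit. Verifying that this decomposition really captures every lattice invariant, not just the intersection dimensions stated in the paper, is the step I expect to need the most care.
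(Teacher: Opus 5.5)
Your proposal is correct and follows the paper's proof. The rate is $\frac{L}{L+I}$ with the same $I$, and your normalized denominator does expand to exactly the two stated polynomials. Correctness uses the same randomized choice of the $C_n$'s as in Sun--Jafar. One point to add there: the $I_n$'s cannot be arbitrary with $\sum_n I_n=I$; they must be chosen so that generic downloads span all the queried undesired symbols. One valid choice is $I_n=\dim(\mathcal{Y}_1+\dots+\mathcal{Y}_n)-\dim(\mathcal{Y}_1+\dots+\mathcal{Y}_{n-1})$, where $\mathcal{Y}_n$ is the span of server $n$'s queried undesired symbols.

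Your privacy argument is a more careful version of the paper's one-line claim, and the step you flag is not actually delicate. Your block decomposition already gives a basis of $\mathbb{F}_q^{\binom{N}{K}}$ in which $\Gamma'_{n_1},\Gamma'_{n_2},\Gamma'_{n_3}$ are coordinate subspaces, with $\delta$ basis vectors in all three, $\tau$ in each pair only, $\gamma$ in each single space only, and $\binom{N-3}{K}$ in none. These counts match the $V_i$'s exactly, so an explicit element of $\mathrm{GL}$ carries one triple onto the other, and you need no appeal to finite representation type or to further lattice invariants.
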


\begin{IEEEproof}
The privacy and $\epsilon$-correctness of the scheme has been explained throughout the presentation of the scheme. The rate of the scheme can be computed as $\frac{L}{L+I}$, where $I= L-\delta(N-K)-\gamma(N-K-2)-\tau\cdot\max\{2N-3K,0\}$. Plugging in $L={N\choose K}K$, $\delta={N-3\choose K-3}$, $\gamma={N-3\choose K-1}$, and $\tau={N-3\choose K-2}$ and the theorem follows.
\end{IEEEproof}

\subsection{The general scheme for arbitrary $T$}

When $T$ becomes larger, the disguising phase becomes more complicated, in the sense that we have to analyze the perspective of $T'$ servers for all $1\leq T'\leq T$. To construct the query spaces satisfying privacy requirement, we will need the following tool of exterior products.

\begin{definition}[Exterior Products]
Let $1\leq T'\leq T$. Consider the tensor power $(\mathbb{F}_q^T)^{\otimes T'}$ and let $\mathcal{W}$ be the subspace generated by all elements of the form $\mathbf{u}_1 \otimes \cdots \otimes \mathbf{u}_{T'}$ where $\mathbf{u}_i = \mathbf{u}_j$ for some $1\leq i <j \leq T'$. The $T'$-th exterior power of $\mathbb{F}_q^T$ is defined as $\Lambda^{T'} (\mathbb{F}_q^T) \triangleq (\mathbb{F}_q^T)^{\otimes T'}/\mathcal{W}$.  The canonical projection from $(\mathbb{F}_q^T)^{\otimes T'}$ onto $ \Lambda^{T'}(\mathbb{F}_q^T)$ maps $\mathbf{u}_1 \otimes \cdots \otimes \mathbf{u}_{T'}$ to $\mathbf{u}_1 \wedge \cdots \wedge \mathbf{u}_{T'}$, referred to as the exterior product of the vectors.
\end{definition} 

The exterior product satisfies the following properties (for example, see~\cite{szekeres2004course}):

\begin{enumerate}
    \item \textbf{Multilinearity}: For any scalars $a, b \in \mathbb{F}_q$ and vectors $\mathbf{u}_1, \ldots, \mathbf{u}_{T'}, \mathbf{u}_1'\in \mathbb{F}_q^T$,
        \[
        (a \mathbf{u}_1 + b \mathbf{u}_1') \wedge \mathbf{u}_2 \wedge \cdots \wedge \mathbf{u}_{T'} = a (\mathbf{u}_1 \wedge \mathbf{u}_2 \wedge \cdots \wedge \mathbf{u}_{T'}) + b (\mathbf{u}_1' \wedge \mathbf{u}_2 \wedge \cdots \wedge \mathbf{u}_{T'}).
        \]
    Similarly, linearity holds on each component separately.
    \item \textbf{Antisymmetry}: For any vectors $\mathbf{u}_1, \ldots, \mathbf{u}_{T'} \in \mathbb{F}_q^T$ and any transposition swapping $i$ and $j$,
        \[
        \mathbf{u}_1 \wedge \cdots \wedge \mathbf{u}_i \wedge \cdots \wedge \mathbf{u}_j \wedge \cdots \wedge \mathbf{u}_{T'} = - \left(\mathbf{u}_1 \wedge \cdots \wedge \mathbf{u}_j \wedge \cdots \wedge \mathbf{u}_i \wedge \cdots \wedge \mathbf{u}_{T'}\right).
        \]
    In particular, if any two vectors among $\{\mathbf{u}_1,\dots,\mathbf{u}_{T'}\}$ are equal, then $\mathbf{u}_1 \wedge \cdots \wedge \mathbf{u}_{T'} = 0$.
\end{enumerate}

Based on these two properties, from an arbitrary basis $\{\mathbf{e}_1, \ldots, \mathbf{e}_T\}$ of $\mathbb{F}_q^T$, we can find a basis of $\Lambda^{T'}( \mathbb{F}_q^T)$ given by
\begin{align}
\{ \mathbf{e}_{i_1} \wedge \mathbf{e}_{i_2} \wedge \cdots \wedge \mathbf{e}_{i_{T'}} \mid 1 \leq i_1 < i_2 < \cdots < i_{T'} \leq T \}.\label{generator}
\end{align}
Thus, $\dim \Lambda^{T'} (\mathbb{F}_q^T) = \binom{T}{T'}$, and we have an isomorphism $\phi:\Lambda^{T'} (\mathbb{F}_q^T) \to \mathbb{F}_q^{\binom{T}{T'}}$.  
\begin{lemma}\label{T,T'}
Let $2\leq T'< T$. There exists a collection of $q+1$ subspaces of $\mathbb{F}_q^{\binom{T}{T'}}$, each of dimension $\binom{T-1}{T'-1}$, such that:
\begin{enumerate}
    \item The intersection of any $T'$ distinct subspaces is of dimension one.
    \item For any $T$ distinct subspaces, the set of generators obtained from the intersection of each $T'$-tuple of these $T$ subspaces is linearly independent and thus spans $\mathbb{F}_q^{\binom{T}{T'}}$.
\end{enumerate}
\end{lemma}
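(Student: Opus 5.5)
The plan is to generalize the construction of Lemma~\ref{lem:oval} from $\mathbb{F}_q^3$ to $\mathbb{F}_q^T$, replacing orthogonal complements by wedge subspaces in $\Lambda^{T'}(\mathbb{F}_q^T)$, and then transport everything to $\mathbb{F}_q^{\binom{T}{T'}}$ via the isomorphism $\phi$. This amounts to a multilinear-algebra computation in adapted bases.

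\emph{Step 1 (points in general position).} Take the normal rational curve
\[
\mathcal{C}=\{\mathbf{v}_x=(1,x,x^2,\dots,x^{T-1}) : x\in\mathbb{F}_q\}\cup\{\mathbf{v}_\infty=(0,\dots,0,1)\}\subseteq \mathbb{F}_q^T ,
\]
which has $q+1$ points. Any $T$ of them are linearly independent. For $T$ finite points this is the Vandermonde determinant. If $\mathbf{v}_\infty$ is among them, expanding along its row leaves a $(T-1)\times(T-1)$ Vandermonde determinant in the remaining $x$'s. Consequently any $T'\le T$ of them are also independent.

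\emph{Step 2 (the subspaces).} For each $\mathbf{v}\in\mathcal{C}$ define
\[
\Sigma_{\mathbf{v}}=\mathbf{v}\wedge\Lambda^{T'-1}(\mathbb{F}_q^T)=\operatorname{span}\{\mathbf{v}\wedge\mathbf{w}_2\wedge\cdots\wedge\mathbf{w}_{T'}\}\subseteq\Lambda^{T'}(\mathbb{F}_q^T),
\]
and take the collection $\{\phi(\Sigma_{\mathbf{v}}):\mathbf{v}\in\mathcal{C}\}$. For the dimension, extend $\mathbf{v}$ to a basis $\mathbf{e}_1=\mathbf{v},\mathbf{e}_2,\dots,\mathbf{e}_T$. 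By multilinearity and antisymmetry, $\Sigma_{\mathbf{v}}$ is spanned by the basis elements \eqref{generator} whose index set contains $1$. There are $\binom{T-1}{T'-1}$ of these, and they are part of a basis, so $\dim\Sigma_{\mathbf{v}}=\binom{T-1}{T'-1}$.

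\emph{Step 3 (property 1).} Given $T'$ distinct points $\mathbf{v}_{n_1},\dots,\mathbf{v}_{n_{T'}}$, they are independent, so extend them to a basis with $\mathbf{e}_i=\mathbf{v}_{n_i}$ for $i\le T'$. As in Step 2, each $\Sigma_{\mathbf{v}_{n_i}}$ is the coordinate subspace spanned by the basis wedges $\mathbf{e}_I$ with $i\in I$. An intersection of coordinate subspaces with respect to a common basis is the coordinate subspace on the intersection of the index families. The only $T'$-subset containing $\{1,\dots,T'\}$ is that set itself, so the intersection is $\operatorname{span}\{\mathbf{v}_{n_1}\wedge\cdots\wedge\mathbf{v}_{n_{T'}}\}$, which is one-dimensional.

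\emph{Step 4 (property 2).} Given $T$ distinct points, they form a basis $\mathbf{e}_i=\mathbf{v}_{n_i}$ of $\mathbb{F}_q^T$. By Step 3, the generator attached to the $T'$-tuple indexed by $I$ is $\mathbf{e}_I$, up to a nonzero scalar. These $\binom{T}{T'}$ wedges form exactly the basis \eqref{generator} of $\Lambda^{T'}(\mathbb{F}_q^T)$, so they are linearly independent and span the whole space. Applying $\phi$ gives the statement for $\mathbb{F}_q^{\binom{T}{T'}}$.

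\emph{Main obstacle.} The step needing care is justifying that $\Sigma_{\mathbf{v}}$ is exactly a coordinate subspace in any basis extending $\mathbf{v}$. One must expand each $\mathbf{w}_j$ in the basis, use multilinearity, and kill terms with a repeated $\mathbf{e}_1$ by antisymmetry. One also needs the basis from a different point set (Steps 3 and 4) to coincide for all the $\Sigma$'s involved simultaneously; this works because every relevant $\mathbf{v}_{n_i}$ is itself a basis vector. Once this is in place, Steps 3 and 4 are pure index bookkeeping. As a sanity check, for $T=3,T'=2$ this recovers (dually) Lemma~\ref{lem:oval}.
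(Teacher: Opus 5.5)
Your proposal is correct and follows essentially the same route as the paper: the normal rational curve in $\mathbb{F}_q^T$, the subspaces $\mathbf{v}\wedge\Lambda^{T'-1}(\mathbb{F}_q^T)$, and basis-extension arguments for the dimension count, the one-dimensional $T'$-fold intersections, and the exterior basis attached to $T$ points, all transported to $\mathbb{F}_q^{\binom{T}{T'}}$ by $\phi$. Your Step 3, phrased as an intersection of coordinate subspaces with respect to a common adapted basis, is a slightly more explicit version of the paper's unique-representation argument; it also makes clear that the intersection actually contains $\mathbf{v}_{n_1}\wedge\cdots\wedge\mathbf{v}_{n_{T'}}$, which the paper leaves implicit.
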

\begin{IEEEproof}
  $\Lambda^{T'} (\mathbb{F}_q^T)$  is a $\binom{T}{T'}$-dimensional linear space over $\mathbb{F}_q$, and we have an isomorphism $\phi: \Lambda^{T'}(\mathbb{F}_q^T) \to \mathbb{F}_q^{\binom{T}{T'}}$. We will construct $q+1$ subspaces in the exterior power $\Lambda^{T'} (\mathbb{F}_q^T)$ satisfying all the conditions of the lemma.  

Consider $q+1$ vectors in $\mathbb{F}_q^T$ given by 
\[
\mathcal{C} = \{(1,x,x^2,\dots,x^{T-1}) | x \in \mathbb{F}_q\} \cup \{(0,0,\dots,0,1)\}.
\]
By the Vandermonde property, any $T$ vectors in $\mathcal{C}$ are linearly independent.

For each  $\mathbf{u}\in \mathcal{C}$, define a subspace $\mathcal{W}_\mathbf{u}$ of $\Lambda^{T'}(\mathbb{F}_q^{T})$ by:
\[
\mathcal{W}_\mathbf{u} = \{ \mathbf{u} \wedge \omega \mid \omega \in \Lambda^{T'-1} (\mathbb{F}_q^T) \}.
\]
Consider any basis of $\mathbb{F}_q^T$ which contains $\mathbf{u}$, say, $\{\mathbf{u}, \mathbf{u}_1, \dots, \mathbf{u}_{T-1}\}$. Then a basis of $\mathcal{W}_\mathbf{u}$ is given by
\begin{align}
\{ \mathbf{u} \wedge \mathbf{u}_{i_1} \wedge \cdots \wedge \mathbf{u}_{i_{T'-1}} \mid 1 \leq i_1 < i_2 < \cdots < i_{T'-1} \leq T-1 \}.\label{W_u}
\end{align}
Thus, $\dim{\mathcal{W}_\mathbf{u}}=\binom{T-1}{T'-1}$.
We will show that these $q+1$ subspaces $\{\mathcal{W}_{\mathbf{u}} \mid \mathbf{u} \in \mathcal{C}\}$ satisfy the two conditions in the lemma.

Now let $\mathbf{u}_1, \ldots, \mathbf{u}_{T'}$ be $T'$ distinct vectors of $\mathcal{C}$.  
These vectors are linearly independent. Extend $\{\mathbf{u}_1, \ldots, \mathbf{u}_{T'}\}$ to a basis 
$\{\mathbf{u}_1, \ldots, \mathbf{u}_{T'}, \mathbf{u}_{T'+1}, \ldots, \mathbf{u}_T\}$ of $\mathbb{F}_q^T$. Let
$\xi \in \bigcap_{i=1}^{T'} \mathcal{W}_{\mathbf{u}_i}$. For each $i \in [T']$, according to \eqref{W_u}, $\xi$ can be linearly generated by the set $
\{ \mathbf{u}_i \wedge \mathbf{u}_{j_1} \wedge \dots \wedge \mathbf{u}_{j_{T'-1}} \mid \forall\, \{j_1, j_2, \dots, j_{T'-1}\} \subseteq [T] \setminus \{i\} \}$. Because $\xi$ has a unique representation with respect to the basis $\{ \mathbf{u}_{i_1} \wedge \mathbf{u}_{i_2} \wedge \cdots \wedge \mathbf{u}_{i_{T'}} \mid 1 \leq i_1 < i_2 < \cdots < i_{T'} \leq T \}$ of $\Lambda^{T'}(\mathbb{F}_q^T)$,
it follows that $\xi$ must be  linearly represented by the single basis vector $\mathbf{u}_{1} \wedge \mathbf{u}_{2} \wedge \cdots \wedge \mathbf{u}_{T'}$, and the intersection $\bigcap_{i=1}^{T'} \mathcal{W}_{\mathbf{u}_i}$ is one-dimensional. 

Next, let $\mathbf{u}_1, \ldots, \mathbf{u}_T$ be $T$ distinct vectors of $\mathcal{C}$.  
They form a basis of $\mathbb{F}_q^T$.  Consider the generators of the intersection of any $T'$ of the $T$ subspaces $\mathcal{W}_{\mathbf{u}_1}, \dots, \mathcal{W}_{\mathbf{u}_T}$.
For every $T'$-subset $S = \{i_1,\ldots,i_{T'}\} \subseteq \{1,\ldots,T\}$, $
\bigcap_{i \in S} \mathcal{W}_{\mathbf{u}_i}
=
\operatorname{span}\{\mathbf{u}_{i_1} \wedge \cdots \wedge \mathbf{u}_{i_{T'}}\}.$ According to \eqref{generator},
as $S$ ranges over all $T'$-subsets, $\{ \mathbf{u}_{i_1} \wedge \mathbf{u}_{i_2} \wedge \cdots \wedge \mathbf{u}_{i_{T'}} \mid 1 \leq i_1 < i_2 < \cdots < i_{T'} \leq T \}$ form the exterior basis of $\Lambda^{T'}(\mathbb{F}_q^T)$, and thus are linearly independent and span $\Lambda^{T'}(\mathbb{F}_q^T)$.

Via the isomorphism $\phi$, the corresponding subspaces $\{\phi(\mathcal{W}_\mathbf{u})|\mathbf{u}\in \mathcal{C}\}$ in $\mathbb{F}_q^{\binom{T}{T'}}$ also satisfy the conditions of the lemma. This completes the proof.
\end{IEEEproof}

With the tool of exterior product, we are now ready to present a disguise-and-squeeze MDS-TPIR scheme for arbitrary $T$. The settings of the files and the storage code are the same as in the previous subsection for $T=3$.

\subsubsection{The disguising phase (Constructions of queries)} Independently choose two random matrices $S, S'$ uniformly from all the  $\tbinom{N}{K}\times \tbinom{N}{K}$ full rank matrices over $\mathbb{F}_q$. Let $\delta_{T'}\triangleq\tbinom{N-T}{K-T'}$, where $1\leq T'\leq T$ (if $K-T'>N-T$ or $K-T'<0$, then $\delta_{T'}$ is set as $0$ by default). 
Label the rows of $S$ as $\left\{V_i: i\in \left[\tbinom{N}{K}\right]\right\}$. Divide the rows of $S'$ into $T$ parts  $\left\{Z_i:i\in [\delta_T]\right\}$, $\bigcup\limits_{1<T'<T}\left\{U_{T';1,j},\dots,U_{T';\binom{T}{T'},j}:j\in[\delta_{T'}]\right\}$, and $\left\{U_\ell:\ell\in \left[{N \choose K}-\sum\limits_{1<T'\leq T}\delta_{T'}\cdot\binom{T}{T'}\right]\right\}$.

The query sets $\mathcal{V}_1,\mathcal{V}_2,\dots,\mathcal{V}_N$ for the desired file are defined in the same way as before, i.e., each $\mathcal{V}_n$ contains $\tbinom{N-1}{K-1}$ row vectors from $S$, and every $K$ sets out of $\mathcal{V}_1,\mathcal{V}_2,\dots,\mathcal{V}_N$ share exactly one common row vector from $S$. Let $\Gamma_{n}$ be the linear space spanned by the query vectors in $\mathcal{V}_{n}$, $n\in [N]$. Then for any $\mathcal{T}=\{n_1,n_2,\dots,n_{T}\}\subseteq[N]$, it holds that $\mathrm{dim}(\bigcap_{t\in [T]}\Gamma_{n_t})=\delta_T$. For $1\leq T'<T$, any $\mathcal{T}'\subseteq \mathcal{T}$ with $|\mathcal{T}'|=T'$, it holds that $\mathrm{dim}\left(\bigcap_{i\in \mathcal{T}'}\Gamma_{i}\setminus\sum_{j\in\mathcal{T}\setminus\mathcal{T}'}\Gamma_{j}\right)=\delta_{T'}$. 
For disguising, these properties should be imitated in the queries regarding the undesired file. Towards this goal,
we define the query sets $\mathcal{U}_1,\mathcal{U}_2,\dots,\mathcal{U}_N$ for the undesired file as follows. For $n\in [1:N]$,
\[
  \mathcal{U}_{n}= \left\{U_{T';i,j}^{*} \middle| 1<T'<T, i\in \left[(n-1)\binom{T-1}{T'-1}+1:n\binom{T-1}{T'-1}\right], j\in [\delta_{T'}]\right\}\cup\{Z_{1} , \dots , Z_{\delta_T}\}\cup 
\{\widetilde{U}_{n,1}, \dots,\widetilde{U}_{n, \delta_1}\}. 
\]
Let $\Gamma'_{n}$ be the linear space spanned by the vectors in $\mathcal{U}_{n}$, $n\in [N]$. The roles of these vectors are as follows.
\begin{itemize}
    \item $\{Z_1,\dots,Z_{\delta_T}\}$ are directly picked from $S'$, accounting for the $\delta_T$-dimensional intersection for any $T$ spaces. 
    \item The vectors $\{\widetilde{U}_{n,j}:{n\in[N],j\in[\delta_1]}\}$ are defined by
\begin{eqnarray*}
\left(\begin{array}{ccc}\widetilde{U}_{1,1} & \cdots & \widetilde{U}_{1, \delta_1} \\ 
\widetilde{U}_{2,1} & \cdots & \widetilde{U}_{2, \delta_1} \\ 
\vdots &  \ddots& \vdots \\
\widetilde{U}_{N, 1} & \cdots & \widetilde{U}_{N, \delta_1 }\end{array}\right)
={\bf H}\left(\begin{array}{lll}U_{1} & \cdots & U_{\delta_1} \\
U_{\delta_1 +1} & \cdots & U_{2\delta_1}\\
\vdots &  \ddots& \vdots \\
U_{(T-1)\delta_1+1}& \cdots & U_{T\delta_1}
\end{array}\right),
\end{eqnarray*} 
where ${\bf H}^{\top}$ is the generator matrix of $\mathrm{GRS}_{T}(\boldsymbol{\alpha}, \bf{w})$ and this equation is well defined since it is routine to check that ${N \choose K}-\sum\limits_{1<T'\leq T}\delta_{T'}\cdot\binom{T}{T'}\geq T\delta_1$.
    \item For each $T'\in [2:T-1]$, the vectors $\left\{U_{T';i,j}^{*} \middle| i\in \left[1:\binom{T-1}{T'-1}N \right],j\in [\delta_{T'}]\right\}$ are used to guarantee that for any $\mathcal{T}=\{n_1,n_2,\dots,n_{T}\}\subseteq[N]$ and $\mathcal{T}'\subseteq \mathcal{T}$ with $|\mathcal{T}'|=T'$, it holds that $\mathrm{dim}\left(\bigcap_{i\in \mathcal{T}'}\Gamma_{i}\setminus\sum_{j\in\mathcal{T}\setminus\mathcal{T}'}\Gamma_{j}\right)=\delta_{T'}$. For any $j\in [\delta_{T'}]$, we define
\begin{eqnarray*}
    \left(U_{T';1,j}^* ,U_{T';2,j}^*, \dots, U_{T';\binom{T-1}{T'-1}N,j}^*\right)^{\top}={\bf H}_{T'}^*\left(\begin{array}{c}
     U_{T';1,j}  \\
     U_{T';2,j}\\
     \vdots\\
     U_{T';\binom{T}{T'},j}
\end{array}\right),
\end{eqnarray*}
where the matrix ${\bf H}_{T'}^*$ is of size $\binom{T-1}{T'-1}N\times \binom{T}{T'}$ satisfying the following conditions: 
\begin{itemize}
     \item[{P1:}] For any $\mathcal{T}'=\{n_1,n_2,\dots,n_{T'}\}\subseteq[N]$, 
     $\bigcap_{t\in[T']}\mathrm{span}\left\{U_{T';i,j}^* \middle| i\in\left[(n_t-1)\binom{T-1}{T'-1}+1:n_t\binom{T-1}{T'-1}\right] \right\}=\mathrm{span}\{\mathbf{u}_{\mathcal{T}';j}\}$ has dimension 1. 
     %and we denote this dimension as $U_{n_1,n_2;j}^*$.
    \item[{ P2:}] For any $\mathcal{T}=\{n_1,n_2,\dots,n_T\} \subseteq [N]$, $\{\mathbf{u}_{\mathcal{T}';j} |\mathcal{T}'\subseteq\mathcal{T},|\mathcal{T}'|=T'\}$ are linearly independent.
    %For any $\{n_1,n_2,n_3\} \subseteq [N]$, $\mathrm{span}\{U_{n_1,n_2;j}^*,U_{n_1,n_3;j}^*,U_{n_2,n_3;j}^*\}$ is exactly the same space as $\mathrm{span}\{U_{1,j},U_{2,j},U_{3,j}\}$. In other words,  $U_{n_1,n_2;j}^*$ does not belong to  $\mathrm{span}\{U_{2n_3-1,j}^*,U_{2n_3,j}^*\}$ for any $\{n_1,n_2,n_3\} \subseteq [N]$. Another equivalent way is to say that 
\end{itemize}    
\end{itemize}

The existence of the matrix ${\bf H}_{T'}^*$ is provided in the Lemma \ref{T,T'} via the tool of exterior products, when $q+1\geq N$.

By the constructions of the query sets $\mathcal{U}_n$'s for the undesired file, the query spaces $\Gamma'_n$'s have the same properties as the $\Gamma_n$'s, and thus the roles of the desired and undesired files are disguised. Let $W_m$ be the desired file and $W_{m^c}$ be the undesired file. Write each of the query sets $\mathcal{V}_n$'s and $\mathcal{U}_n$'s in the form of $(L/N)\times
\tbinom{N}{K}$ matrices, where each matrix $\mathbb{B}(\mathcal{V})$ is the reduced row echelon form of the matrix $\mathcal{V}$. For $n\in[N]$, let $Q_n^{[m]} =(Q_n^{[m]}(W_m), Q_n^{[m]}(W_{m^c}))
=(\mathbb{B} (\mathcal{V}_n),\mathbb{B} (\mathcal{U}_n))$. After receiving the queries, the $n^{\mathrm{th}}$ server applies the query vectors to its stored contents $(W_{mn},\, W_{m^c n})$, producing $L/N$ queried desired symbols $\mathbb{B} (\mathcal{V}_n)W_{mn}$ and $L/N$ queried undesired symbols $\mathbb{B} (\mathcal{U}_n)W_{m^cn}$. 

\subsubsection{The squeezing phase (Exploit the redundant symbols)}

The squeezing phase is almost the same as what we have presented for $T=3$ in the previous subsection. In particular, we also have to settle for a randomized combination strategy, allowing the scheme to have $\epsilon$-error. 

For $T\geq 3$, the redundancy among the the queried undesired symbols $\{\mathcal{U}_nW_{m^cn}:n\in[N]\}$ is calculated as follows. 
\begin{itemize}
    \item For each $i\in[\delta_T]$, there are $N-K$ redundant symbols among $\{Z_iW_{m^cn}:n\in[N]\}$. In other words, $\{Z_iW_{m^cn}:n\in[N]\}$ have dimension at most $\lambda_T\triangleq K$ for each $i\in[\delta_T]$.
    \item For each $j\in [\delta_1]$, due to the relations of $\bf G$ and $\bf H$ from Lemma \ref{GRS}, there are $N-\min\{K+T-1,N\}$ redundant symbols among $\{\widetilde{U}_{n, j}W_{m^cn}:n\in[N]\}$. In other words, for $j\in [\delta_1]$, $\{\widetilde{U}_{n, j}W_{m^cn}:n\in[N]\}$ have dimension at most $\lambda_1\triangleq\min\{K+T-1,N\}$.
    
    \item  Let $W_{m^c}=(\mathbf{x}_1,\mathbf{x}_2,\dots,\mathbf{x}_K)$. For every $T'\in[2:T-1]$ and $j\in [\delta_{T'}]$, we have
\begin{align*}
    \bigcup_{n\in [N]}\left\{U_{T';i,j}^{*}W_{m^cn} \middle| i\in \left[(n-1)\binom{T-1}{T'-1}+1:n\binom{T-1}{T'-1}\right]\right\}\subseteq \mathrm{span}\left\{U_{T';\ell,j}\mathbf{x}_k \middle|\ell\in\left[\binom{T}{T'}\right], k\in [K]\right\},
    \end{align*}
    having dimension at most $\lambda_{T'}\triangleq\min\left\{N\binom{T-1}{T'-1}, K\binom{T}{T'}\right\}$.
\end{itemize}  
Thus, the queried undesired symbols have dimension at most $I\triangleq \sum_{T'\in[T]}\lambda_{T'}\cdot\delta_{T'}$. Summing up the above, we have the following theorem.

\begin{theorem}
    For $(M,N,T,K)=(2,N,T,K)$, $T\geq 3 $, if the storage code is a GRS code, there exists a PIR scheme with rate $R=\frac{\binom{N}{K}K}{\binom{N}{K}K+\sum_{T'\in[T]}\lambda_{T'}\cdot\delta_{T'}}$ where $\delta_{T'}=\tbinom{N-T}{K-T'}$ for $T'\in [T]$, $\lambda_1=\min\{K+T-1,N\}$, and $\lambda_{T'}=\min\left\{N\binom{T-1}{T'-1}, K\binom{T}{T'}\right\}$ for $T'\in [2:T]$.
\end{theorem}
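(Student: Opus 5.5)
The proof runs along the same three lines as the $T=3$ theorem: privacy, ($\epsilon$-)correctness, and a count of the download. The rate then follows as $R=\frac{L}{L+I}$, with $L=\binom{N}{K}K$ and $I=\sum_{T'\in[T]}\lambda_{T'}\delta_{T'}$.

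\emph{Correctness.} Each $V_i$ lies in exactly $K$ of the spaces $\Gamma_n$, and the storage code is $(N,K)$-MDS, so the queried desired symbols $\{\mathbb{B}(\mathcal{V}_n)W_{mn}:n\in[N]\}$ determine $\{V_iW_m\}$. Since $S$ has full rank, they determine $W_m$. It therefore suffices that the downloaded undesired symbols span all queried undesired symbols. I will invoke the randomized combination strategy of \cite[Lemma 5]{sun2017private}: random full-rank $C_n$ over a large field, with $I_n$ chosen so that $\sum_n I_n=I$. This gives success probability tending to one.

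\emph{Download count.} I need the dimension bound $I$. The queried undesired symbols split into three families, and I treat each separately.
\begin{itemize}
\item The $Z_i$-symbols have dimension at most $K=\lambda_T$ per $i$, by the MDS property.
\item The $\widetilde{U}_{n,j}$-symbols are inner products of the form covered by Lemma~\ref{GRS}, with $\mathbf{H}^\top$ generating $\mathrm{GRS}_T(\boldsymbol{\alpha},\mathbf{w})$. So they lie in a $\mathrm{GRS}_{\min\{K+T-1,N\}}$ codeword, giving at most $\lambda_1$ per $j$.
\item For $2\le T'<T$, the $U^*_{T';i,j}$-symbols on all servers lie in $\mathrm{span}\{U_{T';\ell,j}\mathbf{x}_k\}$, of dimension $K\binom{T}{T'}$. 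They also number $N\binom{T-1}{T'-1}$, which gives $\lambda_{T'}$.
\end{itemize}
Summing over the families gives $I$. Substituting $\delta_{T'}=\binom{N-T}{K-T'}$ then gives the stated rate.

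\emph{Privacy.} This is the main obstacle. I must show that for every $T$-set $\mathcal{T}$ the joint distribution of $(\mathbb{B}(\Gamma_{n}))_{n\in\mathcal{T}}$ equals that of $(\mathbb{B}(\Gamma'_n))_{n\in\mathcal{T}}$. The plan is to show that both configurations are determined, up to a uniformly random change of basis of $\mathbb{F}_q^{\binom NK}$, by a common combinatorial type. That type consists of subspaces $\Gamma_n$ of dimension $\binom{N-1}{K-1}$ which admit a decomposition $\bigoplus_{\mathcal{T}'\subseteq\mathcal{T}}A_{\mathcal{T}'}$ with $\dim A_{\mathcal{T}'}=\delta_{|\mathcal{T}'|}$, where $\Gamma_n=\bigoplus_{\mathcal{T}'\ni n}A_{\mathcal{T}'}$.

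For the $\mathcal{V}$ side this holds by construction. The $V_i$ shared exactly by servers $\mathcal{T}'\cap\mathcal{T}$ are counted by $\binom{N-T}{K-|\mathcal{T}'|}$, and all $V_i$ are independent.

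For the $\mathcal{U}$ side I verify the same decomposition, using the following ingredients.
\begin{itemize}
\item The $Z$'s give $A_{\mathcal{T}}$.
\item Property P1 of Lemma~\ref{T,T'} gives one vector $\mathbf{u}_{\mathcal{T}';j}$ per $T'$-subset and per $j$.
\item Property P2 shows that these $\binom{T}{T'}$ vectors form a basis of $\mathrm{span}\{U_{T';\ell,j}\}$. Hence each server's $\binom{T-1}{T'-1}$-dimensional block is exactly the span of the $\mathbf{u}_{\mathcal{T}';j}$ with $n\in\mathcal{T}'$, by a dimension count.
\item The row-MDS property of $\mathbf{H}$ makes the $T$ vectors $\widetilde{U}_{n,j}$, $n\in\mathcal{T}$, independent. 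These supply the singletons $A_{\{n\}}$.
\end{itemize}

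Everything together is linearly independent because $S'$ has full rank. The uniform choice of $S'$ (respectively $S$) then makes the tuple a uniformly random configuration of this type, as the $GL$-action is transitive on such configurations. The reduced row echelon form depends only on the spaces, so the two query distributions coincide.
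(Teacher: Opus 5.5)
Your proposal is correct and follows the paper's own argument. It uses the same query construction and the same family-by-family dimension bound giving $I=\sum_{T'\in[T]}\lambda_{T'}\delta_{T'}$ (MDS for the $Z$'s, Lemma~\ref{GRS} for the $\widetilde{U}$'s, the $K\binom{T}{T'}$-dimensional span for the $U^*$'s), together with the same appeal to Sun and Jafar's randomized combination strategy for $\epsilon$-correctness and rate $\frac{L}{L+I}$. Your privacy argument differs only in care: you give an explicit direct-sum decomposition $\Gamma_n=\bigoplus_{\mathcal{T}'\ni n}A_{\mathcal{T}'}$ on both sides and invoke transitivity of the $GL$-action, which makes rigorous the paper's one-line assertion that the $\Gamma'_n$ share the intersection properties of the $\Gamma_n$.
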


\section{Conclusion}\label{Sec:con}

In this paper we propose a new MDS-TPIR scheme based on a disguise-and-squeeze approach. Our scheme provides more counterexamples for the FGHK conjecture for MDS-coded systems, and achieves linear capacity for GRS-coded systems when $(M,N,T,K)=(2,N,2,2)$ and $N\geq 4$. Moreover, our scheme features a dramatically smaller field size, could be adapted to various extended PIR models, and also could be generalized to $\epsilon$-error MDS-TPIR schemes for $T\geq 3$. We close this paper by listing some main open problems for future research.
\begin{itemize}
    \item  When $K\geq 3$, there always exist non-GRS MDS codes (for example, see~\cite{roth1989construction,beelen2017twisted} and related papers). A general problem is how to aim for maximum redundancy among queried undesired symbols during the squeezing phase for arbitrary non-GRS MDS-coded systems.
    \item Our scheme performs well for two files and also for the $P$-out-of-$M$ multi-file retrieval. Yet, for the single file retrieval with $M\geq 3$, we still have to figure out how to build a general disguise-and-squeeze scheme which could perform better than existing schemes.
    \item For $T\geq 3$, it is expected to analyze possible deterministic combination strategies, such that the $\epsilon$-error schemes can be turned into zero-error schemes.
\end{itemize}

\bibliographystyle{IEEEtran}
\bibliography{arxiv_PIR}
\end{document}